\documentclass[a4paper,11pt]{article}
\usepackage[utf8]{inputenc}
\usepackage[margin = 1in]{geometry}
\pdfoutput=1

\providecommand{\keywords}[1]
{\small\textbf{Keywords ---} #1}
\usepackage[nobysame]{amsrefs}

\usepackage[english]{babel}
\usepackage[T1]{fontenc}
\usepackage{graphicx}
\usepackage{color}
\usepackage{booktabs}
\usepackage{caption}
\usepackage{subcaption}
\usepackage{enumitem}
\usepackage{multirow}
\usepackage{float}
\usepackage[toc,page]{appendix}
\usepackage{xcolor}
\usepackage[automake,nonumberlist,nopostdot]{glossaries}
\usepackage{blindtext}
\usepackage{authblk}
\usepackage[ruled,vlined]{algorithm2e}
 \usepackage{placeins}
\usepackage{amsmath}
\usepackage{amsfonts}
\usepackage{amssymb}
\usepackage{amsthm}
\usepackage{bm}
\usepackage{bbm}
\usepackage{verbatim}
\usepackage{algorithmic}
\usepackage{todonotes}
\usepackage{hyperref}

\theoremstyle{plain}
 
\newtheorem{lemma}{Lemma}[section]

\usepackage{mathtools}
\def\sqrtexplained#1{%
	\begingroup
	\sbox0{$#1$}
	\def\underbrace##1_##2{##1}
	\sbox2{$#1$}
	\dimen0=\wd0 \advance\dimen0-\wd2
	\mathrlap{\sqrt{\phantom{\displaystyle#1}\kern\dimen0 }}
	\hphantom{\sqrt{\vphantom{\displaystyle#1}}}
	\endgroup
	#1}
%



\newcommand{\order}[1]{\mathcal{O}\left(#1\right)}
\DeclareMathOperator*{\expectation}{\mathbb{E}}
\newcommand{\expec}[1]{\expectation\left[#1\right]}
\newcommand{\mse}[1]{\text{MSE}\left(#1\right)}
\newcommand{\cost}[1]{\text{Cost}\left(#1\right)}

\newcommand{\norm}[1]{\left\|#1\right\|}

\DeclareMathOperator*{\esssup}{ess\,sup}

\DeclareMathOperator*{\argmin}{arg\,min}

\newcommand{\qoi}{\ensuremath{Q}}
\newcommand{\linf}{\mathrm{L}^{\infty}}
\newcommand{\ltwo}{\mathrm{L}^{2}}
\newcommand{\slinf}{l^{\infty}}
\newcommand{\sltwo}{l^{2}}

\newcommand{\measure}{\ensuremath{\mathbb{P}}}

\newcommand{\setR}{\ensuremath{\mathbb{R}}}
\newcommand{\setN}{\ensuremath{\mathbb{N}}}
\makeatletter
\newcommand{\abbrie}{i.e\@ifnextchar.{}{.\@}}
\newcommand{\abbreg}{e.g\@ifnextchar.{}{.\@}}
\newcommand{\abbrae}{a.e\@ifnextchar.{}{.\@}}
\newcommand{\abbrst}{s.t\@ifnextchar.{}{.\@}}
\makeatother
\newcommand{\interpm}[1]{\mathcal{S}_n^{(m)} \left(#1\right)}

 \newcommand{\interp}[1]{\mathcal{S}_n \left(#1\right)}

\newcommand{\estmlmc}{\hat{\mu}}

\newcommand{\thetab}{\bm\theta}
\newcommand{\Phiem}{\hat{\Phi}_L^{(m)}}
\newcommand{\Phieone}{\hat{\Phi}_L^{(1)}}
\newcommand{\Phietwo}{\hat{\Phi}_L^{(2)}}

\newcommand{\Phie}{\hat{\Phi}_L}

\newcommand{\Phim}{\Phi^{(m)}}
\newcommand{\Phione}{\Phi^{(1)}}

\newcommand{\quant}{q_{\tau}}
\newcommand{\quantest}{\hat{q}_{\tau}}
\newcommand{\cvar}{c_{\tau}}
\newcommand{\cvarest}{\hat{c}_{\tau}}
\newcommand{\stat}{s_{\tau}}
\newcommand{\statest}{\hat{s}_{\tau}}
\newcommand{\costsim}{\mathfrak{c}_l}
\newcommand{\costfunc}{\mathfrak{c}_{\phi}}
\newcommand{\costint}{\mathfrak{c}_{\text{int}}}
\newcommand{\err}{\text{e}}
\newcommand{\indicator}{\mathbbm{1}}
\newcommand{\errest}{\hat{\err}}
\newcommand{\errbound}{\bar{\err}}
\DeclareMathOperator\erf{\text{erf}}

\newtheorem{proposition}{Proposition}[section]

\SetKw{And}{and}
\SetKw{Or}{or}
\SetKw{Then}{then}
\SetKw{Elif}{elif}
\SetKw{Not}{not}
\SetKw{Update}{update}
\SetKw{Compute}{compute}
\SetKw{Estimate}{estimate}
\SetKwComment{Mycomment}{}{}
\SetCommentSty{textit}
\SetKwProg{Fn}{}{}{}
\SetKwSwitch{Switch}{Case}{Other}{switch}{do}{case}{otherwise}{endcase}{endsw}

\newacronym{qoi}{QoI}{Quantity of Interest}
\newacronym{pde}{PDE}{Partial Differential Equation}
\newacronym{kde}{KDE}{Kernel Density Estimation}
\newacronym{mlmc}{MLMC}{Multi-Level Monte Carlo}
\newacronym{mc}{MC}{Monte Carlo}
\newacronym{cmlmc}{CMLMC}{Continuation Multi-Level Monte Carlo}
\newacronym{mimc}{MIMC}{Multi-Index Monte Carlo}
\newacronym{mfmc}{MFMC}{Multi-Fidelity Monte Carlo}
\newacronym{mse}{MSE}{Mean Squared Error}
\newacronym{cdf}{CDF}{Cumulative Distribution Function}
\newacronym{pdf}{PDF}{Probability Density Function}
\newacronym{cvar}{CVaR}{Conditional-Value-at-Risk}
\newacronym{var}{VaR}{Value-at-Risk}
\newacronym{ouu}{OUU}{Optimization Under Uncertainty}
\newacronym{uq}{UQ}{Uncertainty Quantification}
\newacronym{sde}{SDE}{Stochastic Differential Equation}
\makeglossaries

\title{Quantifying uncertain system outputs via the multi-level Monte Carlo method --- distribution and robustness measures}

\author[1]{Quentin Ayoul-Guilmard\thanks{quentin.ayoul-guilmard@epfl.ch}}
\author[1]{Sundar Ganesh\thanks{sundar.ganesh@epfl.ch}}
\author[2]{Sebastian Krumscheid\thanks{krumscheid@uq.rwth-aachen.de}}
\author[1]{Fabio Nobile\thanks{fabio.nobile@epfl.ch, corresponding author}}
\affil[1]{\small{Institute of Mathematics, {\'E}cole Polytechnique F{\'e}d{\'e}rale de Lausanne, Switzerland}}
\affil[2]{\small{Department of Mathematics, Rheinisch-Westf\"alische Technische Hochschule, Aachen, Germany}}

\begin{document}
\maketitle
\vspace{-0.25in}

\begin{abstract}
In this work, we consider the problem of estimating the probability distribution, the quantile or the conditional expectation above the quantile, the so called \gls{cvar}, of output quantities of complex random differential models by the \gls{mlmc} method.
We follow the approach of \cite{krumscheid2017multilevel}, which recasts the estimation of the above quantities to the computation of suitable parametric expectations. 
In this work, we present novel computable error estimators for the estimation of such quantities, which are then used to optimally tune the \gls{mlmc} hierarchy in a continuation type adaptive algorithm. 
We demonstrate the efficiency and robustness of our adaptive continuation-\gls{mlmc} in an array of numerical test cases.
\end{abstract}

\noindent \keywords{Multilevel Monte Carlo Methods, Value-at-Risk, Conditional-Value-at-Risk, Uncertainty Quantification, Kernel Smoothing, Bootstrap Sampling.}

\section{Introduction}\label{sec:introduction}
Complex differential models are used in many disciplines across science and engineering as predictive or design tools.
More often than not, however, some input parameters of these models are uncertain, either due to missing information, lack of proper characterization or intrinsic variability.
It is hence of utmost interest to study and quantify the effects of these uncertainties on an output \gls{qoi} of the model, or several \glsplural{qoi}, which are in turn used for prediction or design.
When uncertainty is modeled as randomness in a probabilistic framework, each \gls{qoi} becomes a random variable and its distribution is often inaccessible in closed form.
We assume here however that the \gls{qoi} can be simulated in an approximate way, typically by sampling the random input parameters and computing the solution of a suitable discretisation of the underlying differential model.
It is therefore of great practical interest to estimate by simulation, and with controlled accuracy, the distribution of the \gls{qoi} or some summary statistics such as central moments of different orders, a quantile of a given significance, alternatively known as the \gls{var}, or super quantiles such as the so-called \gls{cvar}, which is often used as a risk-measure in stochastic optimisation problems applied to finance \cite{rockafellar2002conditional, uryasev2001conditional}.

Solving the underlying differential model at a desired accuracy typically has a high computational cost, even for a single realisation of the random input.
An accurate estimation of the summary statistics of a \gls{qoi} by a direct Monte Carlo approach is often prohibitively expensive.
As an alternative to Monte Carlo sampling, one may consider stochastic collocation or polynomial chaos techniques, which, for certain problems, feature a much faster convergence rate than Monte Carlo approaches \cite{ghanem2003stochastic, le2010spectral, babuvska2007stochastic, xiu2002wiener}.
For applications of polynomial chaos methods to the approximation of density/distribution functions or quantiles, we mention \cite{sagiv2021spectral}.
However, the effectiveness of these techniques is restricted to problems with a relatively small intrinsic dimensionality of the input uncertainty and high smoothness of the input-output map, which are features we do not necessarily assume in this work.


%

In this work, we focus instead on \gls{mlmc} methods since they perform well for high-dimensional input uncertainties in comparison to the methods described above.
\gls{mlmc} methods as introduced in the works \cite{giles2008multilevel} and \cite{heinrich2001multilevel} are a well established technology to compute the expected value of a \gls{qoi} $\qoi$ that is an output quantity of a stochastic differential model.
They are typically useful when one cannot sample directly from $\qoi$ but rather, when inexact sampling is possible from a set of approximations $\{\qoi_{l} \}_{l=0}^L$ to $\qoi$ on a sequence of $L+1$ discretisations with different characteristic discretisation parameters, for example induced by mesh sizes $h_0 > h_1 > ... > h_L$, typically a geometric sequence $h_{l-1}=s h_l$ with $s>1$.

Let $(\Omega,\mathcal{F},\measure)$ denote a complete probability space, $\omega \in \Omega$ an elementary random event and $\qoi:\Omega \to \setR$ a real valued \gls{qoi}.
The \gls{mlmc} estimator to estimate the expected value $\mu = \expec{\qoi}$ of the \gls{qoi} is given by
\begin{align}
\estmlmc &\coloneqq \frac{1}{N_0}\sum_{i=1}^{N_0} \qoi_{0}^{(i,0)} +\sum_{l=1}^L \frac{1}{N_l} \sum_{i=1}^{N_l} \left[\qoi_{l}^{(i,l)} - \qoi_{{l-1}}^{(i,l)} \right], \label{eq:mlmc_expec}
\end{align}
where $\qoi_{l}^{(i,l)}$ and $\qoi_{{l-1}}^{(i,l)}$ are correlated realisations of the \gls{qoi} computed  with the same underlying realisation of the input parameters on meshes with discretisation parameters $h_l$ and $h_{l-1}$ respectively and $\{N_l\}_{l=0}^{L}$ is a decreasing sequence of sample sizes.
Notice that the sum over the discretisation levels $l=0,...,L$ telescopes in expectation:
\begin{align}
\expec{\estmlmc} = \expec{\qoi_{0}} + \sum_{l=1}^L \expec{\qoi_{l} - \qoi_{{l-1}}} = \expec{\qoi_{L}} \approx \expec{\qoi},
\end{align}
and hence, the bias error $|\expec{\estmlmc - \qoi}| = |\expec{\qoi_L - \qoi}|$ of the \gls{mlmc} estimator depends only on the finest discretisation level considered.
It was shown in \cite{giles2008multilevel} that if the parameters $L$ and $\{N_l\}_{l=0}^L$, hereafter called the \gls{mlmc} ``hierarchy'', are chosen appropriately, the \gls{mlmc} estimator can in theory achieve a dramatic performance improvement over a simple Monte Carlo estimator in terms of computational cost versus accuracy.
From the practical perspective, finding good \gls{mlmc} hierarchies that realize the theoretical performance improvement requires sharp and reliable error estimators for the bias error and statistical error contributions of the estimator, which are due to discretisation and finite sampling respectively.
The parameters of the hierarchy can then be selected based on these error estimates to provide the best possible performance.
This selection process is commonly referred to as ``tuning''.
In the case of the simple estimator of the expected value of the \gls{qoi} introduced in Eq.~\eqref{eq:mlmc_expec}, these errors can be estimated by sample average and sample variance estimators using the same samples used to compute the \gls{mlmc} estimator $\estmlmc$ itself.
The reader is referred to \cite{giles2015multilevel, collier2014continuation, pisaroni2016continuation} for detailed descriptions of different \gls{mlmc} algorithms that estimate these errors and adaptively tune the parameters of the \gls{mlmc} hierarchy based on them.
It is important to note that appropriate mesh convergence is necessary to achieve the superior performance of \gls{mlmc} methods over Monte Carlo estimators.

The application of \gls{mlmc} methods to estimate statistics other than the expected value is less developed, as well as the problem of error estimation and adaptive tuning of the \gls{mlmc} hierarchy.
In our companion ``Part I'' paper \cite{Pisaroni_mlmcPart1_pre}, we have presented \gls{mlmc} estimators for the computation of higher order central moments and detailed practical algorithms to optimally tune the \gls{mlmc} hierarchy.
In this paper, we focus instead on the estimation of the \gls{cdf} and the \gls{pdf}, as well as the \gls{var} and the \gls{cvar} of a given significance $\tau$.
Particularly, we follow the approach proposed in \cite{krumscheid2017multilevel}, which consists of introducing suitable parametric expectations, and deriving the sought after statistics as a post-processing step.
Parametric expectations are expectations of the form
\begin{align}
\Phi(\theta) \coloneqq \expec{\phi(\theta,\qoi)}\label{eq:parest_form}.
\end{align}
In this work, we follow \cite{krumscheid2017multilevel} and use the following particular form for the function $\phi$:
\begin{align}
\phi(\theta, \qoi) &\coloneqq \theta + \frac{1}{1-\tau}(\qoi-\theta)^+, \hspace{1em} X^+ \coloneqq \max(0,X), \quad \theta \in \Theta \subset \setR,  \label{eq:phi_form}
\end{align}
where $\tau \in (0,1)$ denotes a significance parameter and $\Theta$ denotes a suitable interval of interest.
This form has the advantage that after estimating the function $\Phi$ and its derivatives
\begin{align}
\Phi^{(m)}(\theta) \coloneqq \frac{\partial^m}{\partial \theta^m}\expec{\phi(\theta,\qoi)},\quad m \in \setN,\label{eq:phi_derivatives}
\end{align}
the \gls{cdf} $F_{\qoi}(\theta) = \expec{\indicator_{\qoi \geq \theta}}$ and the \gls{pdf} $f_{\qoi}(\theta) = F_{\qoi}^{(1)}(\theta)$ over the interval $\Theta$, as well as the \gls{var} $\quant$ and the \gls{cvar} $\cvar$ of any significance $\tau$ for which $\quant \in \Theta$, can be obtained by simple post-processing:
\begin{equation}
\begin{aligned}
F_{\qoi}(\theta) &= \tau + (1-\tau) \Phi^{(1)}, & \qquad\quant &= \argmin_{\theta \in \Theta}\Phi(\theta),\\
 f_{\qoi}(\theta) &= (1-\tau) \Phi^{(2)}, &\qquad  \cvar &= \min_{\theta \in \Theta}\Phi(\theta) = \Phi(\quant).
\end{aligned}\label{eq:stat_defs}
\end{equation}
On the notation, we comment that $\Phi^{(0)} = \Phi$, and that $\indicator_{\qoi \geq \theta}$ denotes the characteristic function which takes on a value of $1$ in the interval denoted by the subscript and $0$ everywhere else.

We remark that the \gls{cdf} could also be estimated by direct \gls{mlmc} estimation of the expectations $F_{\qoi}(\theta) = \expec{\indicator_{\qoi \geq \theta}}$ for different values of $\theta$.
However, using \gls{mlmc} to estimate the expected value of a discontinuous function can lead to samples from the fine and coarse discretisation levels for the same random input being on either side of the discontinuity.
This can result in the problem no longer satisfying the conditions necessary for the superior performance of \gls{mlmc} over Monte Carlo methods \cite{avikainen2009irregular,giles2009analysing}.
The parametric expectation approach overcomes this problem, since the function $\phi$ in Eq.~\eqref{eq:phi_form} is Lipschitz continuous in $\qoi$ for all $\theta \in \Theta$.

We briefly review alternative approaches that have been proposed in literature to use \gls{mlmc} methods for estimating the distribution of a \gls{qoi}.
A \gls{mlmc} estimator for the \gls{cdf} was proposed and analysed in \cite{giles2015multilevelfunc} wherein a smoothened approximation to the characteristic function $\indicator_{\qoi \geq \theta}$ was used.
The \gls{mlmc} method was also used in \cite{giles2019multilevel} for nested conditional expectations from which the \gls{var} and \gls{cvar} could be derived.
An alternative smoothing of the characteristic function based on the \gls{kde} method was proposed in \cite{taverniers2020estimation}, combined with an \gls{mlmc} estimator wherein stratification based sampling was applied at each level.
The authors of \cite{bayer2020multilevel} combined an approach to locate the discontinuity using a root-finding algorithm, followed by numerical pre-integration.
One can also derive the \gls{var} and the \gls{cvar} from surrogate distributions derived from moments.
For example, in the works \cite{bierig2016approximation,gou2016estimating}, a maximum entropy approach was used to estimate the \gls{pdf} and the \gls{var} using moment estimates from \gls{mlmc} estimators.
The use of \gls{mlmc} estimators for parametric expectations is still an ongoing research area.
The work in \cite{krumscheid2017multilevel} built upon the ideas presented in \cite{giles2015multilevelfunc}, but generalises them further to approximate general parametric expectations.
Furthermore, novel \gls{mlmc} estimators for the characteristic function were presented based on the idea of pointwise estimation combined with interpolation.

The current work builds upon the theoretical work in \cite{krumscheid2017multilevel} and aims at deriving practical algorithms for the \gls{mlmc} estimators proposed therein.
This requires the derivation of reliable and possibly sharp error estimators that can be used to adaptively calibrate the hierarchy of the \gls{mlmc} estimators to achieve optimal performance, i.e., a computational complexity aligned with the theoretical predictions.
An error bound was already presented in \cite{krumscheid2017multilevel} based on the use of inverse inequalities.
However, this bound results in conservative error estimates that lead to \gls{mlmc} hierarchies that are impractically expensive to compute.
In this work, we propose novel error estimators which are much sharper than those reported previously and can be used for practical engineering purposes.
These estimators improve on the large leading constants while preserving their optimal theoretical decay rates as the discretisation is refined.
More precisely, we propose a bias error estimator based on a smoothened density as well as a statistical error estimator based on bootstrapping \cite{tibshirani1993introduction}. 
We then use our novel error estimators to design a continuation \gls{mlmc} algorithm that successively improves the hierarchy to meet a target tolerance with optimal performance. 
We show on three numerical tests, including an option pricing problem in finance and a laminar fluid dynamics problem, that our methodology does indeed feature a computational complexity aligned with the theoretical rates presented in \cite{krumscheid2017multilevel}.
Furthermore, we demonstrate that the methodology is robust in the sense that the true \gls{mse}, computed with respect to a reference solution, is always smaller than the prescribed tolerance.
We add that the novel \gls{mlmc} contributions of this work have been implemented in the Python package XMC, available at \cite{ExaQUte_XMC}.

The structure of this work is as follows.
In Section~\ref{sec:mlmc_par_est}, we present the \gls{mlmc} estimator for the parametric expectation $\Phi$ in Eq.~\eqref{eq:parest_form} and introduce a notion of the \gls{mse} for $\Phi$ and its derivatives.
We briefly recall the results of \cite{krumscheid2017multilevel} on error bounds for \gls{mlmc} estimators of parametric expectations and present a simplified complexity result for an optimally tuned \gls{mlmc} estimator.
We also detail the practical aspects of implementing such an error estimator.
In Section~\ref{sec:novel}, we describe novel error estimators that provide tighter bounds on the true error.
We compare the performance of these error estimators with the a priori ones presented in \cite{krumscheid2017multilevel} on a simple case for which theoretical results are known.
Section~\ref{sec:adaptivity} details an adaptive strategy for selecting the parameters of the hierarchy such that a given tolerance can be achieved on the \gls{mse} of the \gls{mlmc} estimator of $\Phi^{(m)},\; m \in \{0,1,2\}$, as well as on the \gls{mse} of \gls{mlmc} estimators of the \gls{var} and the \gls{cvar}.
In particular, Section~\ref{sec:var_cvar_error_bound} recalls a result from \cite{krumscheid2017multilevel} to relate the error on derived quantities such as the \gls{var} and the \gls{cvar} to the error on $\Phi$ and its derivatives.
The novel error estimator, the adaptive strategy and the performance of the \gls{mlmc} algorithm are demonstrated on an array of problems of increasing complexity in Section~\ref{sec:results}.
Finally, Section~\ref{sec:conclusion} offers a conclusion and a discussion on the presented work.

\section{Multi-Level Monte Carlo approximation of parametric expectations}\label{sec:mlmc_par_est}
As presented in Section~\ref{sec:introduction}, we focus in this paper on the problem of approximating parametric expectations of the form in Eqs.~\eqref{eq:phi_form} and~\eqref{eq:phi_derivatives} using the \gls{mlmc} method. 
The approach we follow is motivated by \cite{krumscheid2017multilevel,giles2015multilevelfunc}.
We approximate the parametric expectation $\Phi$ and its derivatives $\Phim$ on an interval $\Theta$ via the \gls{mlmc} method as follows:
We first consider a set of $n \in \setN$ nodes 
\begin{align}
\thetab \coloneqq \{\theta_1, \theta_2, ..., \theta_n\}, \quad \theta_j \in \Theta \subset \setR , \quad 1 \leq j \leq n, \quad \theta_j < \theta_{j+1},
\end{align}
such that $\Theta = [\theta_1, \theta_n]$.
The function $\Phi$ is then approximated pointwise at any point $\theta_j \in \Theta$ as
\begin{align}
\Phi(\theta_j) \approx \expec{\phi(\theta_j, \qoi_{L})} = \expec{\phi(\theta_j, \qoi_{0})} + \sum_{l=1}^L \expec{\phi(\theta_j, \qoi_{l})-\phi(\theta_j, \qoi_{{l-1}})},
\end{align}
where each expected value is estimated using a Monte Carlo estimator. We then define the MLMC estimator $\Phie(\theta_j)$ of $\Phi(\theta_j)$ as
\begin{align}
\Phie(\theta_j) &\coloneqq \frac{1}{N_0}\sum_{i=1}^{N_0} \phi(\theta_j, \qoi_{0}^{(i,0)}) +\sum_{l=1}^L \frac{1}{N_l} \sum_{i=1}^{N_l} \left[\phi(\theta_j, \qoi_{l}^{(i,l)}) - \phi(\theta_j, \qoi_{{l-1}}^{(i,l)}) \right]. \label{eq:mlmc_est_1}
\end{align}
It is important to note that the same set of random events is used to evaluate the estimator for all $\theta_j$.
Finally, we obtain a \gls{mlmc} estimator $\Phie$ of the whole function $\Phi:\Theta \to \setR$ by interpolating over the pointwise estimates as below:
\begin{align}
\Phie = \interp{\Phie(\thetab)},
\end{align}
where $\mathcal{S}_n$ denotes an appropriate interpolation operator and $\Phie(\thetab)$ denotes the set of pointwise \gls{mlmc} estimates in Eq.~\eqref{eq:mlmc_est_1}, that is:
\begin{align}
\Phie(\thetab) = \{\Phie(\theta_1), \Phie(\theta_2), \dots, \Phie(\theta_n)\}.
\end{align}
An estimate of the function derivative of order $m \in \setN$ denoted by $\Phiem$ is then obtained by computing the derivative of the resultant interpolated function:
\begin{align}
\Phiem &\coloneqq \interpm{ \Phie(\thetab) } \coloneqq \frac{\partial^m}{\partial \theta^m} \interp{ \Phie(\thetab) },\label{eq:mlmc_est_2}
\end{align}
provided that it exists.
Throughout this work, cubic spline interpolation with equally spaced interpolation points is used. 
Hence, we restrict ourselves to $m \in \{0,1,2\}$, although other interpolant operators and interpolation points can be used as well \cite{krumscheid2017multilevel}.

We use the following \gls{mse} criterion to quantify the accuracy of the function derivative estimate:
\begin{align}
\mse{\Phiem} &\coloneqq \expec{\norm{\Phim - \Phiem }^2_{\linf(\Theta)}}, \quad m \in \{0,1,2\}, \label{eq:phim_mse}
\end{align}
where the norm $\norm{f}_{\linf(\Theta)}$ of a function $f : \Theta \to \setR$ is defined as 
\begin{align}
\norm{f}_{\linf(\Theta)} \coloneqq \esssup_{\theta \in \Theta} |f(\theta)|.
\end{align}
By the triangle inequality, the \gls{mse} can be separated into three terms:
\begin{align}
\mse{\Phiem} &\leq 3\bigg\{ \underbrace{\norm{\Phim - \interpm{\Phi(\thetab)}}_{\linf(\Theta)}^2}_{\text{Squared interpolation error}} + \underbrace{\norm{\interpm{\Phi(\thetab) - \expec{ \Phie(\thetab)}}}^2_{\linf(\Theta)}}_{\text{Squared bias error}} \nonumber \\ 
&+ \underbrace{\expec{ \norm{\interpm{\Phie(\thetab)  - \expec{\Phie(\thetab)}} }^2_{\linf(\Theta)}}}_{\text{Squared statistical error}} \bigg\} \nonumber\\
&\eqqcolon 3 \left\{(\err^{(m)}_{i})^2 + (\err^{(m)}_{b})^2 + (\err^{(m)}_{s})^2\right\},\label{eq:error_split_three}
\end{align}
where we have used the notation $\err^{(m)}_{i}$, $\err^{(m)}_{b}$ and $\err^{(m)}_{s}$ for the interpolation, bias and statistical errors respectively. 

Both the computational cost and accuracy, and thus the complexity, of the \gls{mlmc} estimator are determined by three different sets of parameters; namely the number of interpolation points $n$, the level-wise sample size $N_l$ at each level $l$ and the number of levels $L$.
These should be chosen in a cost optimal way based on suitable a priori or a posteriori error estimates. 
In the next sections, we first review the a priori error estimates and the corresponding complexity analysis from \cite{krumscheid2017multilevel}, before presenting our new and refined error estimators in Section~\ref{sec:novel}.

\section{A priori error estimates on function derivatives and complexity analysis}\label{sec:apriori}
We review in this section the a priori estimators derived in \cite{krumscheid2017multilevel} for each of the error terms in the \gls{mse} bound presented in Eq.~\eqref{eq:error_split_three}.
We review as well the \gls{mlmc} method described therein to adaptively select the parameters of the hierarchy based on a simplified cost model, for which we also state the corresponding complexity result.
The main idea behind the error bounds introduced in \cite{krumscheid2017multilevel} is to exploit the properties of the particular form of the function $\phi$ given in Eq.~\eqref{eq:phi_form} in order to derive an upper bound for the \gls{mse} in Eq.~\eqref{eq:error_split_three}.
Since the function $\phi(\theta,\qoi)$ is uniformly Lipschitz continuous in $\qoi$ for all $\theta \in \Theta$, we have that
\begin{align}
\left|\phi(\theta, \qoi_{l})-\phi(\theta, \qoi_{{l-1}})\right| \leq C_{lip} \left|\qoi_{l}-\qoi_{{l-1}} \right| \qquad \forall \theta \in \Theta,
\end{align}
with finite Lipschitz constant $C_{lip} = 1/(1-\tau),\; \tau \in (0,1)$.
If one can control the decay rates of the expected value and variance of the difference $\qoi_{l}-\qoi_{{l-1}}$ with level $l$, then the corresponding statistics of the difference in the function $\phi$ evaluated at the two levels $\phi(\cdot, \qoi_l) -\phi(\cdot, \qoi_{l-1})$ decay as well with the same or better rates in the $\linf(\Theta)$-norm.
Consequently, complexity results analogous to those available for \gls{mlmc} estimators of the simple expectation of $\qoi$ can be obtained for $\Phie$.

In \cite{krumscheid2017multilevel}, inverse inequalities were used to relate the \gls{mse} of $\Phiem, m \geq 0$ in Eq.~\eqref{eq:phim_mse} to pointwise errors on $\Phie(\theta_j),\; j \in \{1,...,n\}$.
Particularizing the general a priori bound from \cite{krumscheid2017multilevel} to the case of cubic spline interpolation, we obtain:
\begin{subequations}
\begin{align}
\mse{\Phiem}&\leq 3 \left\{ (\errbound^{(m)}_i)^2 + (\errbound^{(m)}_b)^2 + (\errbound^{(m)}_s)^2 \right\}, \label{eq:mse:bound:m-deriv} \\
\text{where }\errbound^{(m)}_i &\coloneqq {C_1}(m) \norm{\Phi^{(4)}}_{\linf(\Theta)} \left(\frac{|\Theta|}{n}\right)^{(4-m)},  \label{eq:apriori_bound_i}\\
\errbound^{(m)}_b &\coloneqq C_2(m) C_3 (n-1)^{m}{b_L}, \label{eq:apriori_bound_b}\\
\errbound^{(m)}_s &\coloneqq  C_2(m) C_3 (n-1)^{m} \sqrt{ c(n)\sum_{l=0}^L \frac{V_l}{N_l}},\label{eq:apriori_bound_s} 
\end{align}
\end{subequations}
where $|\Theta|$ denotes the size of the domain $\Theta$.
Each of the three terms $\err_i^{(m)},\err_b^{(m)}$ and $\err_s^{(m)}$ in Eq.~\eqref{eq:error_split_three} are bounded respectively by the corresponding term $\errbound_i^{(m)},\errbound_b^{(m)}$ and $\errbound_s^{(m)}$ in Eq.~\eqref{eq:mse:bound:m-deriv} and the constants $C_1 (m)$, $C_2(m)$ and $C_3$ are related to the properties of the cubic spline interpolation operator and are detailed in Appendix~\ref{app:err_est} (together with some relevant properties of cubic splines).
The constant $c(n)$ in Eq.~\eqref{eq:apriori_bound_i} is introduced in \cite{linde1975mappings}, further detailed in \cite{giles2017adaptive} and reads:
\begin{align}
c(n)=2\pi\left(\ln(n+1) + \sqrt{8/\pi}\sum_{k=2}^{n+1}k^{-2}{\ln(k)}^{-1/2}\right).
\end{align}
We have also introduced the notation $b_l$ and $V_l$ for the level-wise biases and variances respectively, which are defined as
\begin{equation}
  b_l := \norm{ \Phi - \expec{\phi(\cdot,\qoi_{l})}}_{\slinf(\thetab)}\;,\quad\text{and}\quad
  V_l:= \expec{\norm{\phi(\cdot,\qoi_{l}) - \phi(\cdot,\qoi_{{l-1}})}_{\slinf(\thetab)}^2}\;, \label{eq:bl_vl}
\end{equation}
where the norm $\norm{\cdot}_{\slinf(\thetab)}$ is defined for a function $f\colon \Theta \to \setR$ evaluated at a set of points $\thetab \equiv \{\theta_1,...,\theta_n\}$ as follows:
\begin{align}
\norm{f}_{\slinf(\thetab)} \coloneqq \max_{1 \leq i \leq n} |f(\theta_i)|.
\end{align}
Note that $b_l$, $V_l$ and $\norm{\Phi^{(4)}}_{\linf(\Theta)}$ are usually not directly computable in practice.
However, it is possible to estimate them reliably from the \gls{mlmc} samples themselves.
This will be discussed later in this section. 

Using the a priori bounds derived in Eqs.~\eqref{eq:mse:bound:m-deriv}-\eqref{eq:apriori_bound_s}, we now describe how to select the optimal values $n^*$, $N_l^*$ and $L^*$ such that the \gls{mse} on the function derivative satisfies a tolerance $\epsilon^2$ split with positive weights $w_i, w_b$ and $w_s$ between the squared interpolation, bias and statistical error terms respectively. 
The weights are such that $w_i + w_b + w_s = 1$. 
We define the tolerances $\epsilon_i^2$, $\epsilon_b^2$ and $\epsilon_s^2$ as follows and require each of the terms in Eq.~\eqref{eq:mse:bound:m-deriv} to satisfy their respective tolerances:
\begin{align}
(\errbound^{(m)}_i)^2 \le \epsilon_{i}^2 \coloneqq \frac{w_i \epsilon^2}{3}, \quad \quad (\errbound^{(m)}_b)^2 \le \epsilon_{b}^2 \coloneqq \frac{w_b \epsilon^2}{3}, \quad \quad (\errbound^{(m)}_s)^2 \le \epsilon_{s}^2 \coloneqq \frac{w_s \epsilon^2}{3}, \label{eq:tol_split}
\end{align}

The interpolation error is controlled solely by the number of interpolation points, which is therefore selected first, namely as
\begin{align}
n^* &= \left\lceil \left[ \frac{C_1(m)\norm{\Phi^{(4)}}_{\linf(\Theta)}}{\epsilon_{i}} \right]^{\frac{1}{(4-m)}} |\Theta| \right\rceil,\label{eq:mlmc:tuning:nodes}
\end{align}
ensuring that the squared interpolation error is bounded by $\epsilon_{i}^2$ once $n$ is chosen as in Eq.~\eqref{eq:mlmc:tuning:nodes}.
Given $n^*$, the optimal number of levels $L^*$ is selected to be the smallest level such that the squared bias error satisfies a tolerance $\epsilon^2_{b}$; namely that $C_2(m) C_3 (n^*-1)^{m}b_{L^*} \leq \epsilon_{b}$, that is 
\begin{align}
L^* = \min\left\{K\in\setN_0 \colon b_K \le \frac{\epsilon_{b}}{C_2(m)C_3(n^*-1)^m}\right\}\;.\label{eq:mlmc:tuning:levels}
\end{align}
Lastly, with $n^*$ and $L^*$ fixed, the level-wise sample size $N_l$ at level $l$ is selected to minimise the cost of computing the \gls{mlmc} estimator
\begin{equation}
\cost{ \Phiem } \le \sum_{l=0}^{L^*} N_l(\costsim +n^* \costfunc)+ n^*\costint,
\end{equation}
subject to the following constraint on the squared statistical error: 
\begin{align}
C_2^2(m) C_3^2 {(n^*-1)}^{2m}c(n^*)\sum_{l=0}^{L^*} \frac{V_l}{N_l} \leq \epsilon_{s}^2.
\end{align}
Here, $\costsim$ is the cost of computing one realisation of the correlated pair of approximations $(\qoi_{l}, \qoi_{{l-1}})$ at level $l$, $\costfunc$ is the constant that bounds the cost of evaluating the function $\phi(\theta, \qoi)$ for any $(\theta, \qoi) \in \Theta \times \setR$ and $\costint$ is the cost per interpolation point of constructing the cubic spline interpolant on a uniform grid.
In \cite{krumscheid2017multilevel}, the level-wise sample sizes were selected under the assumption that $\costint$ and $\costfunc$ were non-zero.
However, for the applications addressed in this work, it was found that $\costint$ and $\costfunc$ are usually negligible in comparison to $\costsim$.
Hence, we select the level-wise sample sizes $\{N_l\}_{l=0}^{L^*}$ based on the simplified cost model
\begin{align}
\cost{ \Phiem } \approx \sum_{l=0}^{L^*} N_l \costsim. \label{eq:simple_cost_model}
\end{align}
Consequently, the level-wise sample sizes are selected similar to \cite{giles2008multilevel} as follows:
\begin{equation}
N_l^* = \left\lceil \frac{C_2^2(m) C_3^2c(n^*){(n^*-1)}^{2m}}{\epsilon_{s}^2}  \sqrt{\frac{V_l}{\costsim}}
\sum_{k=0}^{L^*} \sqrt{V_k\mathfrak{c}_k}\right\rceil,\quad 0\le l\le L^*.\label{eq:mlmc:tuning:samples}
\end{equation}
Below, we present a complexity result based on the simplified cost model in Eq.~\eqref{eq:simple_cost_model} using the a priori bounds in Eqs.~\eqref{eq:mse:bound:m-deriv}-\eqref{eq:apriori_bound_s}.
This result is a simplified version of the one presented in \cite{krumscheid2017multilevel} and is tailored to the use of cubic spline interpolation.
We give here the proof for completeness.

\begin{proposition}
\label{thm:complexity:MLMC:spline}
Suppose that there exist positive constants $\alpha$, $\beta$, and $\gamma$ such that $2\alpha \geq \min(\beta, \gamma)$ and that
\begin{enumerate}[label=(\roman*)]
\item $b_l$ decays exponentially with order $\alpha>0$ in $l$, in the sense that $b_l \le c_\alpha e^{-\alpha l}$ for some constant $c_\alpha>0$,\label{eq:bias_rates_phi}
\item $V_l$ decays exponentially with order $\beta>0$ in $l$, in the sense that $V_l \le c_\beta e^{-\beta l}$ for some constant $c_\beta>0$,\label{eq:var_rates_phi}
\item the cost to compute each i.i.d. realisation of $(\qoi_{l},\qoi_{{l-1}})$ increases exponentially with rate $\gamma>0$ in $l$, in the sense that $\costsim = \cost{\qoi_{l}, \qoi_{{l-1}}}\le c_\gamma e^{\gamma l}$ for some constant $c_\gamma$,\label{eq:cost_rates_phi}
\end{enumerate}
for all $l\in\setN_0$, when $h_{l-1} = s h_l$ for some $s>1$ and $m\in\{0,1,2\}$.
For any $0<\epsilon < e^{-1}$, the $m$-th derivative, of the \gls{mlmc} estimator $\Phie$ of $\Phi\in C^4(\Theta)$ with the number $n$ of (uniform) nodes chosen according to Eq.~\eqref{eq:mlmc:tuning:nodes}, the maximum number of levels $L$ as in Eq.~\eqref{eq:mlmc:tuning:levels}, and level-wise sample sizes $N_l$ given by Eq.~\eqref{eq:mlmc:tuning:samples}, satisfies $\mse{\Phiem} \le \epsilon^2$ at a computational cost that is bounded by
\begin{equation*}
\cost{\Phiem} \lesssim \log(\epsilon^{-1}) \epsilon^{-2-\frac{2m}{4-m	}}
    \begin{cases}
       1 , & \text{if } \beta>\gamma,\\
       \log(\epsilon^{-1})^2, & \text{if } \beta=\gamma, \\
       \epsilon^{\frac{\beta-\gamma}{\alpha}\frac{4}{4-m}}, & \text{if } \beta<\gamma. \\
     \end{cases}
\end{equation*}
\end{proposition}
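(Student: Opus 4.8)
The plan is to follow the classical Giles-type complexity argument, now threaded through the interpolation bound Eq.~\eqref{eq:mse:bound:m-deriv}. The accuracy claim $\mse{\Phiem}\le\epsilon^2$ requires essentially no work: the three tuning rules Eqs.~\eqref{eq:mlmc:tuning:nodes}, \eqref{eq:mlmc:tuning:levels} and \eqref{eq:mlmc:tuning:samples} are defined precisely so that $(\errbound_i^{(m)})^2\le\epsilon_i^2$, $(\errbound_b^{(m)})^2\le\epsilon_b^2$ and $(\errbound_s^{(m)})^2\le\epsilon_s^2$ respectively; for the statistical term one notes that the ceiling in Eq.~\eqref{eq:mlmc:tuning:samples} only decreases $\sum_l V_l/N_l$, so the underlying constraint is preserved. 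Since $\epsilon_i^2+\epsilon_b^2+\epsilon_s^2=(w_i+w_b+w_s)\epsilon^2/3=\epsilon^2/3$, the bound Eq.~\eqref{eq:mse:bound:m-deriv} gives $\mse{\Phiem}\le\epsilon^2$. All of the real work therefore lies in bounding $\cost{\Phiem}\approx\sum_{l=0}^{L^*}N_l^*\costsim$.

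First I would extract the $\epsilon$-scaling of the two structural parameters $n^*$ and $L^*$. From Eq.~\eqref{eq:mlmc:tuning:nodes} and $\epsilon_i\sim\epsilon$ one obtains $n^*\sim\epsilon^{-1/(4-m)}$, whence $(n^*-1)^{2m}\sim\epsilon^{-2m/(4-m)}$ and, since the tail $\sum_{k\ge2}k^{-2}\ln(k)^{-1/2}$ in the definition of $c(n)$ converges, $c(n^*)=\order{\log n^*}=\order{\log(\epsilon^{-1})}$. From the level rule Eq.~\eqref{eq:mlmc:tuning:levels} together with assumption~\ref{eq:bias_rates_phi}, the smallest admissible level satisfies $e^{\alpha L^*}\lesssim (n^*-1)^m/\epsilon_b\sim\epsilon^{-1-m/(4-m)}=\epsilon^{-4/(4-m)}$, so that $L^*=\order{\log(\epsilon^{-1})}$ and, crucially, $e^{\rho L^*}\lesssim\epsilon^{-4\rho/(\alpha(4-m))}$ for $\rho\in\{\beta,\gamma\}$.

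Next I would substitute Eq.~\eqref{eq:mlmc:tuning:samples} into the cost. Dropping the ceilings momentarily, the level sum collapses to the familiar \gls{mlmc} form
\begin{equation*}
\sum_{l=0}^{L^*}N_l^*\costsim\;\approx\;\frac{C_2^2(m)C_3^2\,c(n^*)(n^*-1)^{2m}}{\epsilon_s^2}\Bigl(\sum_{l=0}^{L^*}\sqrt{V_l\costsim}\Bigr)^{2}.
\end{equation*}
Using assumptions~\ref{eq:var_rates_phi} and~\ref{eq:cost_rates_phi}, $\sqrt{V_l\costsim}\lesssim e^{(\gamma-\beta)l/2}$, so the geometric sum splits into the three regimes: it is $\order{1}$ if $\beta>\gamma$, $\order{L^*}=\order{\log(\epsilon^{-1})}$ if $\beta=\gamma$, and $\order{e^{(\gamma-\beta)L^*/2}}$ if $\beta<\gamma$. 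Squaring and inserting the prefactor, which by the previous paragraph equals $\order{\log(\epsilon^{-1})\,\epsilon^{-2-2m/(4-m)}}$, reproduces the three cases of the claim; in the last case the extra factor $e^{(\gamma-\beta)L^*}\lesssim\epsilon^{4(\beta-\gamma)/(\alpha(4-m))}=\epsilon^{\frac{\beta-\gamma}{\alpha}\frac{4}{4-m}}$ is exactly the stated one.

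The main obstacle, as usual in these arguments, is controlling the ceilings in Eq.~\eqref{eq:mlmc:tuning:samples}. Each $\lceil\cdot\rceil$ adds at most one sample per level, contributing an extra $\sum_{l=0}^{L^*}\costsim\lesssim e^{\gamma L^*}\lesssim\epsilon^{-4\gamma/(\alpha(4-m))}$ to the cost, a term that dominates precisely when the finest-level sample size rounds up to one. I would verify that this contribution stays subdominant to the main bound in every regime, and this is exactly where the hypothesis $2\alpha\ge\min(\beta,\gamma)$ enters: comparing exponents, $e^{\gamma L^*}$ remains below the main bound if and only if $\gamma\le2\alpha$ when $\beta\ge\gamma$, and if and only if $\beta\le2\alpha$ when $\beta<\gamma$, i.e.\ if and only if $2\alpha\ge\min(\beta,\gamma)$. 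The smallness condition $\epsilon<e^{-1}$ is used only to guarantee $\log(\epsilon^{-1})\ge1$ and to make $n^*$ and $L^*$ large enough for the asymptotic constants to apply; collecting the regime-dependent factors then yields the stated complexity.
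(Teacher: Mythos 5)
Your proposal is correct and follows essentially the same route as the paper's own proof: extract $n^*\sim\epsilon^{-1/(4-m)}$ and $e^{\alpha L^*}\lesssim\epsilon^{-4/(4-m)}$ from the tuning rules, collapse the cost to the standard $\bigl(\sum_l\sqrt{V_l\costsim}\bigr)^2$ form with the $c(n^*)(n^*-1)^{2m}$ prefactor, split into the three regimes via the sign of $\gamma-\beta$, and account for the ceilings through the additive $\sum_l\costsim\lesssim e^{\gamma L^*}$ term whose subdominance is exactly the condition $2\alpha\ge\min(\beta,\gamma)$. Your explicit verification of the accuracy claim $\mse{\Phiem}\le\epsilon^2$ is a small addition the paper leaves implicit, but the argument is otherwise the same.
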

\begin{proof}
We begin by considering the choice of the number of interpolation points given by Eq.~\eqref{eq:mlmc:tuning:nodes}.
We have that $n = \order{\epsilon^{\frac{-1}{4-m}}}$.
In the light of hypothesis~\ref{eq:bias_rates_phi}, the optimal choice of $L$ is given by 
\begin{align}
L = \left\lceil \frac{1}{\alpha} \log\left[\frac{\sqrt{3} c_{\alpha} C_2(m)C_3 (n-1)^m}{\sqrt{w_b}\epsilon} \right]\right\rceil = \order{\log\left(\epsilon^{-\frac{4}{\alpha(4-m)}}\right)}.\label{eq:bias_tuning_model}
\end{align}
Using the expression for $N_l$ in Eq.~\eqref{eq:mlmc:tuning:samples} in the simplified cost model gives 
\begin{align}
\cost{\Phiem} &= \sum_{l=0}^L N_l \costsim
\leq \sum_{l=0}^L \costsim + \frac{3 C_2^2(m)C^2_3c(n){(n-1)}^{2m}}{w_s \epsilon^2}  \left[ \sum_{l=0}^L \sqrt{V_l \costsim} \right]^2,
\end{align}
where the first term is added to take into account the cost of computing at least one sample per level.
Using the hypothesis on the cost $\costsim$ at level $l$, it follows from Eq.~\eqref{eq:bias_tuning_model} that
\begin{align}
\sum_{l=0}^L \costsim \leq c_{\gamma}\sum_{l=0}^L e^{\gamma l} = c_{\gamma} \left[\frac{e^{\gamma L} - e^{-\gamma} }{1 - e^{-\gamma}}\right] = \order{\epsilon^{-\frac{\gamma}{\alpha}\frac{4}{4-m}}}.
\end{align}
In addition, we use the hypotheses on the variance $V_l$ at level $l$ to write
\begin{align}
\sum_{l=0}^L \sqrt{V_l \costsim} &= \sqrt{c_{\beta} c_{\gamma}}\sum_{l=0}^L e^{\left(\frac{\gamma-\beta}{2}\right)l} \\
&= \sqrt{c_{\beta} c_{\gamma}} \begin{cases}
\left[ \frac{e^{pL} - e^{-p} }{1-e^{-p}} \right], \text{ if } \beta \neq \gamma\\
(L+1), \text{ if } \beta = \gamma
\end{cases}
\end{align}
where $p=(\gamma-\beta)/2$.
In the event that $\beta > \gamma$, we have $p < 0$.
In combination with Eq.~\eqref{eq:bias_tuning_model}, we have that
\begin{align}
\left[ \frac{e^{pL} - e^{-p} }{1-e^{-p}} \right] \leq \frac{e^{-p}}{e^{-p}-1} = \order{1}.
\end{align}
In the event that $\beta < \gamma$, we have that $p > 0$ and hence that 
\begin{align}
\left[ \frac{e^{pL} - e^{-p} }{1-e^{-p}} \right] = \order{\epsilon^{\frac{\beta-\gamma}{2\alpha}\frac{4}{4-m}}}.
\end{align}
In summary, we can write that 
\begin{align}
\left[\sum_{l=0}^L \sqrt{V_l \costsim}\right]^2 = \begin{cases} \order{1}, &\text{ if } \beta > \gamma,\\
\order{\log(\epsilon^{-1})^2}, &\text{ if } \beta = \gamma,\\
\order{\epsilon^{\frac{\beta -\gamma}{\alpha}\frac{4}{4-m}}} &\text{ if } \beta < \gamma.\\
\end{cases}
\end{align}
As a final step, we note that $c(n)=\order{\log(n)} \equiv\order{\log(\epsilon^{-1})}$ and that $(n-1)^{2m} = \order{\epsilon^{\frac{-2m}{4-m}}}$.
Combining all the terms together, we have that 
\begin{align}
\cost{\Phiem} \lesssim \epsilon^{\frac{-\gamma}{\alpha} \frac{4}{4-m}}
    + \log(\epsilon^{-1}) \epsilon^{-2-\frac{2m}{4-m	}}
    \begin{cases}
       1 , & \text{if } \beta>\gamma,\\
       \log(\epsilon^{-1})^2, & \text{if } \beta=\gamma, \\
       \epsilon^{\frac{\beta-\gamma}{\alpha}\frac{4}{4-m}}, & \text{if } \beta<\gamma. \\
     \end{cases}
\end{align}

In addition, we require that $2\alpha \geq \min(\beta, \gamma)$ for the complexity to be dominated by the second term alone and not by the first term that quantifies the cost of a single simulation.
This can be seen by considering each of the following two cases.
In the first case $\beta \geq \gamma$, we have
\begin{align}
\frac{4 \gamma}{(4-m) \alpha} \leq \frac{8}{4-m} \iff 2\alpha \geq \gamma,
\end{align}
since $m\leq 2$. 
For the second case $\beta < \gamma$, we have that
\begin{align}
\frac{4 \gamma}{(4-m) \alpha} \leq \frac{8}{4-m} + \frac{4(\gamma-\beta)}{(4-m)\alpha}\iff 2\alpha \geq \beta.
\end{align}
This completes the proof. 
\end{proof}

\subsection{Practical aspects and tuning of hierarchy parameters}\label{sec:practical}
As pointed out earlier, the error bounds in Eqs.~\eqref{eq:mse:bound:m-deriv}-\eqref{eq:apriori_bound_s} are still not directly computable.
To this end, we present below a possible way to estimate the level-wise terms $b_l$ and $V_l$, as well as the term $\norm{\Phi^{(4)}}_{\linf(\Theta)}$ based on the available samples of the \gls{mlmc} estimator.
To estimate the level-wise bias terms $b_l$, we first note that with the help of Hypothesis~\ref{eq:bias_rates_phi} from Proposition~\ref{thm:complexity:MLMC:spline}, we have that 
\begin{align}
\lim_{l \to \infty} \expec{\phi(\cdot,\qoi_{l})} = \Phi \label{eq:bias_implication}
\end{align}
in $\slinf(\thetab)$ and hence, similar to the procedure in \cite{Giles2015a}, one can obtain the heuristic estimate
\begin{align}
b_l &= \norm{ \Phi - \expec{\phi(\cdot,\qoi_{l})}}_{\slinf(\thetab)}\\
&=\norm{ \sum_{k=l+1}^{\infty}\expec{\phi(\cdot,\qoi_{{k}})} - \expec{\phi(\cdot,\qoi_{{k-1}})}}_{\slinf(\thetab)}\\
&\approx \frac{ \norm{ \expec{\phi(\cdot,\qoi_{{l}}) - \phi(\cdot,\qoi_{{l-1}})}}_{\slinf(\thetab)}}{(e^{\alpha}-1)}. \label{eq:naive_bias}
\end{align}
The expectation in Eq.~\eqref{eq:naive_bias} is then estimated with a Monte Carlo estimator over the $N_l$ independent and identically distributed correlated sample pairs $\{ \qoi_{l}^{(i,l)},\qoi_{{l-1}}^{(i,l)} \}_{i=1}^{N_l}$, denoted by $\hat{b}_l$.
The variance term $V_l$ can also be computed by replacing the expectation in Eq.~\eqref{eq:bl_vl} with a similar sample average estimator $\hat{V}_l$, yielding the following:
\begin{align}
\hat{b}_l &\coloneqq  \frac{1}{N_l} \frac{ \norm{ \sum_{i=1}^{N_l} \phi(\cdot,\qoi_l^{(i,l)}) - \phi(\cdot,\qoi_{l-1}^{(i,l)})}_{\slinf(\thetab)}}{(e^{\alpha}-1)} ,\\
\hat{V}_l &\coloneqq \frac{1}{N_l} \sum_{i=1}^{N_l} \norm{\phi(\cdot,\qoi_l^{(i,l)}) - \phi(\cdot,\qoi_{l-1}^{(i,l)})}_{\slinf(\thetab)}^2 .\label{eq:vl_est}
\end{align}

To start the estimation procedure, one typically computes a small number of \gls{qoi} realisations on a pre-fixed small number of levels.
Such a small initial hierarchy is called a ``screening'' hierarchy.
The screening hierarchy is typically selected such that it is significantly smaller than the expected optimal hierarchy, so that the computational cost of the screening hierarchy is negligible in comparison to the optimal hierarchy.
Using the screening hierarchy, one can then obtain initial estimates of $\hat{b}_l$ and $\hat{V}_l$, as well as their decay rates in the levels $l$, based on which the optimal number of interpolation points $n^*$, number of levels $L^*$ and level-wise sample sizes $N_l^*$ can be selected for a prescribed tolerance $\epsilon^2$ according to Eqs.~\eqref{eq:mlmc:tuning:nodes}, \eqref{eq:mlmc:tuning:levels} and \eqref{eq:mlmc:tuning:samples}.
A \gls{mlmc} estimator can then be constructed with the estimated optimal hierarchy, upon which better estimates of $\hat{b}_l$, $\hat{V}_l$ and a better \gls{mlmc} estimator can be produced in an iterative manner.
Such an approach was pioneered in \cite{giles2008multilevel}.

To compute the optimal hierarchy $L^*$ and $N_l^*$ from Eqs.~\eqref{eq:mlmc:tuning:levels} and~\eqref{eq:mlmc:tuning:samples}, one may need values of $\hat{b}_l$ and $\hat{V}_l$ on levels $L < l \leq L^*$ beyond the current maximum level $L$ used, for which no samples are available.
To this end, we fit the theorized models $c_{\alpha} e^{-\alpha l}$ and $c_{\beta}e^{-\beta l}$ from Proposition~\ref{thm:complexity:MLMC:spline} to $\hat{b}_l$ and $\hat{V}_l$ respectively, for the levels where these estimates are available, using a least squares fit.
We then use the level-wise biases and variances predicted by these models instead of the actual estimates in Eqs.~\eqref{eq:mlmc:tuning:levels} and~\eqref{eq:mlmc:tuning:samples}. 

For computing the interpolation error bound in Eq.~\eqref{eq:apriori_bound_i}, as well as for computing the optimal number of interpolation points in Eq.~\eqref{eq:mlmc:tuning:nodes}, we are required to estimate the norm of the fourth derivative of the function $\Phi$.
The estimate of the fourth derivative cannot be computed directly from the interpolant as $\mathcal{S}^{(4)}_n(\hat{\Phi}_L(\thetab))$ since $\mathcal{S}_n$ is a cubic spline, hence $\interp{\Phie(\thetab)} \in C^2(\Theta)$ and the fourth derivative does not exist.
We propose instead the use of \gls{kde} techniques to solve this issue.
Such a \gls{kde} smoothing procedure is used extensively through this work and is described in detail in Section~\ref{sec:kde_est}.

The procedure to estimate $\norm{\Phi^{(4)}}_{\linf(\Theta)}$ is as follows.
We begin by selecting the level $\lceil L/2 \rceil$ from the hierarchy.
This level is selected since $N_{\lceil L/2 \rceil}$ is sufficiently large to justify the \gls{kde} approach but $\hat{\Phi}_{\lceil L/2 \rceil}$ is also expected to be sufficiently close to $\Phi$.
Although there may exist an optimal choice for this level, this particular choice was found to suffice for the applications in this study. 
A \gls{kde}-smoothened function estimate $\Upsilon_{\lceil L/2 \rceil}(\theta) \coloneqq \mathbb{E}^{kde}_{\lceil L/2 \rceil}\left[\phi(\theta,\cdot)\right]$ of the function $\hat{\Phi}_{\lceil L/2 \rceil}$ is produced according to the procedure described in Section~\ref{sec:kde_est}, where $\mathbb{E}^{kde}_l$ is defined in Eq.~\eqref{eq:kde_def} below.
The fourth derivative $\Upsilon^{(4)}_{\lceil L/2 \rceil}$ is then computed using a second order central difference approximation where $\Upsilon_{\lceil L/2 \rceil}$ is evaluated on a uniform grid on $\Theta$ with $n' \gg n$ points.
The norm is also approximated on the same grid:
\begin{align}
\norm{\Upsilon^{(4)}_{\lceil L/2 \rceil}}_{\linf(\Theta)} \approx \max_{i \in \{1,...,n'\}} \left|\Upsilon^{(4)}_{\lceil L/2 \rceil}(\theta_i)\right|. \label{eq:kde:norm_approx}
\end{align}
We summarize below the fully computable a priori error estimators:
\begin{subequations}
\begin{align}
\errbound^{(m)}_i \approx \errest^{(m)}_i &\coloneqq  C_1(m)  \norm{\Upsilon^{(4)}_{\lceil L/2 \rceil}}_{\linf(\Theta)} \left(\frac{|\Theta|}{n}\right)^{(4-m)},  \label{eq:apriori_i}\\
\errbound^{(m)}_b \approx \errest^{(m)}_b &\coloneqq C_2(m) C_3 (n-1)^{m} \hat{b}_L, \label{eq:apriori_b}\\
\errbound^{(m)}_s \approx \errest^{(m)}_s &\coloneqq  C_2(m) C_3 (n-1)^{m} \sqrt{ c(n)\sum_{l=0}^L \frac{\hat{V}_l}{N_l} }.\label{eq:apriori_s} 
\end{align} 
\end{subequations}
In Section~\ref{sec:est_tests}, we will compare the a priori error estimators described here to the newly developed error estimators introduced in Section~\ref{sec:novel}.
As will be seen in Section~\ref{sec:est_tests}, the a priori error estimators may prove to be too conservative and lead to hierarchies with large values of $L$ and $N_l$ when selecting these parameters to attain practical tolerances on the \gls{mse}.
These hierarchies become impractically expensive to simulate.
The main advantage of the a priori error estimators is that the bias and variance terms $\hat{b}_l$ and $\hat{V}_l$ computed in the manner described in this section decay exponentially in the levels with the same rate as the underlying \gls{qoi} $\qoi$ (see \cite{krumscheid2017multilevel}).
However, the inequalities used to achieve this favourable property produce large leading constants.
We will introduce new error estimators in Section~\ref{sec:novel} that preserve the exponential decay property, and consequently the complexity result in Proposition~\ref{thm:complexity:MLMC:spline}, while reducing or eliminating these leading constants.

\subsection{Function derivative estimation by \gls{kde} based smoothing}\label{sec:kde_est}
The error estimator Eq.~\eqref{eq:apriori_i} requires estimating the fourth derivative of the function $\Phi_l(\theta) = \expec{\phi(\theta,\qoi_l)}$.
For this, we could first estimate the expected value with a Monte Carlo sample average estimator.
As can be seen easily from Eq.~\eqref{eq:parest_form}, this estimate produces a piecewise linear function in $\theta$.
This in turn implies that the first derivative of such a function is piecewise constant, and that second and higher order derivatives do not exist.
Using an empirical direct Monte Carlo approach to estimating by Monte Carlo quantities such as $\norm{\Phi^{(4)}}_{\linf(\Theta)}$, which are important to the error estimation and to the procedure of adaptively selecting the hierarchy parameters, is hence not viable.

We propose the use of \gls{kde} techniques to remedy this issue.
The \gls{kde} procedure for constructing derivatives of $\Phi_l$ is presented here.
An appropriately smoothed probability density function $p^{kde}_l$ of $\qoi_l$ is constructed using a one dimensional Gaussian kernel centred on each of the $N_l$ fine samples $\{\qoi_{l}^{(i,l)}\}_{i=1}^{N_l}$ at level $l$.
The function $\Phi_l$ is then approximated as follows:
\begin{align}
\Phi_l(\theta) &= \int \phi(\theta, q)  p_l(q) d q \\
&\approx \int \phi(\theta, q) p^{kde}_l(q)  d q \eqqcolon \mathbb{E}^{kde}_{l}\left[\phi(\theta,\qoi_l)\right] \label{eq:kde_def}\\
\text{where } p^{kde}_l(q) &= \frac{1}{N_l} \sum_{i=1}^{N_l} K_{\delta_l}\left(q,\qoi_l^{(i,l)}\right) ,
\end{align}
$K_{\delta_l}(\cdot,\mu)$ denotes the Gaussian kernel with mean $\mu$ and bandwidth parameter $\delta_l > 0$.
The bandwidth parameter $\delta_l$ controls the ``width'' of the kernel and is related to the covariance of the underlying data.
It can in principle be a function of the level $l$ and the sample size $N_l$.
Here, it is chosen according to Scott's rule \cite{scott1979optimal}, which ensures that $\delta_l \to 0$ as $N_l \to \infty$.
Since the expressions for $K_{\delta}$ and $\phi$ are known, a closed form expression can be computed for $\mathbb{E}^{kde}_{l}\left[\phi(\theta,\qoi_l)\right]$, which is a $C^{\infty}$ function in $\theta$ due to the smoothness of the Gaussian kernel.
The smoothed expression can then be evaluated on a fine grid in $\Theta$ with $n' \gg n$ points and derivatives can be evaluated exactly, or more conveniently, estimated by finite different formulas.

The \gls{kde} procedure will also be used in the novel bias estimator proposed in Section~\ref{sec:novel_bias}.
The novel bias estimator requires, in particular, estimating the quantities $\norm{ \interpm{ \expec{\phi(\thetab,\qoi_l) - \phi(\thetab, \qoi_{l-1})}}}_{\linf(\Theta)}$.
For estimating such quantities, which require the computation of derivatives of $\Phi$, the \gls{kde} smoothing follows a similar procedure.
However, the density is now bivariate, namely characterising the distribution of the two correlated random variables $\qoi_l$ and $\qoi_{l-1}$:
\begin{align}
\expec{\phi(\theta,\qoi_l)-\phi(\theta,\qoi_{l-1})} &=\int\int\left[ \phi(\theta, q_l)-  \phi(\theta, q_{l-1})\right] p_{l,l-1}(q_l,q_{l-1}) d q_l d q_{l-1} \label{eq:est_diff}\\
&\approx \int\int\left[ \phi(\theta, q_l)-  \phi(\theta, q_{l-1})\right] p_{l,l-1}^{kde}(q_l,q_{l-1}) d q_l d q_{l-1}\\
&\eqqcolon \mathbb{E}^{kde}_{l,l-1}\left[\phi(\theta,\qoi_l)-\phi(\theta,\qoi_{l-1}) \right]\label{eq:novel_bias_def}\\
\text{where } p^{kde}_{l,l-1}(q_l,q_{l-1}) &= \frac{1}{N_l} \sum_{i=1}^{N_l} K_{\delta_l}\left(q_l,\qoi_l^{(i,l)}\right) K_{\delta_{l-1}}\left(q_{l-1},\qoi_{l-1}^{(i,l)}\right).
\end{align}
The bandwidth parameters $\delta_l$ and $\delta_{l-1}$ are chosen according to Scott's rule based on the sample sets $\{\qoi_{l}^{(i,l)}\}_{i=1}^{N_l}$ and $\{\qoi_{l-1}^{(i,l)}\}_{i=1}^{N_l}$, respectively.
One consequence of this method of bandwidth parameter selection is that the parameter $\delta_{l}$ will be larger on fine levels where $N_l$ is typically small.
Although the joint density $p_{l,l-1}$ will tend to concentrate around the diagonal as $l$ increases, the \gls{kde} density may include a significant off-diagonal mass for large values of $\delta_{l}$ and $\delta_{l-1}$.
This may induce a larger variance of the estimator in Eq.~\eqref{eq:novel_bias_def} with respect to naively estimating the expectation in Eq.~\eqref{eq:est_diff} using Monte Carlo.
The advantage of using Eq.~\eqref{eq:novel_bias_def}, over a Monte Carlo estimate of Eq.~\eqref{eq:est_diff}, is that one can differentiate the approximation in Eq.~\eqref{eq:novel_bias_def} with respect to $\theta$, since the resulting expression is smooth with respect to $\theta$.

One can also use anisotropic or more complex choices for the kernel, which may require numerical integration or special quadrature. 
However, for the purposes of this work, the isotropic Gaussian kernel was found to suffice. 
In addition to being able to compute higher order derivatives of function estimates, the \gls{kde} based smoothing approach also provides other important benefits to the error estimation and adaptivity that will be demonstrated in Section~\ref{sec:est_tests}.

\section{Novel error estimators for function derivatives} \label{sec:novel}
As will be demonstrated numerically in Section~\ref{sec:est_tests} below, the a priori interpolation error estimator $\errest_i^{(m)}$ in Eq.~\eqref{eq:apriori_i}, provides a satisfactory error bound in practice.
Moreover, the interpolation error is often much smaller than the bias and statistical error terms, at least for the cases explored in this work. 
As a result, we primarily target the accurate estimation of the bias and statistical error terms.
In fact, we propose here new estimators for these quantities that provide tighter bounds on the corresponding true errors, while also preserving the same decay rates with respect to $l$ as the a priori level-wise bias and variance contributions $b_l$ and $V_l$ and lead eventually to the complexity bound of Proposition~\ref{thm:complexity:MLMC:spline}.

\subsection{Bias term}\label{sec:novel_bias}
We begin with the bias term $\err^{(m)}_b$ in the expression for the \gls{mse} on $\Phiem$ in Eq.~\eqref{eq:error_split_three}.
The a priori bound $\errbound^{(m)}_b$ from Eq.~\eqref{eq:apriori_bound_b}, together with the hypotheses from Proposition~\ref{thm:complexity:MLMC:spline}, implies that $\err_b^{(m)}$ can be upper-bounded as follows,
\begin{align}
\err_b^{(m)} \coloneqq \norm{\interpm{\Phi(\thetab)-\expec{\Phie(\thetab)}}}_{\linf(\Theta)}\lesssim e^{- \alpha L},
\end{align}
with a hidden constant, possibly depending on the derivative order $m$ and the number of interpolation points $n$. 
Combining this with the implication from Eq.~\eqref{eq:bias_implication}, this in turn implies that the level-wise differences $\norm{\interpm{\expec{\hat{\Phi}_l(\thetab)-\hat{\Phi}_{l-1}(\thetab)}}}_{\linf(\Theta)}$ decay with at least a rate $\alpha$ in the levels $l$; then proceeding as in Eq.~\eqref{eq:naive_bias}, we can reasonably estimate the bias error as follows:
\begin{align}
\err_b^{(m)} \approx \frac{\norm{\interpm{\expec{\hat{\Phi}_L(\thetab)-\hat{\Phi}_{L-1}(\thetab)}}}_{\linf(\Theta)}}{(e^{\alpha}-1)}.
\end{align}
The expectation on the right-hand side could in practice be estimated using a sample average Monte Carlo estimator with the $N_L$ samples on the finest level $L$, for example:
\begin{align}
\expec{\hat{\Phi}_L-\hat{\Phi}_{L-1}} &= \expec{\phi(\cdot,\qoi_L)-\phi(\cdot,\qoi_{L-1})} \nonumber\\
&= \int \left[ \phi(\cdot,q_L) - \phi(\cdot,q_{L-1})\right] p_{L,L-1}(q_L, q_{L-1})dq_L dq_{L-1}\\
&\approx \frac{1}{N_L}\sum_{i=0}^{N_L} \phi(\cdot, \qoi_{L}^{(i,L)})-\phi(\cdot, \qoi_{{L-1}}^{(i,L)}),
\end{align}
where we denote the true joint probability density of the bivariate random variable $(\qoi_L ,\qoi_{L-1} )$ as $p_{L,L-1}$ and have replaced it with the empirical measure induced by the Monte Carlo estimator.
As was seen in Section~\ref{sec:kde_est}, this approximation causes issues when computing quantities that depend on derivatives of such a Monte Carlo estimator; namely that the first derivative is piecewise constant and that the second and higher derivatives do not exist for such an estimator.
This results in level-wise bias estimators that no longer satisfy the decay hypotheses of Proposition~\ref{thm:complexity:MLMC:spline}.

This problem was solved in the a priori estimator $\errest_b^{(m)}$ in Eq.~\eqref{eq:apriori_b} by using spline inverse inequalities as follows:
\begin{align}
\norm{\interpm{\expec{\hat{\Phi}_L(\thetab)-\hat{\Phi}_{L-1}(\thetab)}}}_{\linf(\Theta)} \leq C_2(m)C_3(n-1)^m \norm{\expec{\hat{\Phi}_L-\hat{\Phi}_{L-1}}}_{\linf(\Theta)}.
\end{align}
However, this procedure leads to unacceptably large constants. 
Here we propose a new estimator that avoids using inverse inequalities and, instead, directly estimates the term $\interpm{\expec{\hat{\Phi}_l-\hat{\Phi}_{l-1}}}$ using the \gls{kde} technique described in Section~\ref{sec:kde_est} to smooth the empirical measure; that is, approximating the unknown joint density $p_{L,L-1}$ by a bivariate \gls{kde} smoothed density $p^{kde}_{L,L-1}$ as described in Section~\ref{sec:kde_est}.

The resultant novel estimator for $\err_b^{(m)}$ is hence given by 
\begin{align}
\err_{b}^{(m)} \approx \errest_{b,new}^{(m)} \coloneqq \frac{\norm{\interpm{\mathbb{E}^{kde}_{L,L-1}\left[\phi(\thetab, \qoi_L)-\phi(\thetab, \qoi_{L-1})\right]}}_{\linf(\Theta)}}{(e^{\alpha}-1)} \eqqcolon \frac{\hat{b}_{L,new}^{(m)}}{(e^{\alpha}-1)},\label{eq:blm_def}
\end{align}
where we have defined the level-wise bias terms $\hat{b}_{l,new}^{(m)}, l \in \{0,...,L\}$.
In practice, the decay rate $\alpha$ is estimated by fitting the model $c_{\alpha}e^{-l\alpha}$ by least squares fit on the estimates $\hat{b}_{L,new}^{(m)}$ for $l \in \{ 1,..,L\}$.

\subsection{Statistical error term}\label{sec:stat_err_est}
The squared statistical error term in Eq.~\eqref{eq:error_split_three} has the form
\begin{align}
(\err^{(m)}_{s})^2 = \expec{ \norm{\interpm{ \Phie(\thetab)  - \expec{\Phie(\thetab)}}}^2_{\linf(\Theta)} }.\label{eq:stat_err}
\end{align}
The a priori bound described in Section~\ref{sec:apriori} for the statistical error also suffers from a possibly large leading constant that results in conservative statistical error estimates, as will be highlighted below.
As an alternative, we propose the use of a bootstrapping technique \cite{tibshirani1993introduction} to estimate this term as follows.
First, observe that a \gls{mlmc} estimator $\Phiem$ of $\Phim$ is defined through the hierarchy of samples denoted by 
\begin{align}
\mathcal{Q} \equiv \Big\{ \{\qoi_l^{(i,l)},\qoi_{l-1}^{(i,l)}\}_{i=1}^{N_l} \Big\}_{l=0}^{L}.
\end{align}
The idea behind bootstrapping is to create $N_{bs} \in \setN$ new \gls{mlmc} estimators of $\Phi$ denoted $\hat{\Psi}_1, \hat{\Psi}_2,...,\hat{\Psi}_{N_{bs}}$, each defined by a hierarchy of samples of the same size as the original hierarchy $\mathcal{Q}$.
For each $\hat{\Psi}_j$, this is done by randomly selecting $N_l$ sample pairs $(\tilde{\qoi}_l^{(j;i,l)},\tilde{\qoi}_{l-1}^{(j;i,l)}) = (\qoi_l^{(M_{ij},l)},\qoi_{l-1}^{(M_{ij},l)})$, $i=\{1,...,N_l\}$ with $M_{1j},...,M_{N_lj} \overset{\text{i.i.d}}{\sim} \mathcal{U}(\{1,...,N_l\})$ and $j=\{1,...,N_{bs}\}$ at each level $l$ to define a resampled hierarchy $\mathcal{Q}_j$ by:
\begin{align*}
\mathcal{Q}_j \equiv \Big\{ \{\tilde{\qoi}_l^{(j;i,l)},\tilde{\qoi}_{l-1}^{(j;i,l)}\}_{i=1}^{N_l} \Big\}_{l=0}^{L},  \quad j \in \{1,...,N_{bs}\}.
\end{align*}
The bootstrapped \gls{mlmc} estimate $\hat{\Psi}_j$ defined through $\mathcal{Q}_j$ then also provides an estimator of $\Phim$.
Using the sample of $N_{bs}$ bootstrapped \gls{mlmc} estimators, one can approximate the expectations in Eq.~\eqref{eq:stat_err} by sample averages over the bootstrapped \gls{mlmc} estimators.
That is, the statistical error $\err^{(m)}_{s}$ can be estimated by the bootstrapped estimate $\errest^{(m)}_{s,new}$ as
\begin{align}
(\err^{(m)}_{s})^2 \approx (\errest^{(m)}_{s,new})^2 \coloneqq \frac{1}{N_{bs}}\sum_{j=1}^{N_{bs}} \|\interpm{\hat{\Psi}_j(\thetab) - \bar{\Psi}(\thetab)}\|_{\linf(\Theta)}^2,\label{eq:se_bs}
\end{align}
where $\bar{\Psi}$ denotes the sample average of $\{\hat{\Psi}_j\}_{j=1}^{N_{bs}}$.

The choice of $N_{bs}$ is made adaptively.
First, it is set to an initial fixed value.
Then, since Eq.~\eqref{eq:se_bs} is a Monte Carlo estimator, the sample variance of the $\linf(\Theta)$-norms is used to estimate the \gls{mse} of the statistical error estimate.
If this \gls{mse} exceeds a fixed fraction of the statistical error tolerance $\epsilon_{s}^2$, the number of bootstrapped samples $N_{bs}$ is doubled, and the process is repeated until the tolerance is satisfied. 
In addition, since the cost of bootstrapping and interpolating is negligible in comparison to sample generation costs, $N_{bs}$ can be arbitrarily large without a significant additional cost to computing the \gls{mlmc} estimator $\Phiem$ itself. 

\subsection{Summary of novel error estimator}
To summarise the developments above, we have proposed novel bias and statistical error estimators to improve on the properties of the a priori error estimator demonstrated in Section~\ref{sec:apriori}.
The final estimator reads:
\begin{align}
\mse{\Phiem} &\leq  3{C_1^2}(m) \norm{\Upsilon^{(4)}_{\lceil L/2 \rceil}}_{\linf(\Theta)}^2 \left(\frac{|\Theta|}{n}\right)^{2(4-m)} \nonumber \\
&+ 3 \frac{(\hat{b}_{L,new}^{(m)})^2}{(e^\alpha-1)^2} + \frac{3}{N_{bs}}\sum_{j=1}^{N_{bs}} \|\interpm{\hat{\Psi}_j(\thetab) - \bar{\Psi}(\thetab)}\|_{\linf(\Theta)}^2. \label{eq:novel_mse_all_terms}
\end{align}
In fact, we will show in the following section that the new error estimator preserves decay rates of the underlying \gls{qoi} while reducing or eliminating large leading constants and leading to a tighter error bound.

\subsection{Demonstration and comparison of error estimators}\label{sec:est_tests}
To demonstrate the performance of the error estimators introduced in the previous sections, we introduce a simple toy problem.
Specifically, we consider a random Poisson equation in two spatial dimensions,
\begin{equation}
-\Delta u = f\;,\quad\text{in }D= {(0,1)}^2\;,\label{eq:ellip2d:prob_b}
\end{equation}
with homogeneous Dirichlet boundary conditions.
The forcing term $f$ is given by
\begin{equation}
	f(x) = -C \xi({x_1}^2 + {x_2}^2 - x_1 - x_2)\;,\quad 0 \leq x_1, x_2 \leq 1\;,
\end{equation}
with $\xi$ being a random variable distributed according to the Beta$(2,6)$ distribution and $C>0$ a positive constant. 
This problem was also used as a demonstrative example in the companion paper \cite{Pisaroni_mlmcPart1_pre} of this work, in order to demonstrate \gls{mlmc} estimators for higher order central moments.
For this forcing term, the solution to the \gls{pde} can be computed explicitly and reads 
\begin{align}
u(x_1,x_2) = C \xi x_1 x_2 (1 - x_1) (1 - x_2)/2.
\end{align}
The \gls{qoi} we consider is the spatial average of the solution, that is
\begin{equation}
\qoi := \int_D u\,dx = \frac{C}{72}\xi\;.
\end{equation}
For the remainder of the study, we set $C=432$, leading to $\qoi = 6 \xi$.

Since we have the explicit dependence of the \gls{qoi} $\qoi$ on the random input $\xi$, we can easily compute the exact distribution of $\qoi$ given that we know the distribution of $\xi$.
In particular, we can compute $\Phi(\theta) = \expec{\phi(\theta,\qoi)}$ with $\phi$ as in Eq.~\eqref{eq:phi_form} exactly for any $\tau \in (0,1)$.
Indeed, the density $p_{\eta}$ of a random variable $\eta = \kappa \xi$ with $\kappa > 0$ reads:
\begin{align}
p_{\eta}(x) = \frac{42}{\kappa} \left(1-\frac{x}{\kappa}\right)^5 \frac{x}{\kappa},\quad x \in [0,\kappa].
\end{align}
Setting $\kappa = 6$ leads to the following form for $\Phi$ based on the density $p_{\qoi}$ of $\qoi =6 \xi$:
\begin{align}
\Phi(\theta) &= \theta + \frac{1}{1-\tau} \int_{0}^{6} (q - \theta)^+ p_{\qoi}(q) dq\nonumber\\
&= \theta - \frac{(\theta-6)^7(\theta+2)}{373248(1-\tau)}. \label{eq:true_phi}
\end{align}
We plot the true \gls{cdf} $F_{\qoi}(\theta) \coloneqq \mathbb{P}(\qoi \leq \theta)$ in Fig.~\ref{fig:cdf_ellipt}. 
We also list the true values of the \gls{var} $\quant$ and the \gls{cvar} $\cvar$ for different significances $\tau$ in Table~\ref{tab:ellip2d:quantiles}.

\begin{figure}[H]
	\begin{minipage}[b]{0.35\textwidth}
		\centering
		\includegraphics[width=\textwidth]{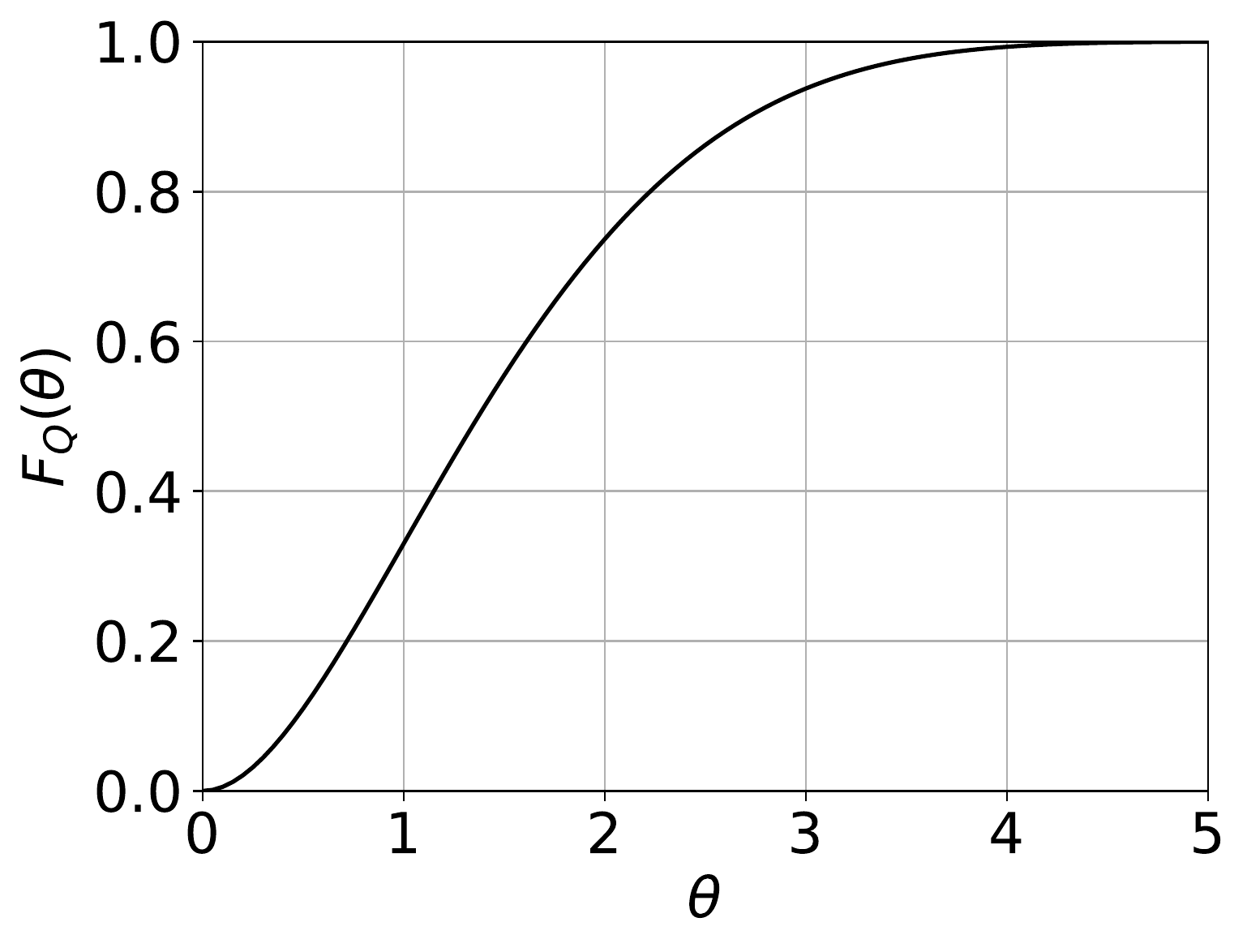}
		\captionof{figure}{\gls{cdf} $F_{\qoi}$ of $\qoi$ for the Poisson problem.}
		\label{fig:cdf_ellipt}
	\end{minipage}
	\hfill
	\begin{minipage}[b]{0.55\textwidth}
		\centering
		\begin{tabular}{ccc}  
			\toprule
			$\tau$ & $\quant = F_\qoi^{-1}(\tau)$ & $\cvar$\\
			\midrule
			$0.6$ & $1.611077$ & $2.369803$\\
			$0.7$ & $1.885696$ & $2.578204$\\
			$0.8$ & $2.225169$ & $2.843327$\\
			$0.9$ & $2.715390$ & $3.236473$\\
			\bottomrule
		\end{tabular}
		\captionof{table}{\gls{var} and \gls{cvar} values for the \gls{qoi} associated with Poisson problem.}
		\label{tab:ellip2d:quantiles}
	\end{minipage}
\end{figure}

For the numerical assessment of the error estimators, the Poisson problem in Eq.~\eqref{eq:ellip2d:prob_b} is discretised using second order central finite differences on a hierarchy of uniform meshes, where the number of degrees of freedom at level $l$ is given by $(5 \times 2^l -2)^2$.
The resultant system is solved directly using sparse LU factorisation.
The approximations $\{\qoi_{l}\}_{l=0}^L$ are also linear in the random variable $\xi$ and hence, the solution for each discretisation can be precomputed for a fixed value $\xi=1$ and simply multiplied afterwards by the random realization of $\xi$ to yield the random \gls{qoi}.
In addition, this implies that the ``true'' function $\Phi_l$ is also known in closed-form for all levels.

\subsubsection{Assessment of the interpolation error estimator}\label{sec:interp_error_comp}
We compute the true interpolation error by considering the true function $\Phi$ presented in Eq.~\eqref{eq:true_phi} for $\tau = 0.7$.
The interval of interest is selected to be $\Theta \equiv [1.5,2.5]$, since we expect the $70\%$-\gls{var} to be within this interval (cf. Table~\ref{tab:ellip2d:quantiles}).
The true interpolation error $e_{i, tru}^{(m)}$ of the $m^{th}$ derivative of $\Phi$ is given by
\begin{align*}
\err^{(m)}_{i,tru} = \norm{\interpm{\Phi(\thetab)}-\Phim}_{\linf(\Theta)}.
\end{align*}
We compare this with the fully a priori error estimate $\hat{e}^{(m)}_{i}$, introduced in Eq.~\eqref{eq:apriori_i}.
Instead of the \gls{kde} method described in Section~\ref{sec:apriori}, the norm of the fourth derivative of $\Phi$ is estimated using the analytical form of $\Phi^{(4)}$ evaluated on a fine grid, in order to focus solely on the quality of the error estimator.

Fig.~\ref{fig:interp_error} compares the estimator $\hat{\err}_i^{(m)}$ with the true error for different values of the number of interpolation points $n$ and for different derivatives $\Phim$ for $m \in \{0,1,2\}$.
As can be seen from plots in that figure, $\hat{\err}_i^{(m)}$ produces a satisfactory bound on the true error for the range of interpolation points tested, and for all value of $m \in \{0,1,2\}$.
The figure also shows that both the true error and the error estimate $\errest^{(m)}_i$ follow the expected decay, which is $\mathcal{O}(n^{4-m})$. 

\begin{figure}[h]
  \begin{subfigure}{0.32\textwidth}
    \includegraphics[width=\textwidth]{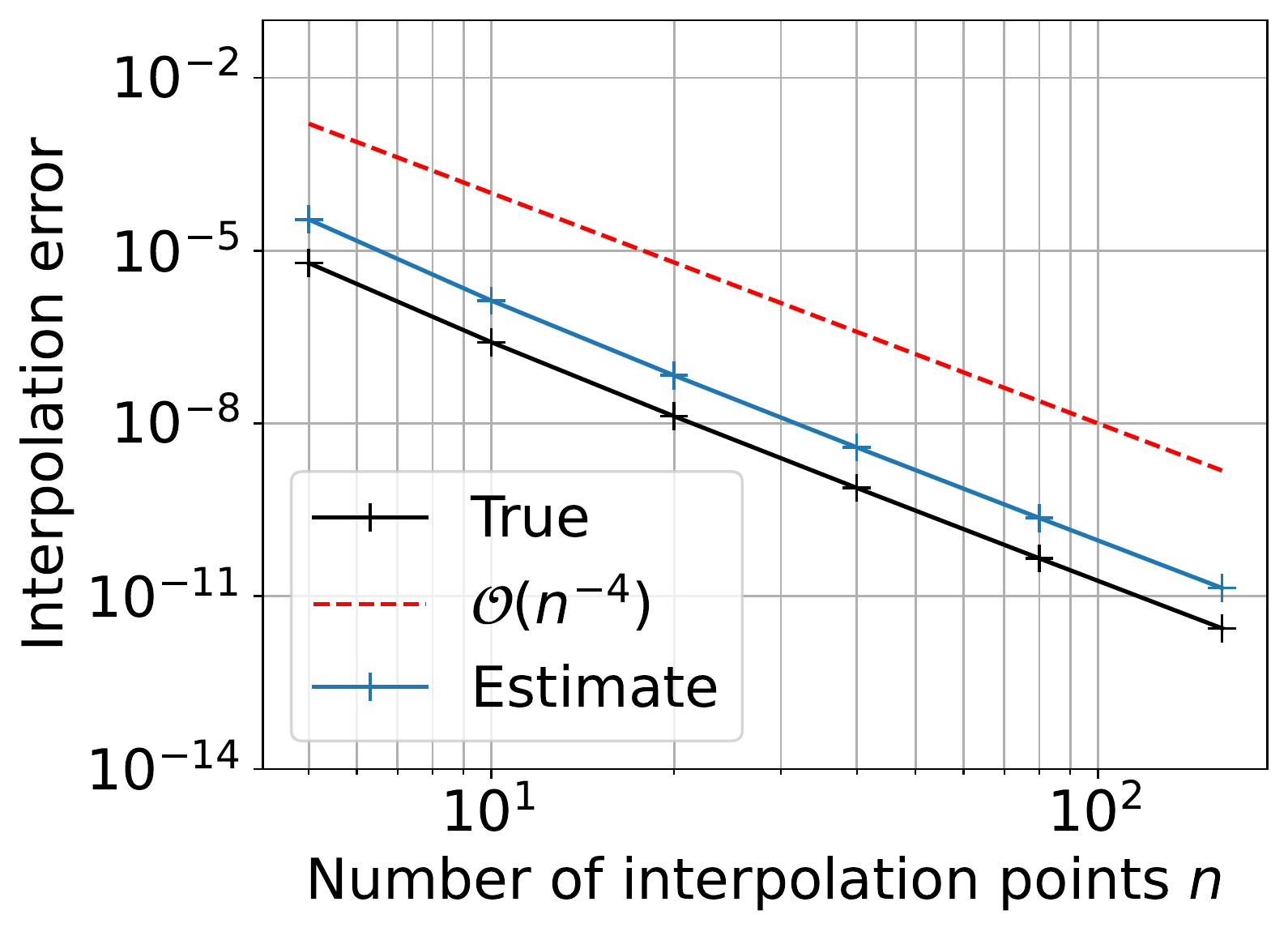}
    \caption{$m=0$}
  \end{subfigure}
    \begin{subfigure}{0.32\textwidth}
    \includegraphics[width=\textwidth]{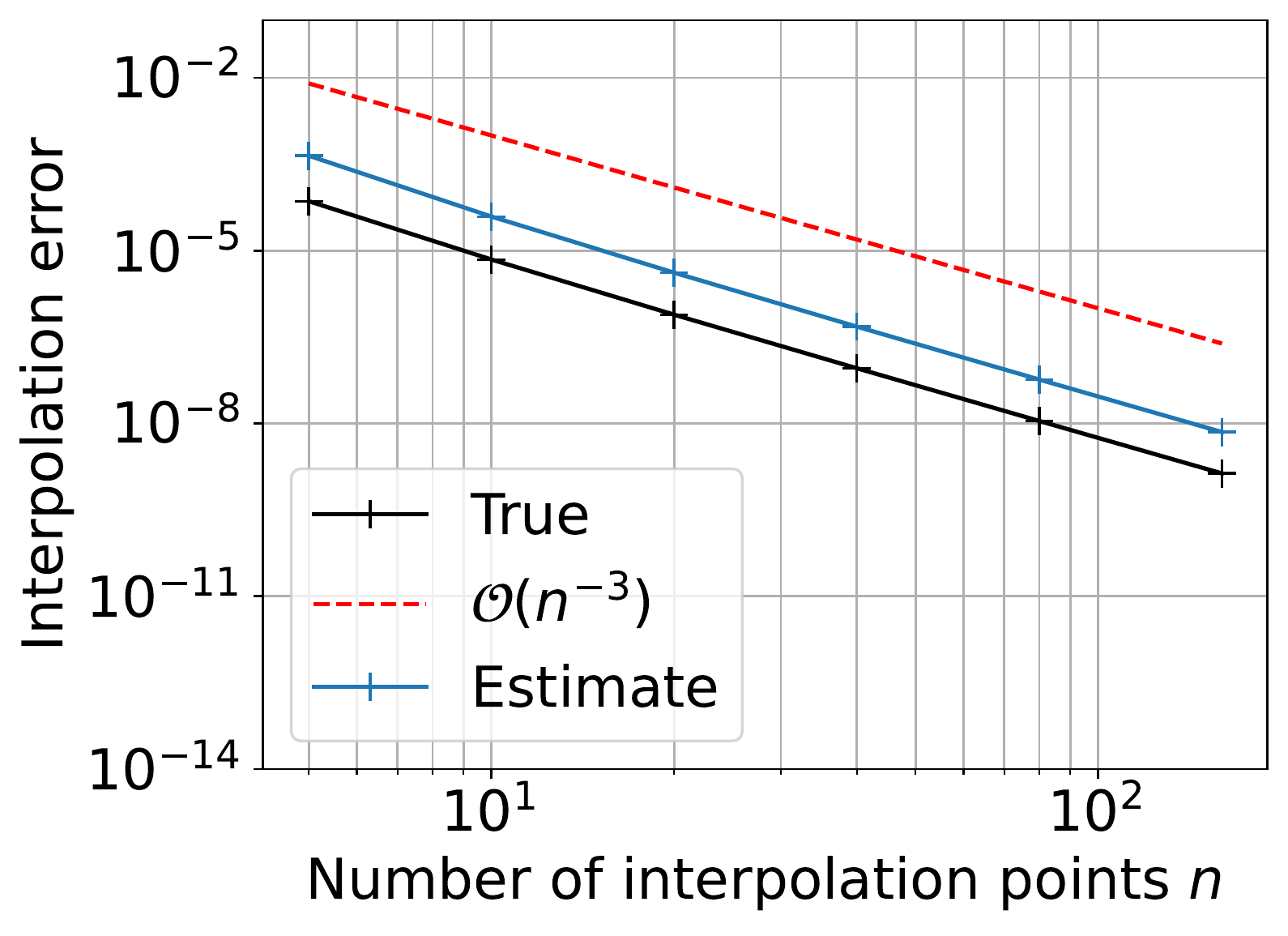}
    \caption{$m=1$}
  \end{subfigure}
    \begin{subfigure}{0.32\textwidth}
    \includegraphics[width=\textwidth]{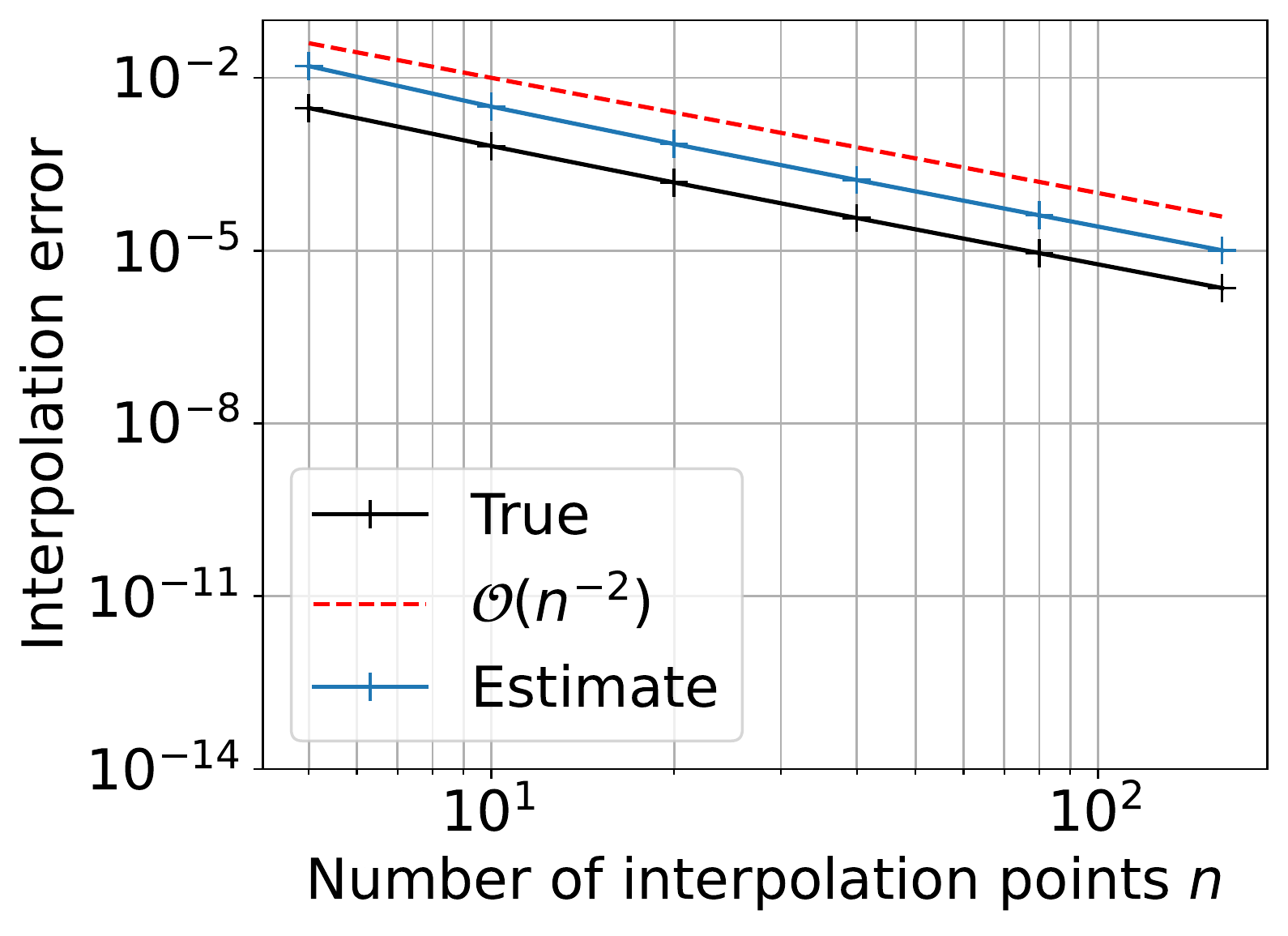}
    \caption{$m=2$}
  \end{subfigure}
  \caption{True interpolation error $\err^{(m)}_{i,tru}$ and interpolation error estimator $\hat{\err}^{(m)}_i$ for different numbers of interpolation points $n$ and different order of derivatives $m$ of $\Phi$.}
  \label{fig:interp_error}
\end{figure}

\subsubsection{Assessment of bias estimators}
To demonstrate the performance of the novel \gls{kde}-based bias estimator $\hat{\err}^{(m)}_{b,new}$ described in Section~\ref{sec:novel_bias}, we compare it with two alternative methods for bias error estimation.
The first method is the fully a priori bias estimator $\errest_b^{(m)}$ given in Eq.~\eqref{eq:apriori_b}.
The second method is to naively estimate the bias using an empirical mean without \gls{kde} smoothing, wherein the pointwise estimates obtained in Eq.~\eqref{eq:est_diff} are directly interpolated.
Note that the number of interpolation points is fixed during this study to $n=10$, which ensures that the interpolation error is much smaller in comparison to the bias as can be inferred from Figs.~\ref{fig:interp_error} and~\ref{fig:bias_error}.
We also fix the level $l=5$ for this study.
The true error is given by 
\begin{align}
\err^{(m)}_{b,tru} = \norm{\Phi_l^{(m)} - \Phi^{(m)}}_{\linf(\Theta)},
\end{align}
which we approximate accurately by evaluating the norm on a very fine grid using the known functions $\Phi_l$ and $\Phi$ and their derivatives.
The a priori estimate, as was described in Section~\ref{sec:apriori}, is given by 
\begin{align}
\errest^{(m)}_{b} \coloneqq \frac{ C_2(m)C_3(n-1)^m}{(e^{\alpha}-1)}\hat{b}_L,
\end{align}
where we used the $N_l$ samples available on level $l$ to estimate the expectation.
The naive non-smoothened estimate is given instead by
\begin{align}
\errest^{(m)}_{b,nai} \coloneqq \frac{1}{(e^{\alpha}-1)} \norm{ \interpm{ \frac{1}{N_l} \sum_{i=1}^{N_l} \phi(\thetab,\qoi^{(i,l)}_l)-\phi(\thetab,\qoi^{(i,l)}_{l-1})}}_{\linf(\Theta)}.\label{eq:bias_naive}
\end{align}
Lastly, the \gls{kde}-smoothened error estimator is given by 
\begin{align}
\errest^{(m)}_{b,new} = \frac{\hat{b}_{L,new}^{(m)}}{(e^{\alpha}-1)},\label{eq:bias_kde}
\end{align}
where $\hat{b}_{L,new}^{(m)}$ is defined in Eq.~\eqref{eq:blm_def} and computed as described in Section~\ref{sec:novel_bias}.
In both Eqs.~\eqref{eq:bias_naive} and~\eqref{eq:bias_kde}, the norm is evaluated on a fine grid with $n'=1000$ points.
For the decay rate $\alpha$, we use the theoretical result that for a hierarchy of meshes whose characteristic mesh size decays as $2^{-l}$ in the levels $l$, the second order central finite difference scheme yields a bias error decay rate of $2^{-2l}$, giving $\alpha = 2 \ln(2) \approx 1.39$.

Fig.~\ref{fig:bias_error} summarises the results on the performance of these bias estimators. 
We plot each of the error estimators as well as the true error for different sample-sizes $N_l$ for fixed $l$ and for different orders of derivatives of $\Phi$.
For each value of $N_l$, we create 20 independent realizations of $N_l$ correlated sample pairs and for each set of $N_l$ correlated sample pairs, we evaluate the different bias error estimators.
We observe that the fully a priori error estimate becomes increasingly conservative for higher order derivatives $m$.
The naive non-smoothened error estimator significantly improves on the a priori estimator but still overestimates the error for higher derivatives and small sample sizes.
This is to be expected since we numerically differentiate a non-smooth function.
The novel \gls{kde}-based approach clearly provides the tightest bound on the true error among the three estimators, consistently for all values of $m$.

\begin{figure}[H]
  \begin{subfigure}{0.32\textwidth}
    \includegraphics[width=\textwidth]{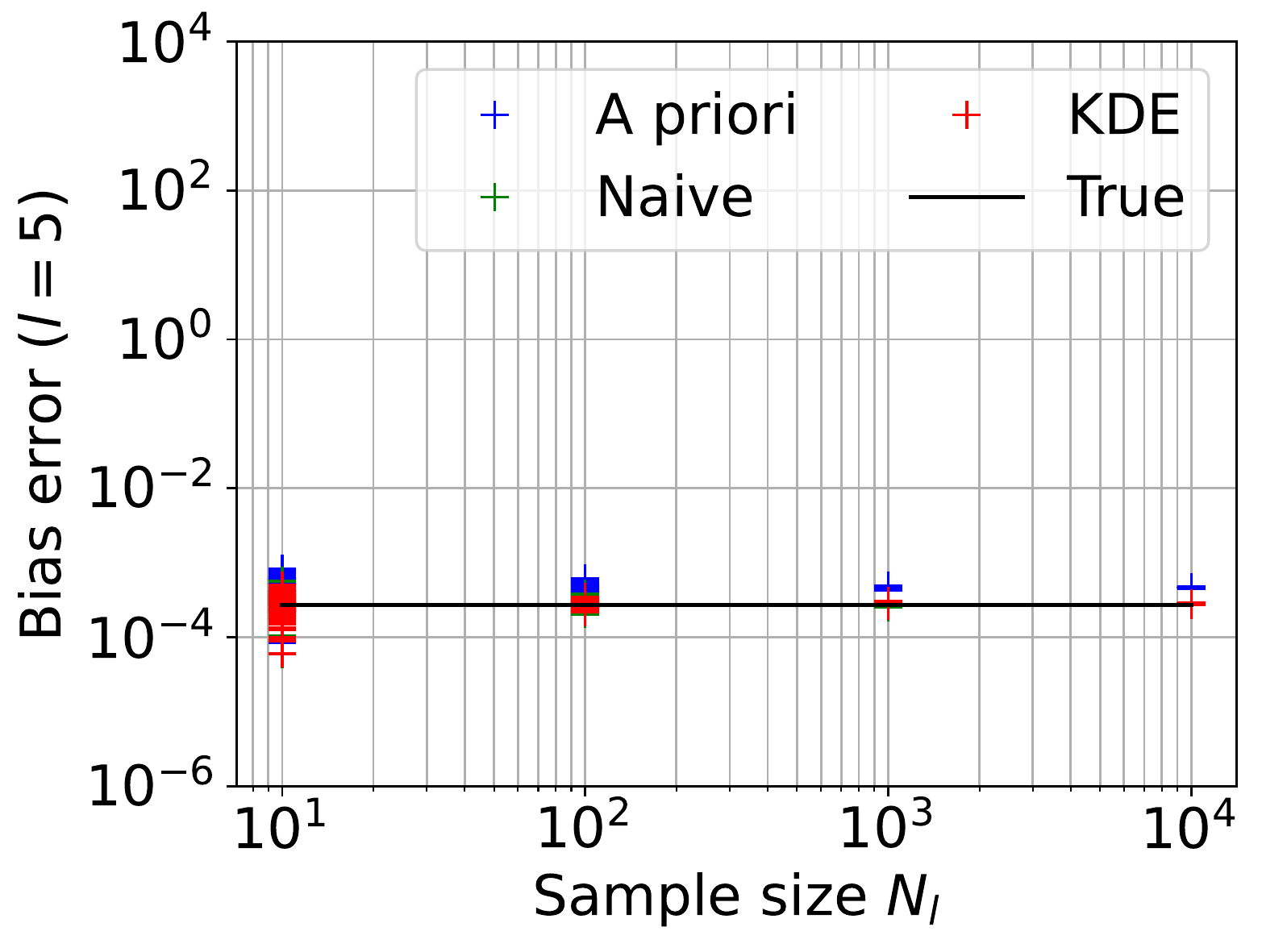}
    \caption{$m=0$}
  \end{subfigure}
    \begin{subfigure}{0.32\textwidth}
    \includegraphics[width=\textwidth]{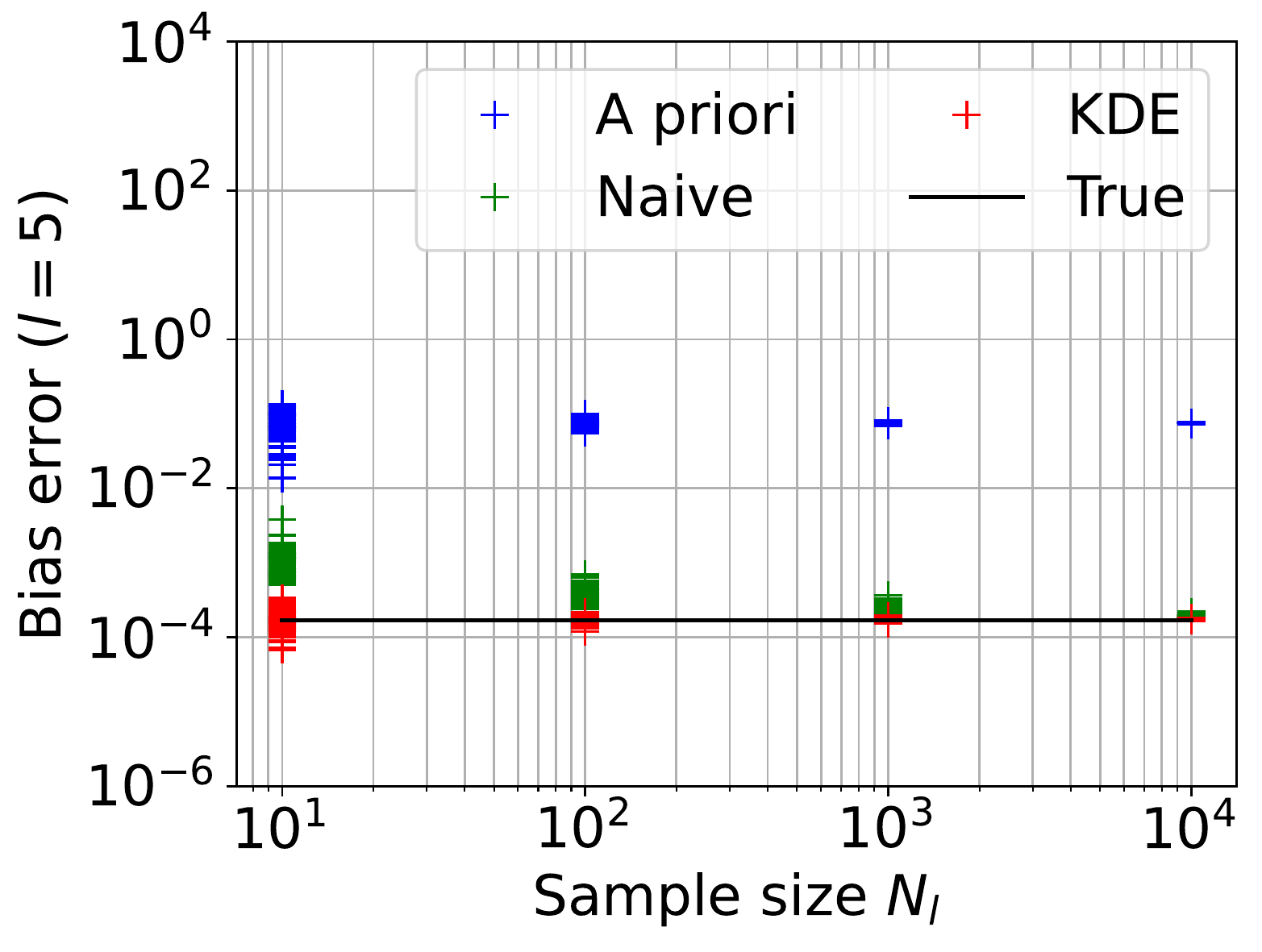}
    \caption{$m=1$}
  \end{subfigure}
    \begin{subfigure}{0.32\textwidth}
    \includegraphics[width=\textwidth]{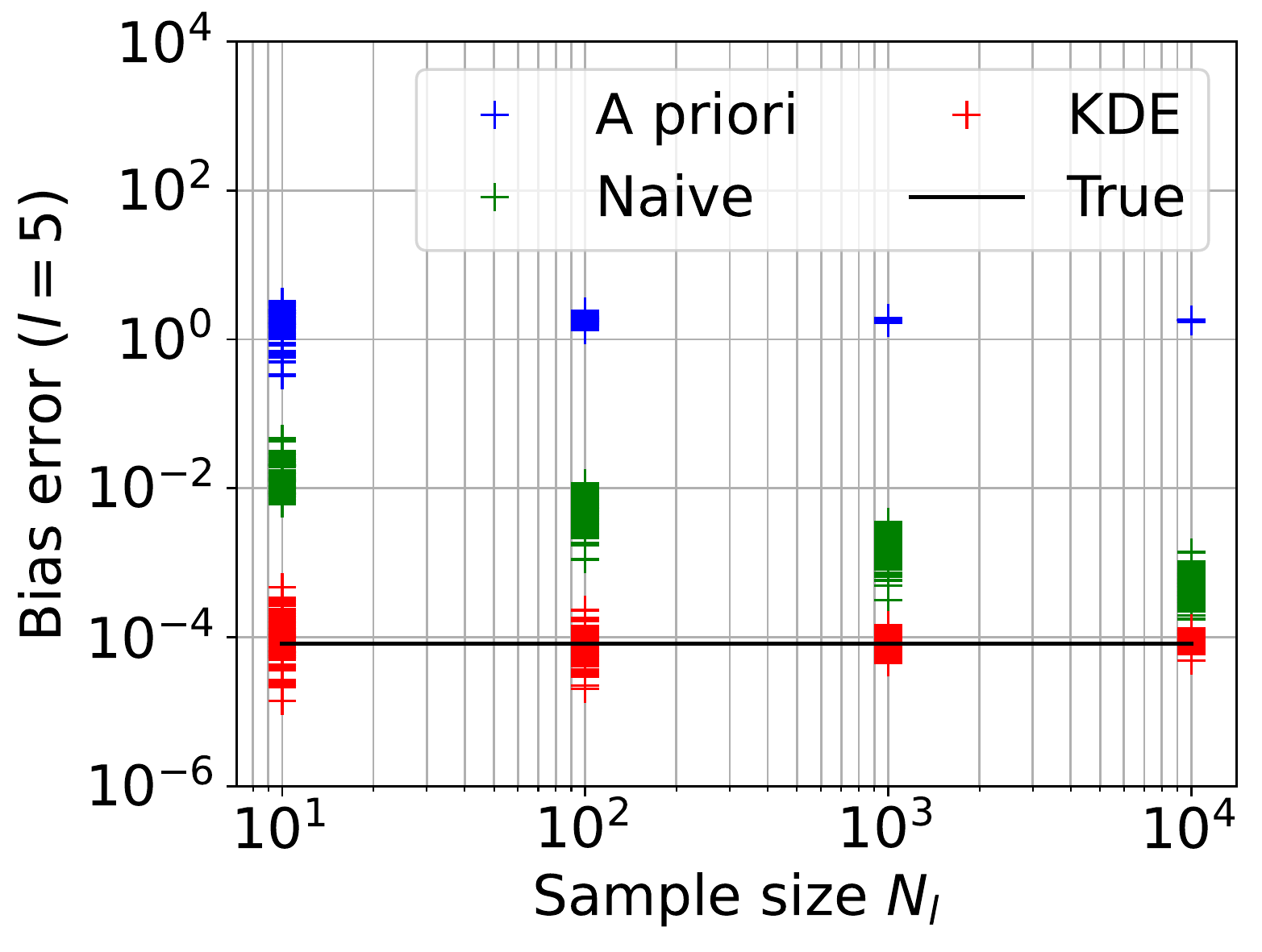}
    \caption{$m=2$}
  \end{subfigure}
  \caption{Comparison of bias error estimators $\errest^{(m)}_{b}$, $\errest^{(m)}_{b,nai}$ and $\errest^{(m)}_{b,new}$ for different sample sizes and for different derivatives of $\Phi$.}
  \label{fig:bias_error}
\end{figure}

Moreover, the \gls{kde}-based approach preserves the underlying decay rate of the \gls{qoi} with respect to the level $l$.
To illustrate this, we compute a hierarchy that uses 100 samples per level for 5 levels.
A hierarchy of this size is sampled independently 20 times and the resulting bias estimates are plotted against levels for each realisation of the hierarchy.
These results are summarised in Fig.~\ref{fig:bias_error_decay}.
The average least-squares-fit decay rate over these 20 simulations is computed and shown in the corresponding figure legend. 
We reiterate that theoretical considerations predict that the bias decays at a rate $\alpha = 2 \log(2) \approx 1.39$, independent of the order $m$ of the derivative.
As can be seen from the figure, the a priori estimates capture the correct decay rate but are conservative on the true error.
In addition, they become drastically more conservative for higher order derivatives.
The naive approach provides a much tighter bound than the a priori approach, but its decay rate deteriorates at least for the second order derivative.
The \gls{kde} approach, on the other hand, provides the tightest bound while also preserving the correct underlying decay rate. 

\begin{figure}[H]
  \begin{subfigure}{0.32\textwidth}
    \includegraphics[width=\textwidth]{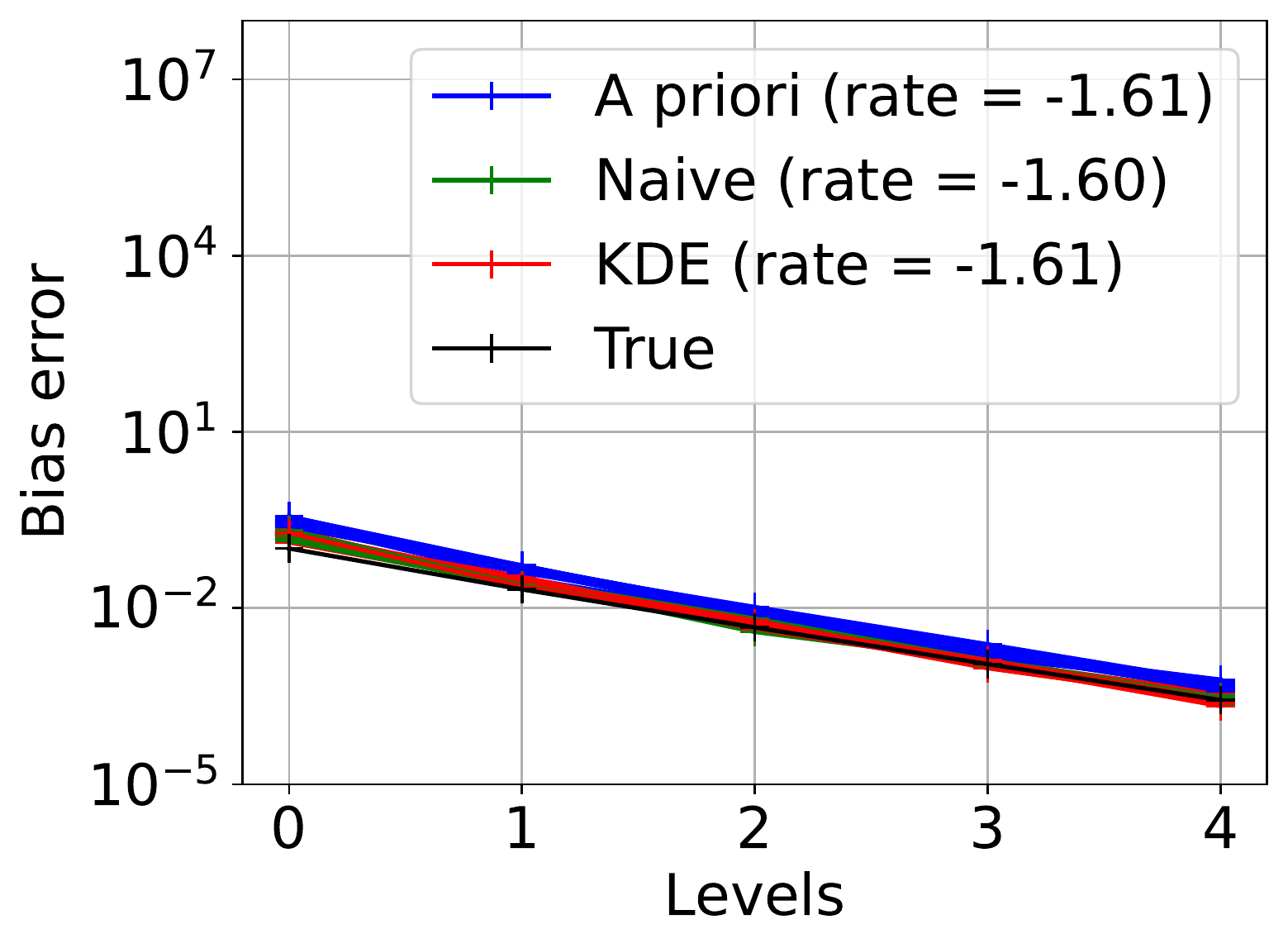}
    \caption{$m=0$}
  \end{subfigure}
    \begin{subfigure}{0.32\textwidth}
    \includegraphics[width=\textwidth]{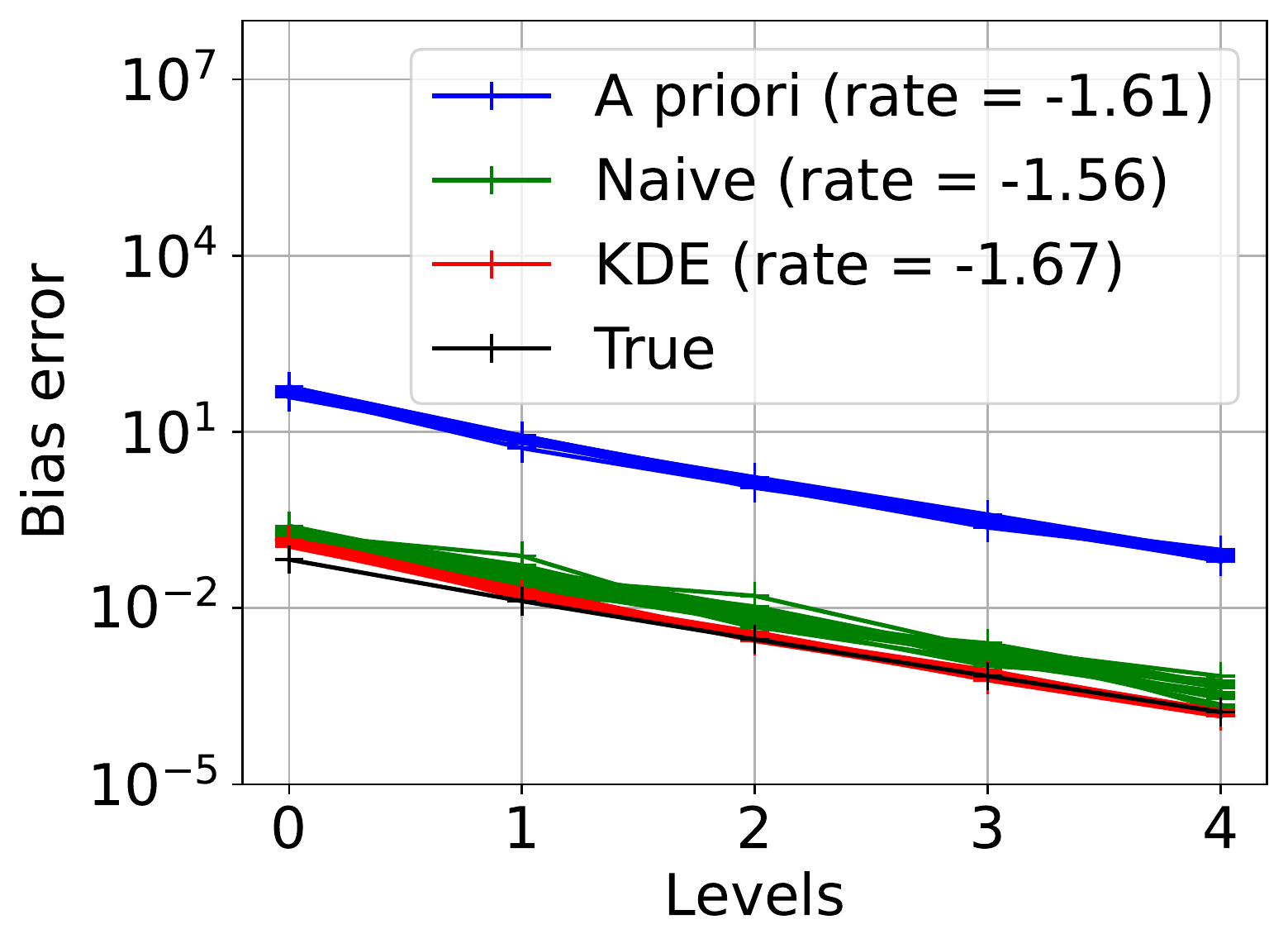}
    \caption{$m=1$}
  \end{subfigure}
    \begin{subfigure}{0.32\textwidth}
    \includegraphics[width=\textwidth]{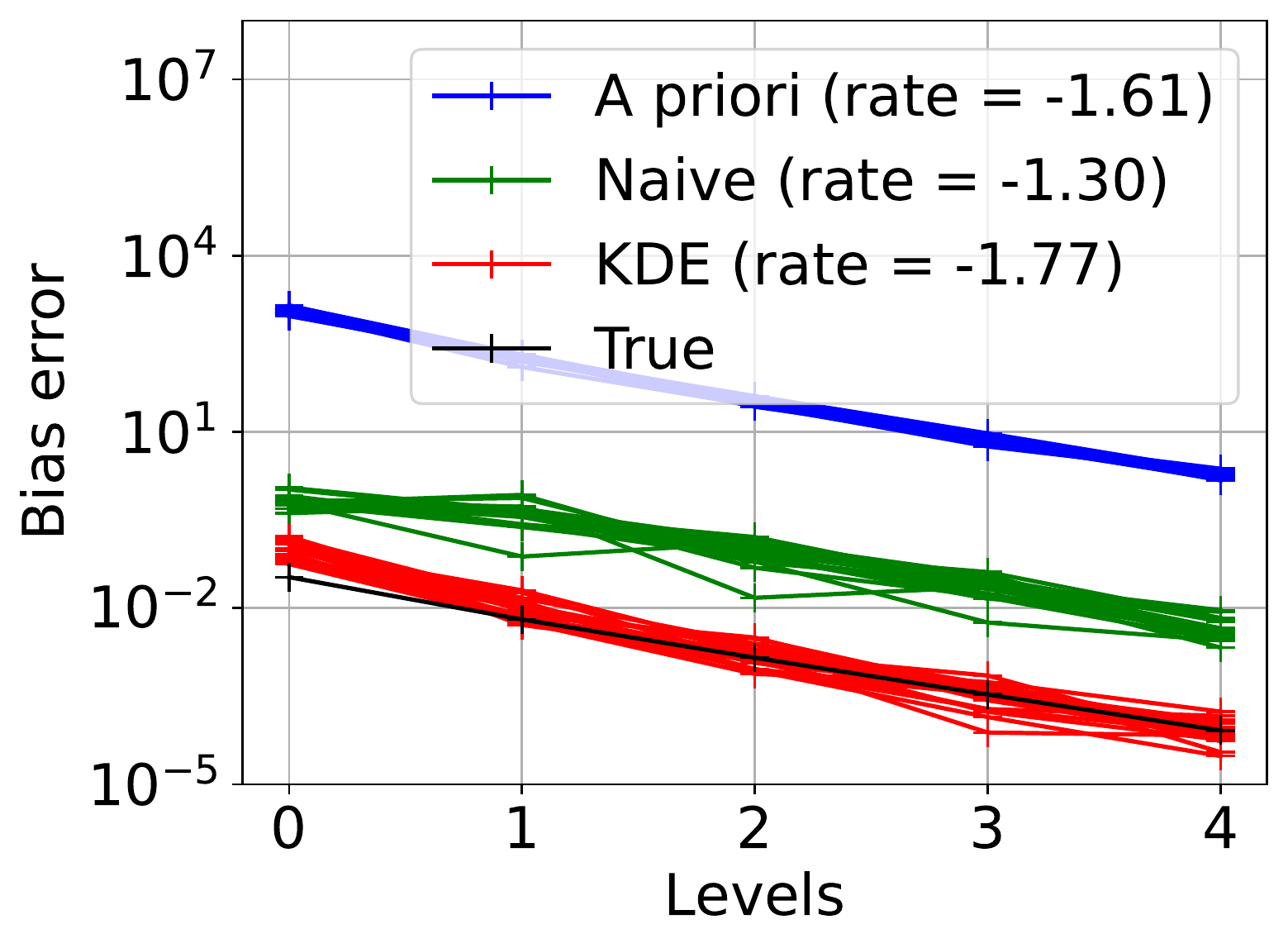}
    \caption{$m=2$}
  \end{subfigure}
  \caption{Comparison of bias decay over levels for different derivatives of $\Phi$.}
  \label{fig:bias_error_decay}
\end{figure}

\subsubsection{Statistical error comparison} \label{sec:stat_err_comparison}
The statistical error is controlled by the level-wise sample sizes.
To assess the quality of the novel statistical error estimator that was introduced in Section~\ref{sec:stat_err_est}, we consider three types of hierarchies, all of the general form 
\begin{align}
N_l = \left\lfloor N_0 2^{rl} \right\rfloor, \quad l \in \{0,1,...,L\}.\label{eq:hierarchy}
\end{align}
For the simulations performed in this section, we fix $L=5$ and consider $r\in \{-1,0,1\}$. 
These values of $r$ generate hierarchies where $N_l$ decreases, stays the same, and increases with $l$, respectively. 
Although hierarchies where $N_l$ increases with $l$ do not occur in practice, they are nevertheless investigated here to assess the robustness of the error estimator. 
By selecting different values of $N_0$, one can determine the hierarchy fully.

We propose estimating the statistical error as follows.
The true statistical error $e^{(m)}_{s,tru}$, which is not readily computable in this case, is estimated using a brute force strategy where we repeat the \gls{mlmc} procedure $N_{ref}=10^4$ times.
The squared true error is then estimated as:
\begin{align}
(e^{(m)}_{s,tru})^2 \approx \frac{1}{N_{ref}} \sum_{i=1}^{N_{ref}} \norm{\interpm{\hat{\Phi}_{L,i}-\Phi_L}}_{\linf(\Theta)}^2,
\end{align}
where $\hat{\Phi}_{L,i}$ denotes the $i^{\text{th}}$ simulation of an \gls{mlmc} hierarchy whose parameters are given by Eq.~\eqref{eq:hierarchy} for a given $r$ and $N_0$.
For the novel bootstrapped statistical error estimate introduced in Section~\ref{sec:stat_err_est}, we create one realisation of the hierarchy in Eq.~\eqref{eq:hierarchy} and create $N_{bs}=100$ bootstrapped realisations of the resulting \gls{mlmc} estimator.
The statistical error estimate $\errest^{(m)}_{s,new}$ is then computed as described in Section~\ref{sec:stat_err_est} and in Eq.~\eqref{eq:se_bs}.
We do not change $N_{bs}$ adaptively in this demonstration and keep the value fixed.
We perform the above study for the three different types of hierarchy in Eq.~\eqref{eq:hierarchy}, and for the first two derivatives of $\Phi$, as well as for $\Phi$ itself. 
For each hierarchy shape and derivative, we test for different values of the hierarchy size parameter $N_0$.

The results are shown in Fig.~\ref{fig:stat_error}.
The a priori statistical error estimate given by Eq.~\eqref{eq:apriori_s} and the bootstrapped statistical error estimate given by Eq.~\eqref{eq:se_bs} are plotted alongside the true statistical error for decreasing, uniform and increasing hierarchies and for different derivatives of $\Phi$.
As can be observed from the figure, the bootstrapped statistical error estimate provides a tight bound on the true error for the range of hierarchies and derivatives tested, whereas the a priori statistical error estimator defined in Eq.~\eqref{eq:apriori_s} clearly provides overly conservative estimates for $m=1,2$.


\begin{figure}[H]
\centering
\begin{subfigure}{.32\textwidth}
    \centering
    \includegraphics[width=\textwidth]{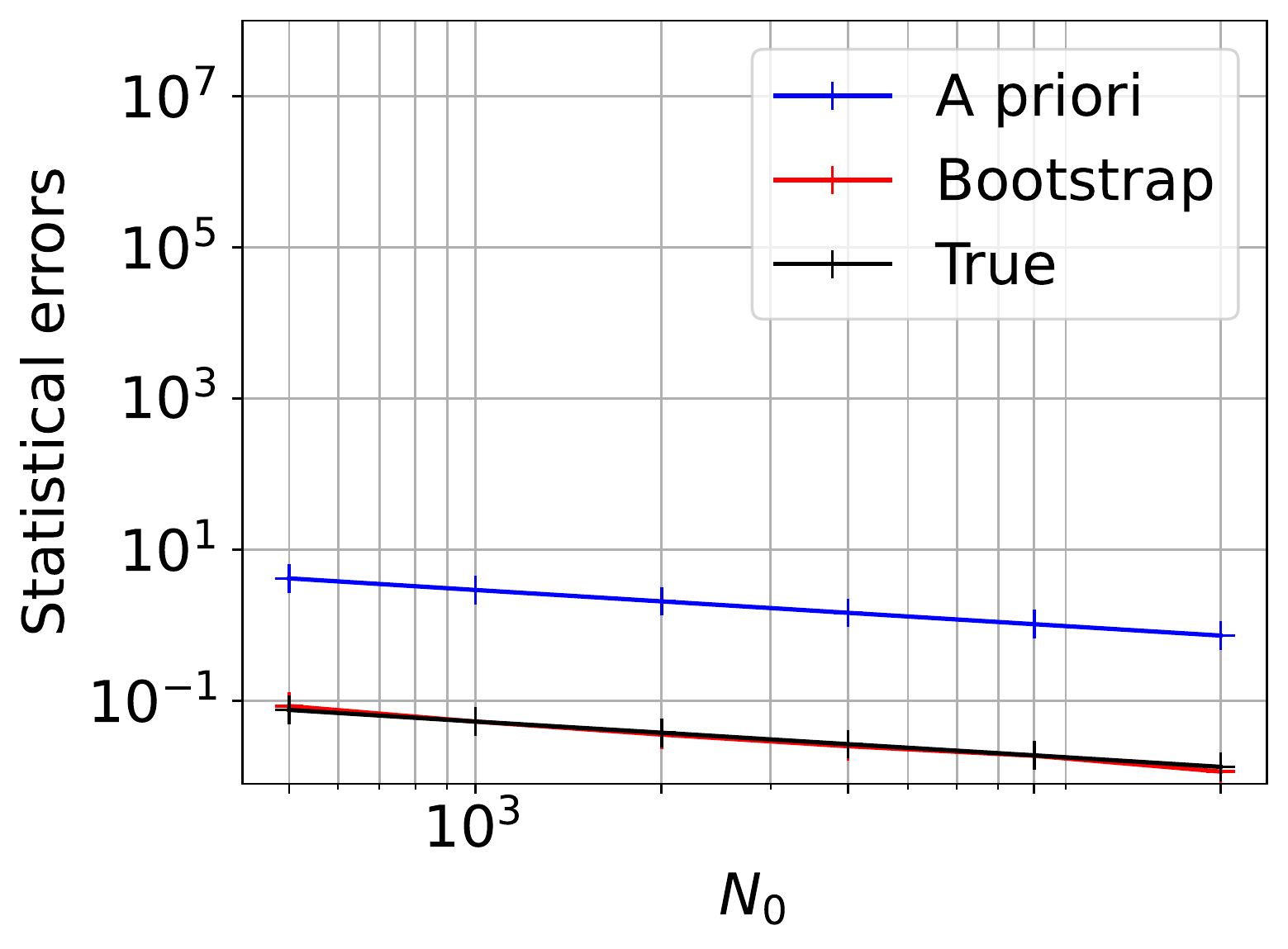}
    \caption{$m=0, r=-1$}
\end{subfigure}
\begin{subfigure}{.32\textwidth}
    \centering
    \includegraphics[width=\textwidth]{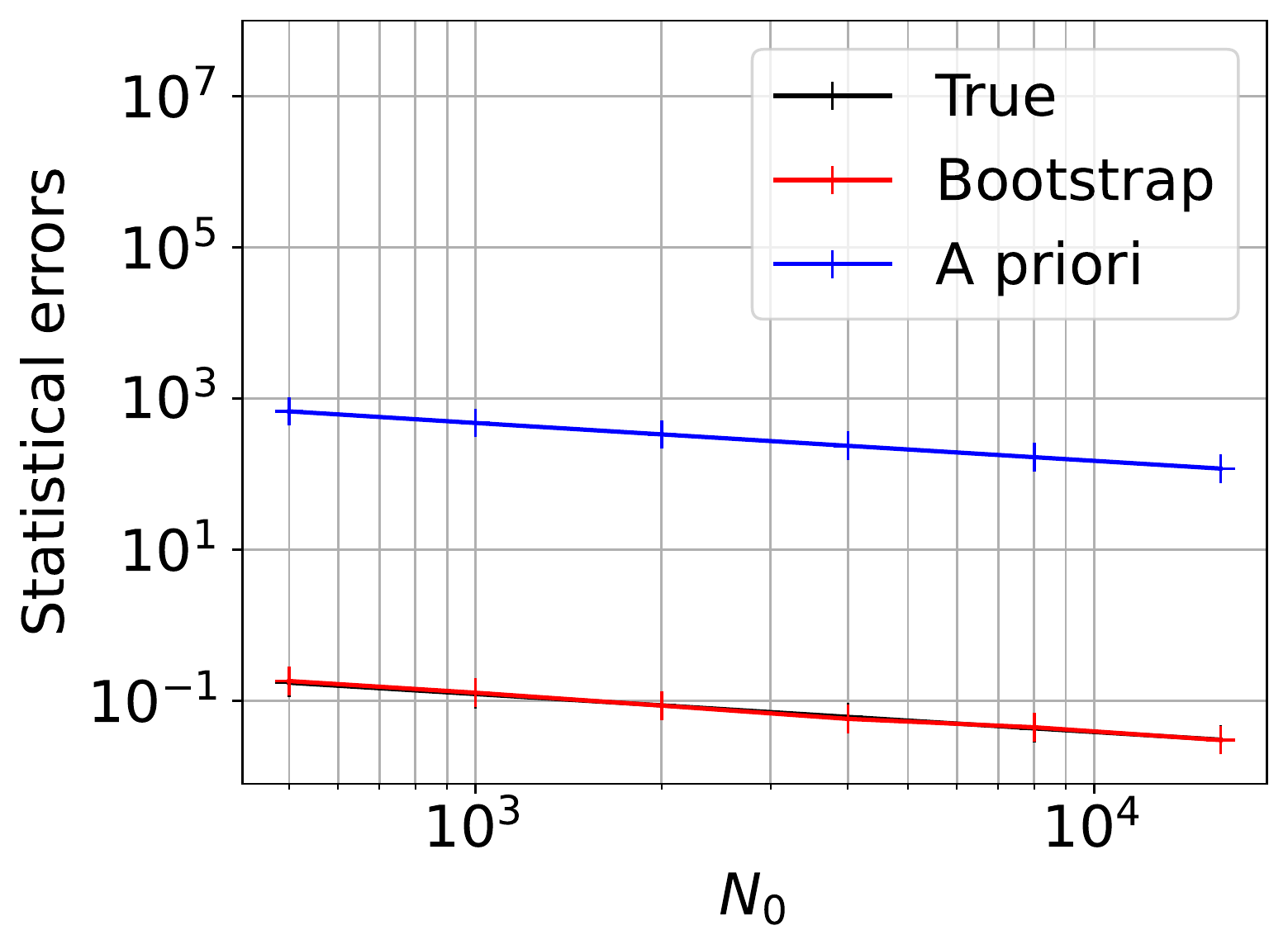}
    \caption{$m=1, r=-1$}
\end{subfigure}
\begin{subfigure}{.32\textwidth}
    \centering
    \includegraphics[width=\textwidth]{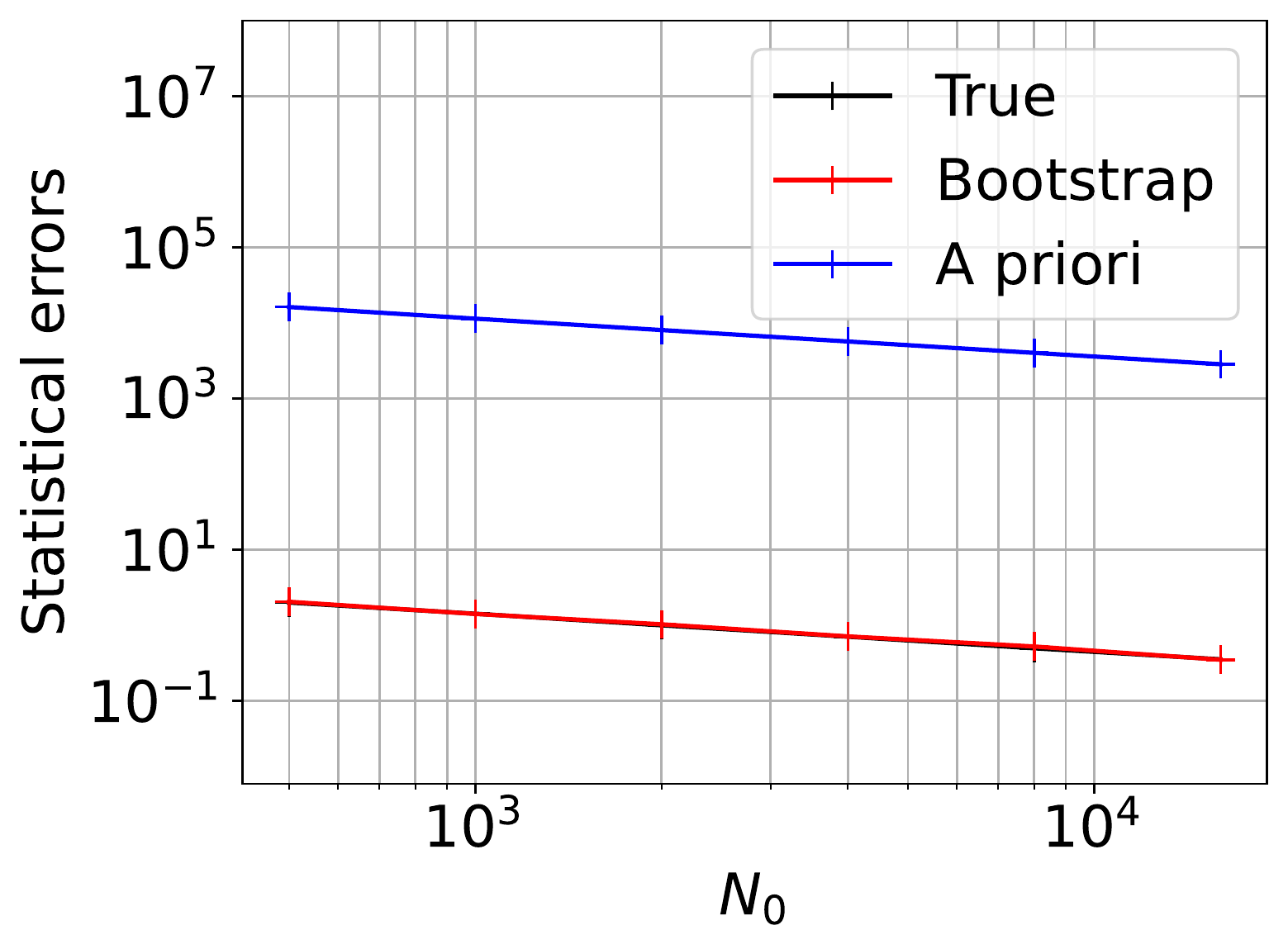}
    \caption{$m=2, r=-1$}
\end{subfigure}

\begin{subfigure}{.32\textwidth}
    \centering
    \includegraphics[width=\textwidth]{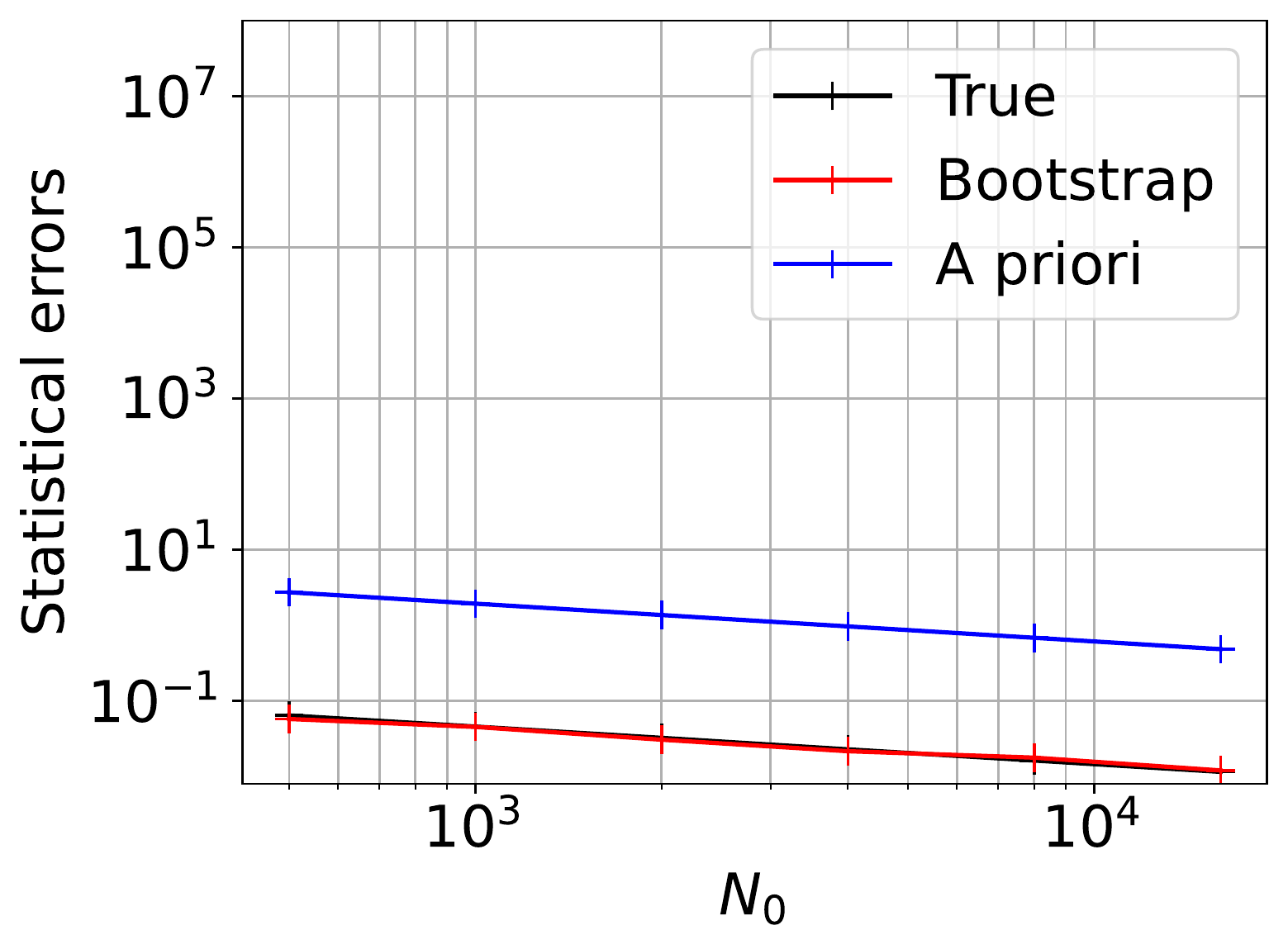}
    \caption{$m=0, r=0$}
\end{subfigure}
\begin{subfigure}{.32\textwidth}
    \centering
    \includegraphics[width=\textwidth]{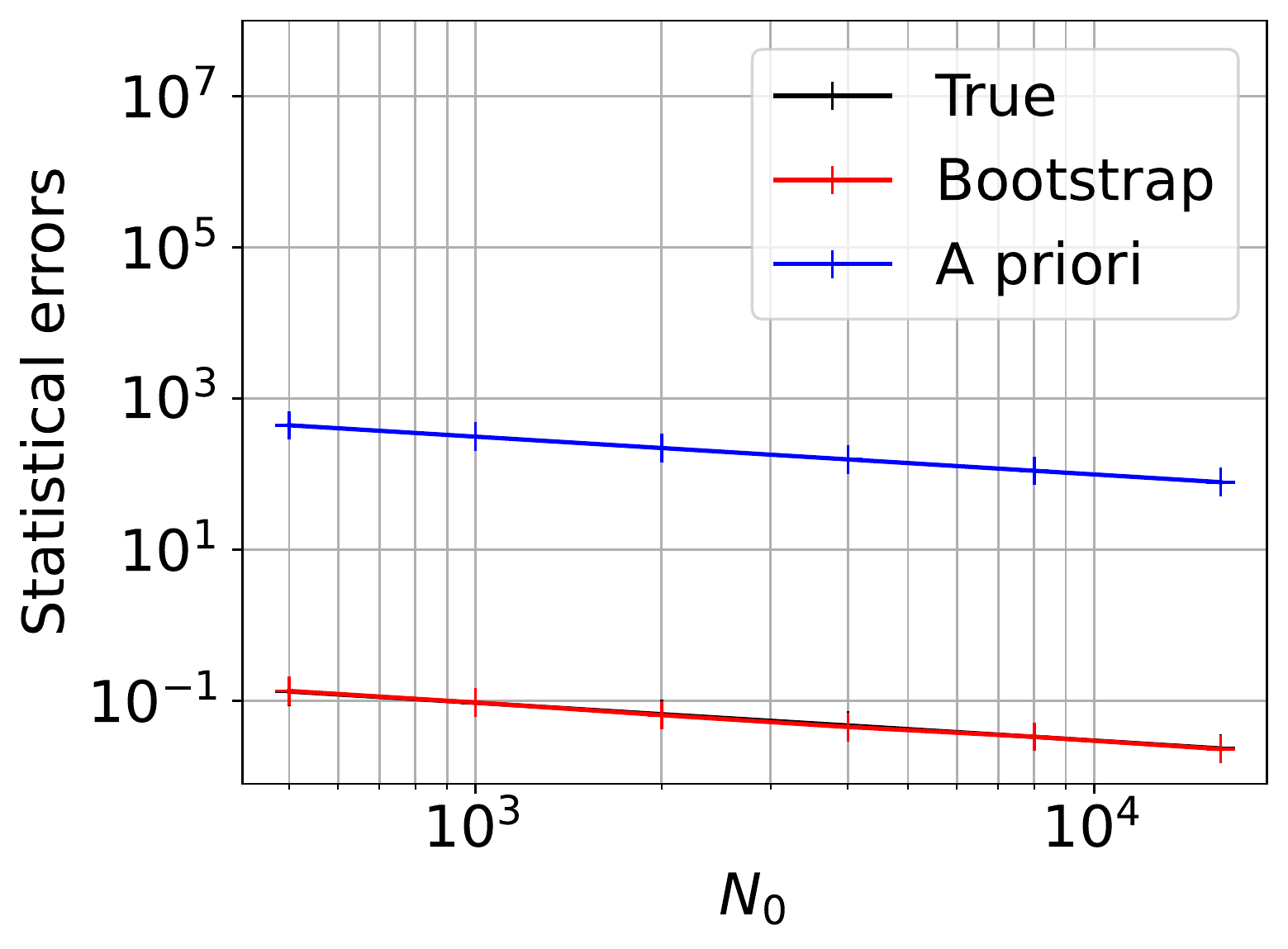}
    \caption{$m=1, r=0$}
\end{subfigure}
\begin{subfigure}{.32\textwidth}
    \centering
    \includegraphics[width=\textwidth]{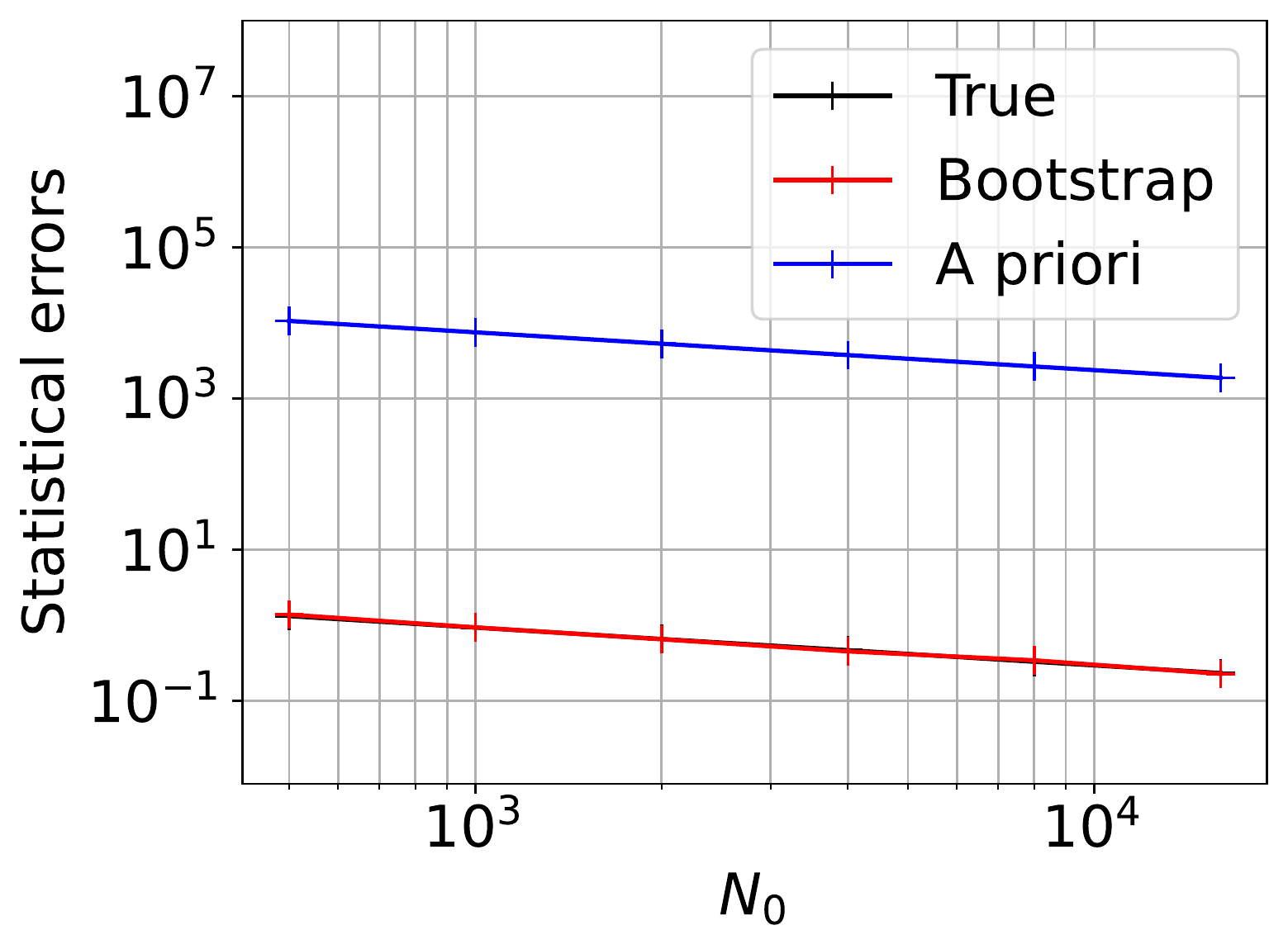}
    \caption{$m=2, r=0$}
\end{subfigure}

\begin{subfigure}{.32\textwidth}
    \centering
    \includegraphics[width=\textwidth]{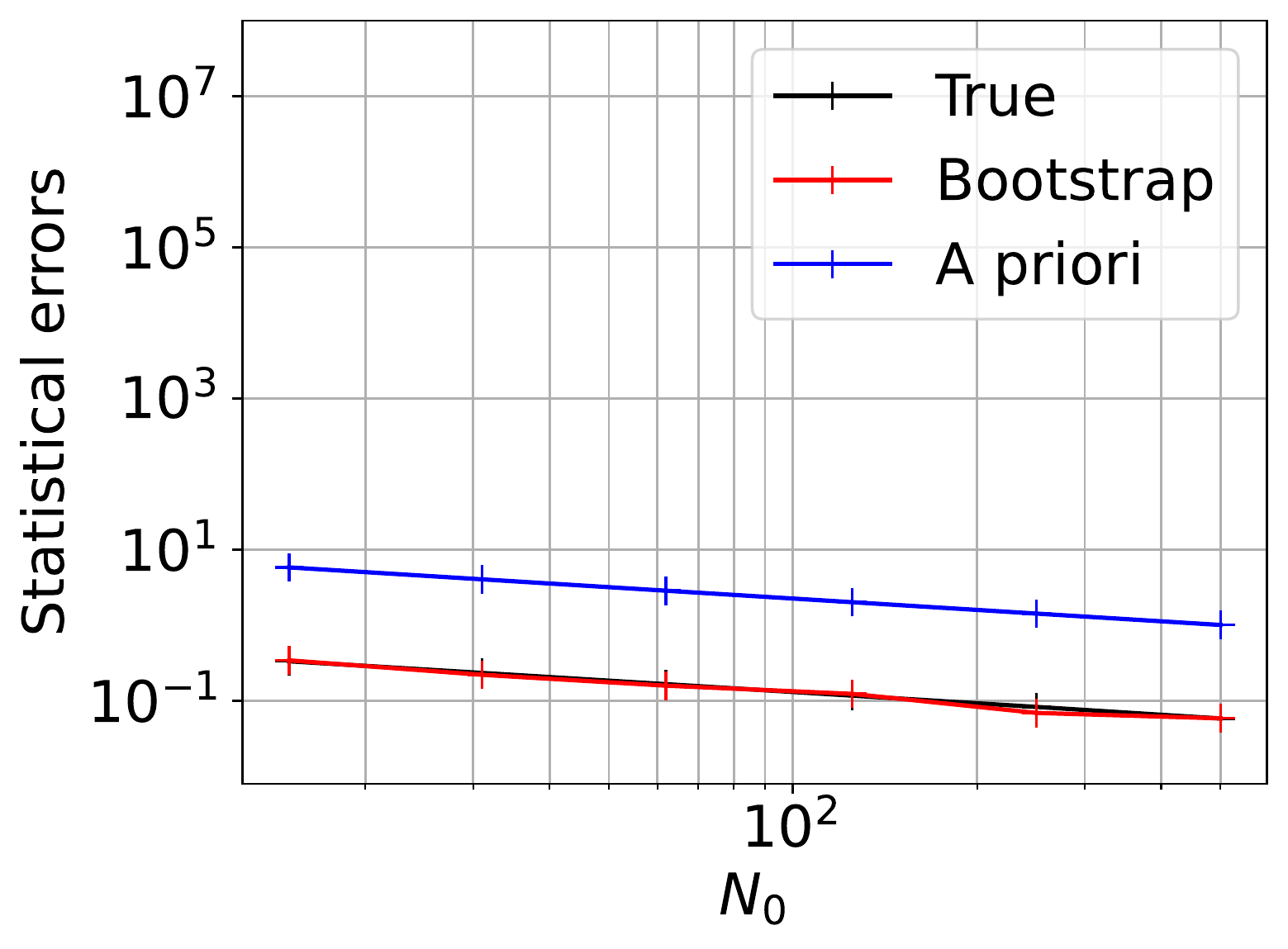}
    \caption{$m=0, r=1$}
\end{subfigure}
\begin{subfigure}{.32\textwidth}
    \centering
    \includegraphics[width=\textwidth]{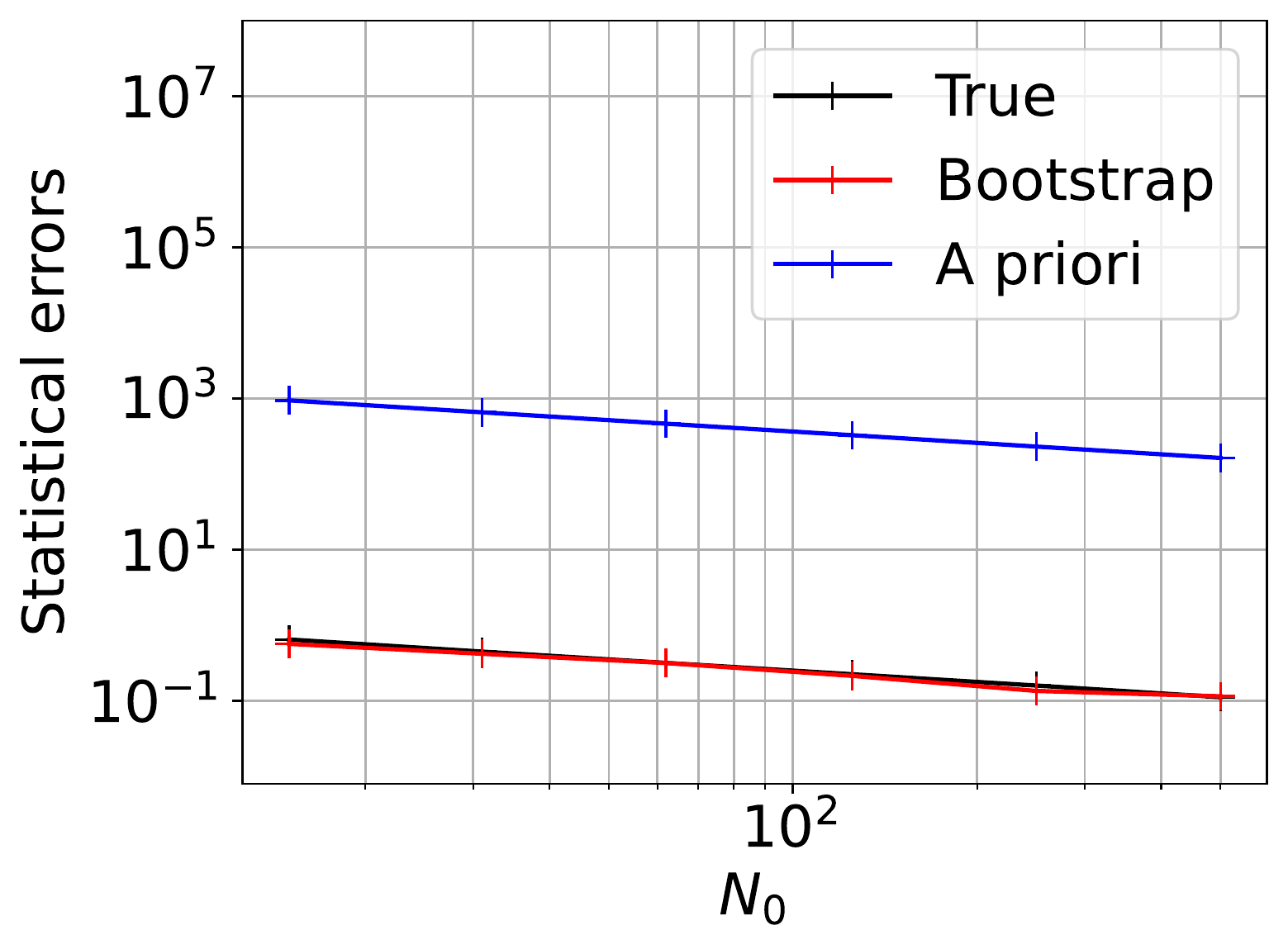}
    \caption{$m=1, r=1$}
\end{subfigure}
\begin{subfigure}{.32\textwidth}
    \centering
    \includegraphics[width=\textwidth]{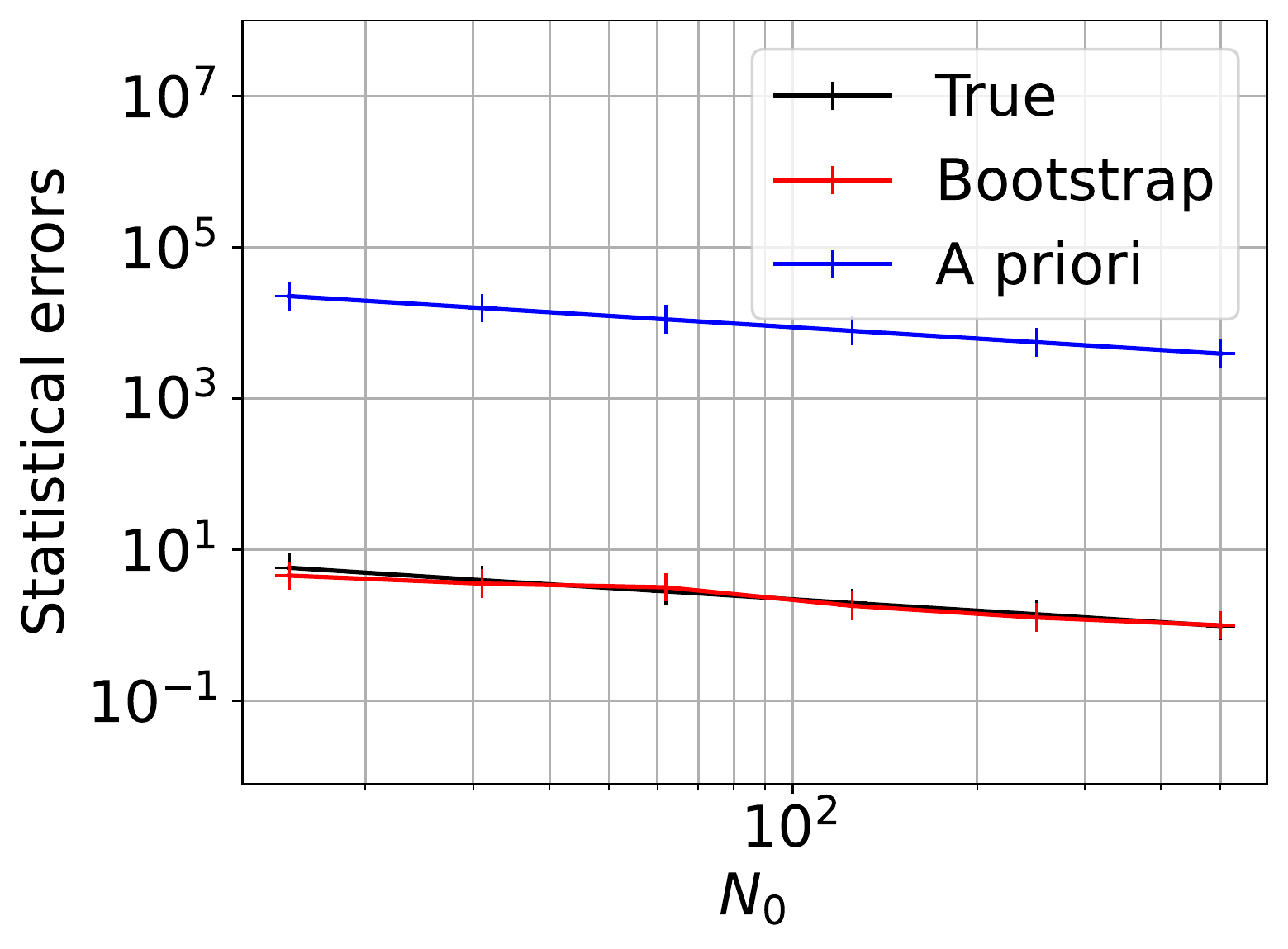}
    \caption{$m=2, r=1$}
\end{subfigure}
\caption{Statistical error estimator comparison of $\errest^{(m)}_{s}$ and $\errest^{(m)}_{s,new}$. From left to right, increasing order of derivative $m$ of $\Phi$. From top to bottom, decreasing, uniform and increasing hierarchies of different sizes.}
\label{fig:stat_error}
\end{figure}

\section{Tuning the MLMC hierarchy using the novel error estimators} \label{sec:adaptivity}
In the previous sections, we have presented effective error estimators for the \gls{mse} contributions of the \gls{mlmc} estimator of $\Phim$.
The next step will be to adapt the \gls{mlmc} hierarchy based on such an error estimator to achieve a prescribed tolerance on the \gls{mse} in a cost-optimal way.
This implies that one should choose adaptively the number of interpolation points $n$, the maximum discretisation level $L$ and the level-wise sample sizes $N_l$.
We discuss a possible way to do this for the \gls{mse} on $\Phiem$ and quantities derived from it, focussing on the \gls{pdf}, the \gls{cdf}, the \gls{var} and the \gls{cvar} as defined in Eq.~\eqref{eq:stat_defs}.
The procedure is, of course, similar to the adaptive procedure described in Section~\ref{sec:apriori} for the fully a priori error estimators, but tailored here to the current setting.


\subsection{\gls{mlmc} tuning procedure for linear combinations of \glsplural{mse}}
It can be seen from Eq.~\eqref{eq:stat_defs} that the \gls{mse} of the \gls{cdf} and the \gls{pdf} are directly proportional to the \gls{mse} of $\Phieone$ and $\Phietwo$ respectively.
In addition, as will be shown in Section~\ref{sec:var_cvar_error_bound}, Lemma~\ref{lemma:accuracy:derived}, the \gls{mse} of the \gls{var} and of the \gls{cvar} can be bounded by a linear combination of the \glsplural{mse} of $\Phie$ and $\Phieone$.
For these reasons, we first present a method to calibrate the \gls{mlmc} estimator of any arbitrary quantity $\stat$ with corresponding estimate $\statest$, whose \gls{mse} can be bounded by linear combinations of the form
\begin{align}
\mse{\statest} \leq k_0 \mse{\Phie} + k_1 \mse{\hat{\Phi}_L^{(1)}} + k_2 \mse{\hat{\Phi}_L^{(2)}},\; k_0, k_1, k_2 \geq 0.\label{eq:three_sums_mse}
\end{align}
Each of the three terms decomposes into its three respective error contributions, which can then be combined to yield global interpolation, bias and statistical error contributions: 
\begin{align}
\mse{\statest} \leq 3 \bigg\{ &\underbrace{\left[\sum_{m=0}^2 k_m (e^{(m)}_{i})^2 \right]}_{\text{Squared interpolation error}} +\underbrace{\left[\sum_{m=0}^2 k_m (e^{(m)}_{b})^2 \right]}_{\text{Squared bias error}} +\underbrace{\left[\sum_{m=0}^2 k_m (e^{(m)}_{s})^2 \right]}_{\text{Squared statistical error}} \bigg\}.\label{eq:general_mse}
\end{align}
We require the \gls{mse} to satisfy a tolerance $\epsilon^2$ with each of the three squared error contributions on the right-hand side of Eq.~\eqref{eq:general_mse} satisfying their corresponding tolerances $\epsilon_i^2$, $\epsilon_b^2$ and $\epsilon_s^2$ as defined in Eq.~\eqref{eq:tol_split}.
In addition, each of the terms $\err_{i}^{(m)}$, $\err_{b}^{(m)}$ and $\err_{s}^{(m)}$ is estimated using the error estimators $\errest_{i}^{(m)}$, $\errest_{b,new}^{(m)}$ and $\errest_{s,new}^{(m)}$ defined in Eqs.~\eqref{eq:apriori_i}, \eqref{eq:blm_def} and \eqref{eq:se_bs} respectively.

As described in Section~\ref{sec:apriori}, the interpolation error is controlled solely by the number of interpolation points.
To determine it, we require that the squared interpolation error term in Eq.~\eqref{eq:general_mse} satisfies the condition:
\begin{align}
\norm{\Upsilon^{(4)}_{\lceil L/2 \rceil}}_{\linf(\Theta)}^2 \left[\sum_{m=0}^2 k_m C^2_1(m) \left( \frac{|\Theta|}{n} \right)^{2(4-m)}\right] \leq \epsilon^2_{i}, \label{eq:interp_points_all}
\end{align}
which results from each term $\err^{(m)}_i$ in the interpolation error contribution of Eq.~\eqref{eq:general_mse} being bounded by its estimator $\errest^{(m)}_i$ in Eq.~\eqref{eq:apriori_i}.
Determining $n$ such that equality holds in Eq.~\eqref{eq:interp_points_all} requires finding the roots of a polynomial of the form $a_8 n^{-8} + a_6 n^{-6} +a_4 n^{-4}+a_0=0$ for $a_8, a_6, a_4, a_0\geq 0$.
In case of multiple real roots, the smallest positive root is taken and the optimal number of interpolation points $n^*$ is, in practice, taken to be the smallest integer larger than this root. 

The bias error is controlled by the number of levels $L$.
To determine it, we first estimate the level-wise bias terms $\hat{b}_{l,new}^{(m)},\;l \in \{1,...,L\}$ for $m\in\{0,1,2\}$.
We then enforce the condition that the squared bias error term in Eq.~\eqref{eq:general_mse}, with each of the terms $\err_b^{(m)}$ replaced by the corresponding estimator $\errest_{b,new}^{(m)}$, satisfies the tolerance $\epsilon_b^2$.
However, we recall that bias estimates are unavailable for levels $l > L$ and are available only on levels where samples have already been computed.
As a result, to determine the optimal choice of level $L^*$, the bias decay models $c_{\alpha_m} e^{-l \alpha_m}$ are first constructed by least squares fits respectively on the level-wise bias estimates $\hat{b}_{l,new}^{(m)},\;l \in \{1,...,L\}$ for each $m\in\{0,1,2\}$.
We then require the squared bias error term, with the terms $\hat{b}_{l,new}^{(m)}$ in $\errest_{b,new}^{(m)}$ replaced by the corresponding model $c_{\alpha_m}e^{-l\alpha_m}$, to satisfy the following conditions:
\begin{align}
\sum_{m=0}^2 \frac{k_m c_{\alpha_m}^2 e^{-2L \alpha_m}}{(e^{\alpha_m}-1)^2} \leq \epsilon_b^2, \label{eq:all_biases_adapt}
\end{align}
on $L$. 
The appropriate choice of level $L^*$ is selected to be the minimum level that satisfies the above condition in Eq.~\eqref{eq:all_biases_adapt}.
Although all three rates $\alpha_m, \; m \in \{0,1,2\}$ are expected to be the same in most cases, small differences can exist in practice due to estimation errors.



To select the appropriate level-wise sample sizes $N_l$, we are required to localize the bootstrapped statistical error estimator $\errest^{(m)}_{s,new}$ over the levels $l$.
We propose an algorithm to accomplish this based on rescaling the level-wise variances $\hat{V}_l$ defined in Eq.~\eqref{eq:vl_est}, thus preserving the same squared statistical error splitting between levels as in the case of the a priori error estimator.
We first discuss the case of a single term in Eq.~\eqref{eq:general_mse} ($k_m = 1, k_j = 0, j \neq m$).
Recall from Section~\ref{sec:apriori} that the level-wise sample sizes $N_l$ are selected to minimise the total cost subject to the constraint that the statistical error $\err^{(m)}_{s}$ satisfies a given tolerance.
The a priori error estimator $\errest_s^{(m)}$ in Eq.~\eqref{eq:apriori_s} is naturally split over the levels as 
\begin{align}
(\errest^{(m)}_s)^2 = \sum_{l=0}^{L} K(n,m) \frac{\hat{V}_l}{N_l}, \label{eq:apriori_gen_stat}
\end{align}
where $K(n,m) = C_2^{2}(m)C_3^2 (n-1)^{2m} c(n)$, $\{N_l\}_{l=0}^L$ denotes the hierarchy based on which $\errest^{(m)}_{s}$ is computed and $\hat{V}_l$ is defined as in Eq.~\eqref{eq:vl_est} and hence, is independent of $m$.
On the other hand, the new error estimator $\hat{\err}^{(m)}_{s,new}$ based on bootstrapping does not provide such an immediate notion of how each level contributes to the overall statistical error and as such provides only a global statistical error estimate. 
To overcome this limitation, we propose using the error splitting structure of the a priori statistical error estimator in Eq.~\eqref{eq:apriori_gen_stat}, however replacing the large constant $K(n,m)$, which is responsible for the overly conservative error bound of $\hat{\err}^{(m)}_s$ exemplified in Section~\ref{sec:stat_err_comparison}, with a new one so that the total error matches the computable a posteriori estimator $\hat{\err}_{s,new}^{(m)}$.
In particular, we introduce the redefined level-wise variances $\tilde{V}_l$, computed such that 
\begin{align}
(\hat{\err}^{(m)}_{s,new})^2=\sum_{l=0}^L \frac{\tilde{V}_l}{N_l},
\end{align}
where each $\tilde{V}_l$ is a rescaled version of $\hat{V}_l$ and the same scaling constant is used across levels.
It follows that the rescaled variances $\tilde{V}_l$ decay at the same rate over levels as the a priori variances $\hat{V}_l$.
Specifically, we define the rescaled variances $\tilde{V}_l$ as follows:
\begin{align}
\tilde{V}_l = r_e \hat{V}_l,\quad \text{ where } r_e \coloneqq \frac{(\errest^{(m)}_{s,new})^2}{\sum_{k=0}^L \hat{V}_k/N_k}.\label{eq:vl_rescaling_single}
\end{align}
We refer to $r_e$ as the rescaling ratio.
The formulation of the cost optimisation problem then proceeds similarly to Section~\ref{sec:apriori} with $\hat{V}_l$ replaced by $\tilde{V}_l$.
As before, we neglect the evaluation and interpolation costs $\costfunc$ and $\costint$ since they are negligible in comparison to $\costsim$ for the type of applications considered in this work.
We require that the statistical error satisfies the prescribed tolerance $\epsilon_{s}^2$ while minimising the total cost of the simulation. 
Similar to the constrained optimisation problem for the a priori estimators described in Section~\ref{sec:apriori}, the approach here yields the following optimal level-wise sample sizes:
\begin{align}
N_l^* = \left\lceil \frac{1}{\epsilon_{s}^2} \sqrt{\frac{\tilde{V}_l}{\costsim}} \sum_{k=0}^{L^*} \sqrt{\tilde{V}_k \mathfrak{c}_k}\right\rceil = \left\lceil \frac{r_e}{\epsilon_{s}^2}\sqrt{\frac{\hat{V}_l}{\costsim}} \sum_{k=0}^{L^*} \sqrt{\hat{V}_k \mathfrak{c}_k}\right\rceil. \label{eq:opt_samples} 
\end{align}
We note here that the hierarchy $\{N_l\}_{l=0}^L$ is used to compute the estimator $\hat{V}_l$, $\costsim$ and $\errest^{(m)}_{s,new}$, whereas the hierarchy $\{N_l^*\}_{l=0}^L$ is the cost-optimal hierarchy computed based on these estimators that will achieve a tolerance of $\epsilon^2_s$ on the statistical error.
Finally, we also note that the variance rescaling proposed in Eq.~\eqref{eq:vl_rescaling_single} can be extended to a linear combination of errors as follows:
\begin{align}
\tilde{V}_l = r_e \hat{V}_l, \quad \text{ where } r_e \coloneqq \frac{\left[\sum_{m=0}^2 k_m(\errest^{(m)}_{s,new})^2 \right]}{\sum_{k=0}^L \hat{V}_k/N_k}, \quad 0\le l\le L^*.\label{eq:rescaling_all}
\end{align}

We now justify the choice of rescaling factor with the following theoretical result.
We first establish upper and lower bounds on the true squared statistical error $\sum_{m=0}^2 k_m (\err^{(m)}_s)^2$ in terms of the true level-wise variances $V_l$ defined in Eq.~\eqref{eq:bl_vl}.

\begin{lemma}\label{lemma:nested_bounds}
Let $\Phie$ be the estimator defined in Eq.~\eqref{eq:mlmc_est_1} to approximate $\Phi \in C^4(\Theta)$ and $\{V_l\}_{l=0}^L$ be the true corresponding level-wise variances as defined in Eq.~\eqref{eq:bl_vl}.
Then there exist positive constants $\lambda(n)$ and $K_{low}(n,m)$ such that:
\begin{align}
\frac{\lambda(n)}{|\Theta|} \sum_{l=0}^L \frac{V_l}{N_l} \leq \sum_{m=0}^2 k_m(\err^{(m)}_{s})^2 \leq \left( \sum_{m=0}^2 k_m K(n,m) \right) \sum_{l=0}^L \frac{V_l}{N_l}.
\end{align}
\end{lemma}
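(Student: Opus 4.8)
The plan is to handle the two inequalities by entirely different means: the right-hand inequality is already available from the a priori analysis, whereas the left-hand one requires a genuine lower bound on the statistical error. For the upper bound I would simply invoke the a priori estimate recalled in Eq.~\eqref{eq:apriori_bound_s}, namely $(\err^{(m)}_s)^2 \le (\errbound^{(m)}_s)^2 = K(n,m)\sum_{l=0}^L V_l/N_l$ with $K(n,m)=C_2^2(m)C_3^2(n-1)^{2m}c(n)$, then multiply by $k_m\ge 0$ and sum over $m\in\{0,1,2\}$ to obtain $\sum_{m=0}^2 k_m(\err^{(m)}_s)^2 \le \big(\sum_{m=0}^2 k_m K(n,m)\big)\sum_{l=0}^L V_l/N_l$.

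For the lower bound I would introduce the centred nodal fluctuations $\eta_j := \Phie(\theta_j)-\expec{\Phie(\theta_j)}$ and the level increments $\Delta_l(\theta):=\phi(\theta,\qoi_l)-\phi(\theta,\qoi_{l-1})$. By Eq.~\eqref{eq:mlmc_est_1} the $\eta_j$ are sums of independent level contributions with $\expec{\eta_j^2}=\sum_{l=0}^L v_l(\theta_j)/N_l$, where $v_l(\theta_j):=\var{\Delta_l(\theta_j)}$. The goal is to bound $(\err^{(m)}_s)^2=\expec{\norm{\interpm{\eta}}_{\linf(\Theta)}^2}$ from below by the nodal fluctuations, which I would do derivative by derivative. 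For $m=0$ the cubic spline is interpolatory, so $\norm{\interp{\eta}}_{\linf(\Theta)}\ge \max_{1\le j\le n}|\eta_j|$, and the elementary inequality $\max_j a_j^2\ge \tfrac1n\sum_j a_j^2$ gives $(\err^{(0)}_s)^2\ge \tfrac1n\sum_{l=0}^L N_l^{-1}\sum_j v_l(\theta_j)$. For $m=1$ the mean value theorem applied to $\interp{\eta}$ on each subinterval produces a point where $\interpone{\eta}$ equals the divided difference $(\eta_{j+1}-\eta_j)/(\theta_{j+1}-\theta_j)$, whence $\norm{\interpone{\eta}}_{\linf(\Theta)}\ge \tfrac{n-1}{|\Theta|}\max_j|\eta_{j+1}-\eta_j|$; this is exactly where the explicit factor $1/|\Theta|$ in the statement originates, and an analogous second divided difference handles $m=2$.

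In each case the preceding step reduces the problem to comparing a sum of nodal variances with $V_l=\expec{\norm{\Delta_l}_{\slinf(\thetab)}^2}=\expec{\max_j\Delta_l(\theta_j)^2}$. Here the direction $\max_j(\cdot)\le\sum_j(\cdot)$ is free and gives $\sum_j\expec{\Delta_l(\theta_j)^2}\ge V_l$ immediately. Collecting the spline constants (the Lebesgue constant of $\mathcal{S}_n$, the node spacing $|\Theta|/(n-1)$, and the factor $1/n$) into a single per-derivative constant $K_{low}(n,m)$, and then into $\lambda(n)$ after weighting by the $k_m$ and extracting the displayed $1/|\Theta|$, would yield the claimed left-hand inequality.

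The main obstacle is the final variance-versus-second-moment comparison: $V_l$ is defined through the second moment of $\Delta_l$, so $\sum_j v_l(\theta_j)=\sum_j\expec{\Delta_l(\theta_j)^2}-\sum_j(\expec{\Delta_l(\theta_j)})^2$ falls short of $V_l$ by the summed squared level-wise means, and for the derivative terms the same gap appears for the divided differences of the $\Delta_l(\theta_j)$. I would control this shortfall by exploiting the explicit piecewise form of $\phi$ in Eq.~\eqref{eq:phi_form}: since $\Delta_l(\theta)=(1-\tau)^{-1}(\qoi_l-\qoi_{l-1})$ wherever $\theta$ lies below both realisations and vanishes where $\theta$ lies above both, the node attaining the maximum in $V_l$ carries essentially the full squared increment, so its variance is comparable to $V_l$ up to a factor depending only on $n$. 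Making this comparability quantitative, uniformly in $l$, is the delicate part of the argument and is what ultimately fixes $K_{low}(n,m)$ and hence $\lambda(n)$.
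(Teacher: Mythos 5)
Your upper bound is exactly the paper's: both simply invoke the a priori estimate $(\err^{(m)}_s)^2\le K(n,m)\sum_l V_l/N_l$ from Eq.~\eqref{eq:apriori_bound_s} and sum with the weights $k_m$. For the lower bound you take a genuinely different route. The paper first passes from $\linf(\Theta)$ to $\ltwo(\Theta)$ at the cost of the factor $1/|\Theta|$, expands the spline in a nodal basis, writes $\sum_m k_m\norm{\interpm{\Gamma(\thetab)}}^2_{\ltwo(\Theta)}=\Gamma(\thetab)^T\bigl(\sum_m k_m B^{(m)}\bigr)\Gamma(\thetab)$ with $B^{(m)}$ the Gram matrices of the $m$-th derivatives of the basis functions, and lower-bounds this quadratic form by the minimum eigenvalue $\lambda(n)$ of the \emph{combined} matrix times $\norm{\Gamma(\thetab)}^2_{\sltwo}$; only then does it compare $\expec{\norm{\Gamma(\thetab)}^2_{\sltwo}}$ with $\sum_l V_l/N_l$. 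Your nodal-max and divided-difference bounds replace this spectral argument. That substitution has a structural weakness the paper's version avoids: your bounds for $m=1,2$ can vanish on nonzero fluctuations (any $\eta_j$ affine in $j$ kills both divided differences), so no per-derivative estimate of the form $(\err^{(m)}_s)^2\gtrsim\sum_j\expec{\eta_j^2}$ exists for $m\ge 1$, and your argument silently requires $k_0>0$. The paper's bundling of all three derivative orders into one quadratic form is precisely what lets it claim a single positive $\lambda(n)$ for the weighted sum (its nonsingularity assertion likewise really rests on the $m=0$ block, but at least the reduction to $\norm{\Gamma(\thetab)}^2_{\sltwo}$ is done jointly rather than term by term).

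The more serious problem is the step you yourself flag as delicate, and your proposed fix for it does not work. You need $\sum_j\var{\Delta_l(\theta_j)}\ge c(n)\,V_l$ uniformly in $l$, where $V_l=\expec{\max_j\Delta_l(\theta_j)^2}$ is a raw second moment while the left-hand side is a sum of \emph{variances}. Your suggestion --- that the node attaining the maximum ``carries essentially the full squared increment, so its variance is comparable to $V_l$'' --- conflates variance with second moment. If the increment $\qoi_l-\qoi_{l-1}$ has a dominant deterministic component (in the extreme case $\qoi_l-\qoi_{l-1}\equiv c_l\neq 0$), then every $\Delta_l(\theta_j)$ is (nearly) deterministic, $\sum_j\var{\Delta_l(\theta_j)}$ is (nearly) zero, yet $V_l>0$; no constant $K_{low}(n,m)$ depending only on $n$ and $m$ can bridge this, so the structure of $\phi$ in Eq.~\eqref{eq:phi_form} cannot rescue the claim. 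You should be aware that the paper's own final line, $\expec{\norm{\Phie(\thetab)-\expec{\Phie(\thetab)}}^2_{\sltwo}}\ge\sum_l V_l/N_l$, justified there by ``level-wise independence'', glosses over exactly the same mean-versus-fluctuation discrepancy; but the paper asserts it in one step rather than attempting, as you do, to prove it through a maximizing-node argument that can be falsified. To produce a genuinely rigorous lower bound one would either have to redefine $V_l$ as a centred quantity or accept a bound in terms of $\sum_l\bigl(\sum_j\var{\Delta_l(\theta_j)}\bigr)/N_l$ rather than $\sum_l V_l/N_l$.
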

\begin{proof}
The upper bound follows directly from the a priori statistical error bound introduced in Eq.~\eqref{eq:apriori_bound_s} in Section~\ref{sec:apriori}.
The lower bound is derived as follows.
We first define the function $\Gamma \coloneqq \Phie - \expec{\Phie} : \Theta \to \setR$.
We then have:
\begin{align}
(\err^{(m)}_{s})^2 = \expec{\norm{\interpm{\Gamma(\thetab)}}^2_{\linf(\Theta)}} \geq \frac{1}{|\Theta|}\expec{\norm{\interpm{\Gamma(\thetab)}}^2_{\ltwo(\Theta)}},
\end{align}
since $\Theta$ is bounded.
The cubic spline interpolant $\interp{\Gamma(\thetab)}$ over a set of point evaluations $\Gamma(\thetab) \in \setR^n,\;\thetab = \{\theta_1, ... , \theta_n\}$, can be written as a linear combination of suitable basis functions $\psi_i(\theta),\;i\in\{1,...,n\}$:
\begin{align}
\interp{\Gamma(\thetab)} = \sum_{i=1}^n \Gamma(\theta_i) \psi_i(\theta).
\end{align}
This then implies that 
\begin{align}
\norm{\interpm{\Gamma(\thetab)}}^2_{\ltwo(\Theta)} &= \sum_{i,j=1}^n \Gamma(\theta_i) \Gamma(\theta_j) \int_{\Theta} \psi^{(m)}_i(\theta)\psi^{(m)}_j(\theta) d \theta\\
&= \Gamma(\thetab)^T B^{(m)} \Gamma(\thetab),
\end{align}
where $B^{(m)} \in \setR^{n \times n}$ is a matrix whose entries are given by $B^{(m)}_{ij} = \int_{\Theta} \psi^{(m)}_i(\theta)\psi^{(m)}_j(\theta) d \theta$.
It then follows that:
\begin{align}
\sum_{m=0}^2 k_m \expec{\norm{\interpm{\Gamma(\thetab)}}^2_{\ltwo(\Theta)}} &= \expec{ \Gamma(\thetab)^T \left( \sum_{m=0}^2 k_m B^{(m)}\right) \Gamma(\thetab) } \geq \lambda \expec{ \norm{ \Gamma(\thetab) }^2_{\sltwo}},
\end{align}
where $\lambda = \lambda(n) > 0$ denotes the minimum eigenvalue of the positive definite matrix $B = \sum_{m=0}^2 k_m B^{(m)}$, which is non-zero since $B$ is non-singular.
We then finally have that 
\begin{align}
\sum_{m=0}^2 k_m (\err^{(m)}_{s})^2 \geq \frac{\lambda(n)}{|\Theta|} \expec{\norm{\Phie(\thetab)-\expec{\Phie(\thetab)}}^2_{\sltwo}} \geq \frac{\lambda(n)}{|\Theta|} \sum_{l=0}^L \frac{V_l}{N_l},
\end{align}
where in the final inequality, we have used the level-wise independence of samples of the \gls{mlmc} estimator.
This concludes the proof.
\end{proof}
Lemma~\ref{lemma:nested_bounds} shows that the true global squared statistical error $\sum_{m=0}^2 k_m (\err^{(m)}_{s})^2$ can be both lower and upper bounded by a constant times the quantity $\sum_{l=0}^L V_l/N_l$.
Therefore, we expect the rescaling ratio
\begin{align*}
r_e = \frac{\sum_{m=0}^2 k_m (\errest_{s,new}^{(m)})^2}{\sum_{l=0}^L \frac{\hat{V}_l}{N_l}} \approx \frac{\sum_{m=0}^2 k_m (\err_{s}^{(m)})^2}{\sum_{l=0}^L \frac{V_l}{N_l}}
\end{align*} 
to remain bounded independent of the hierarchy $\{N_l\}_{l=0}^L$.

Lastly, since the variances $\hat{V}_l$ are estimated using Monte Carlo sampling, the estimates on finer levels typically have a larger error due to smaller sample-sizes.
In addition, estimates of $\hat{V}_l$ and $\costsim$ may not be available for unexplored levels $l$.
To alleviate this problem, we fit the exponential models $c_{\beta} e^{-\beta l}$ and $c_{\gamma} e^{\gamma l}$ on the variances $\tilde{V}_l$ and costs $\costsim$ respectively for $l\in\{1,...,L\}$ using a least-squares fit, similar to the procedure described in Section~\ref{sec:practical}.
We use the costs and variances predicted by these models instead of $\tilde{V}_l$ and $\costsim$ in Eq.~\eqref{eq:opt_samples} for the optimal level-wise sample sizes.
This stabilises the estimates computed on finer levels, and the models can also be extrapolated for levels where estimates are not available yet.
The expression for the optimal level-wise sample sizes is then given by
\begin{align}
N_l^* = \left\lceil \frac{c_{\beta}}{\epsilon_{s}^2} \sqrt{\frac{e^{-\beta l}}{e^{\gamma l}} \sum_{k=0}^{L^*} \sqrt{e^{(\gamma-\beta) k} }}\right\rceil, \quad 0\le l\le L^*. \label{eq:nl_adapt_model}
\end{align}

\subsection{Assessment of the stability and behaviour of the rescaling ratio} \label{sec:rescaling_ratio}
We now wish to numerically study the behaviour of the rescaling ratio $r_e$.
We therefore consider once again the Poisson problem from Section~\ref{sec:est_tests} and focus instead on the computation of the $70\%$-\gls{cvar}.
It will be shown in Section~\ref{sec:var_cvar_error_bound} that the \gls{mse} of the \gls{pdf}, the \gls{cdf}, the \gls{var} and the \gls{cvar} can all be written in the form of Eq.~\eqref{eq:three_sums_mse} with appropriately chosen values of $k_0$, $k_1$ and $k_2$.
Particularly, Lemma~\ref{lemma:accuracy:derived} in that section derives the values of $k_0$, $k_1$ and $k_2$ for the \gls{var} and the \gls{cvar}.
Fig.~\ref{fig:ratio_study_poisson} shows the variation of $r_e$ for different hierarchy shapes $N_l = N_0 2^{rl},\;l\in\{0,...,L\},\; r \in \{-1,0,1\}$ with $L=5$ and for different interval sizes $\Theta$ centred approximately around the $70\%$-\gls{var}.
For each value of $r$, $\Theta$ and $N_0$, we simulate 20 independent random realizations of the hierarchy and plot the values of $r_e$ along with the sample average over the 20 values of $r_e$.
We observe that for nearly all choices of hierarchy, the rescaling ratio $r_e$ is stable, in the sense that the realizations are clustered about a mean value with a relatively small variance.
However, for hierarchies with very small sample sizes $N_0$ on coarser levels, which contribute proportionately more to the overall statistical error, and for cases with smaller intervals, we observe sporadic large values of $r_e$.
These observations indicate that one needs to select an adequately large sample size and/or interval size in order for the rescaling ratio $r_e$ to be numerically stable. 
It can be seen from Eq.~\eqref{eq:nl_adapt_model} that larger values of $r_e$ in practice lead to larger hierarchies and, hence, more conservative statistical error estimates.
It is therefore important to select the interval $\Theta$ and sample sizes appropriately.

\begin{figure}[ht]
  \begin{subfigure}{0.32\textwidth}
    \includegraphics[width=\textwidth]{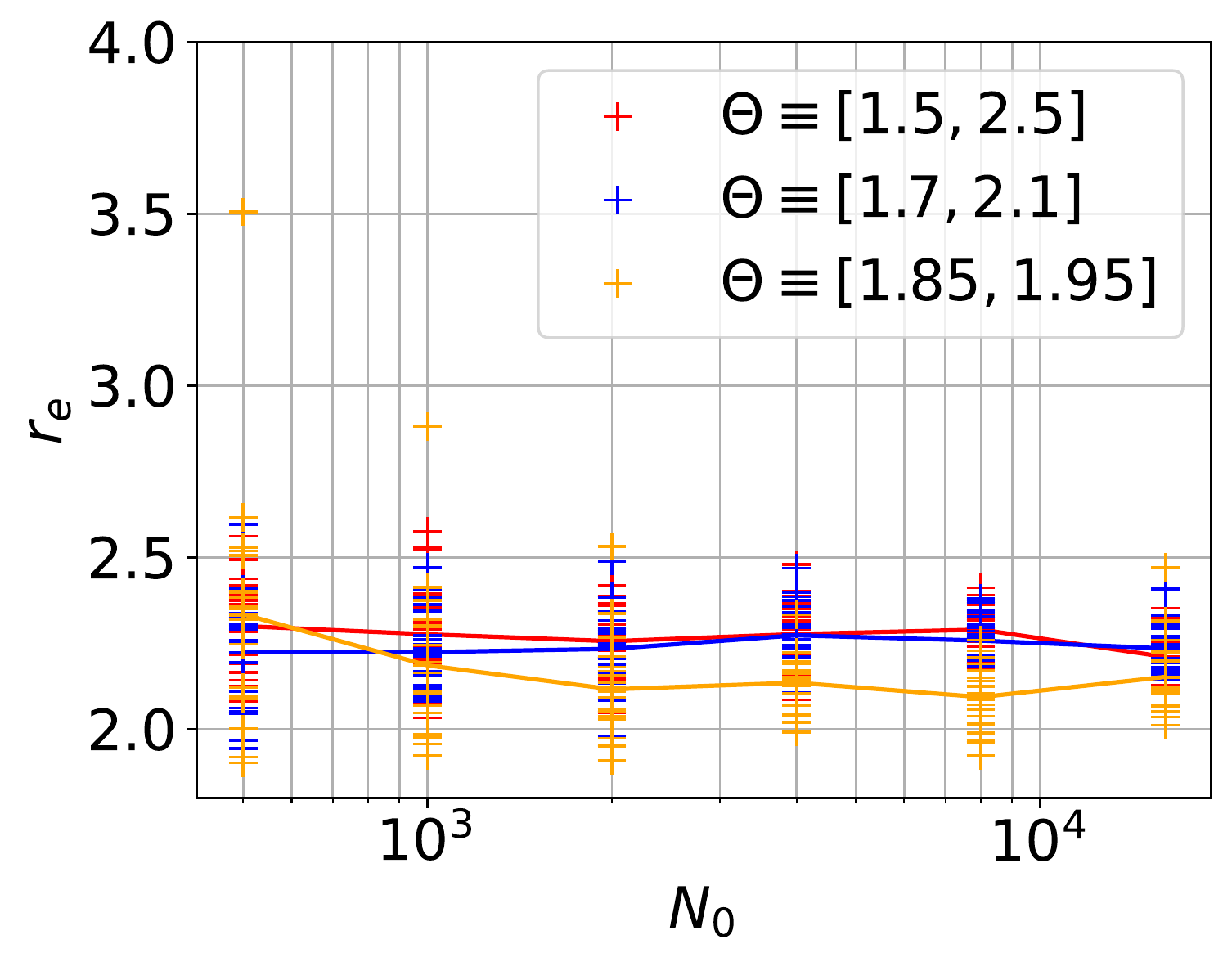}
    \caption{$r=-1$}
  \end{subfigure}
    \begin{subfigure}{0.32\textwidth}
    \includegraphics[width=\textwidth]{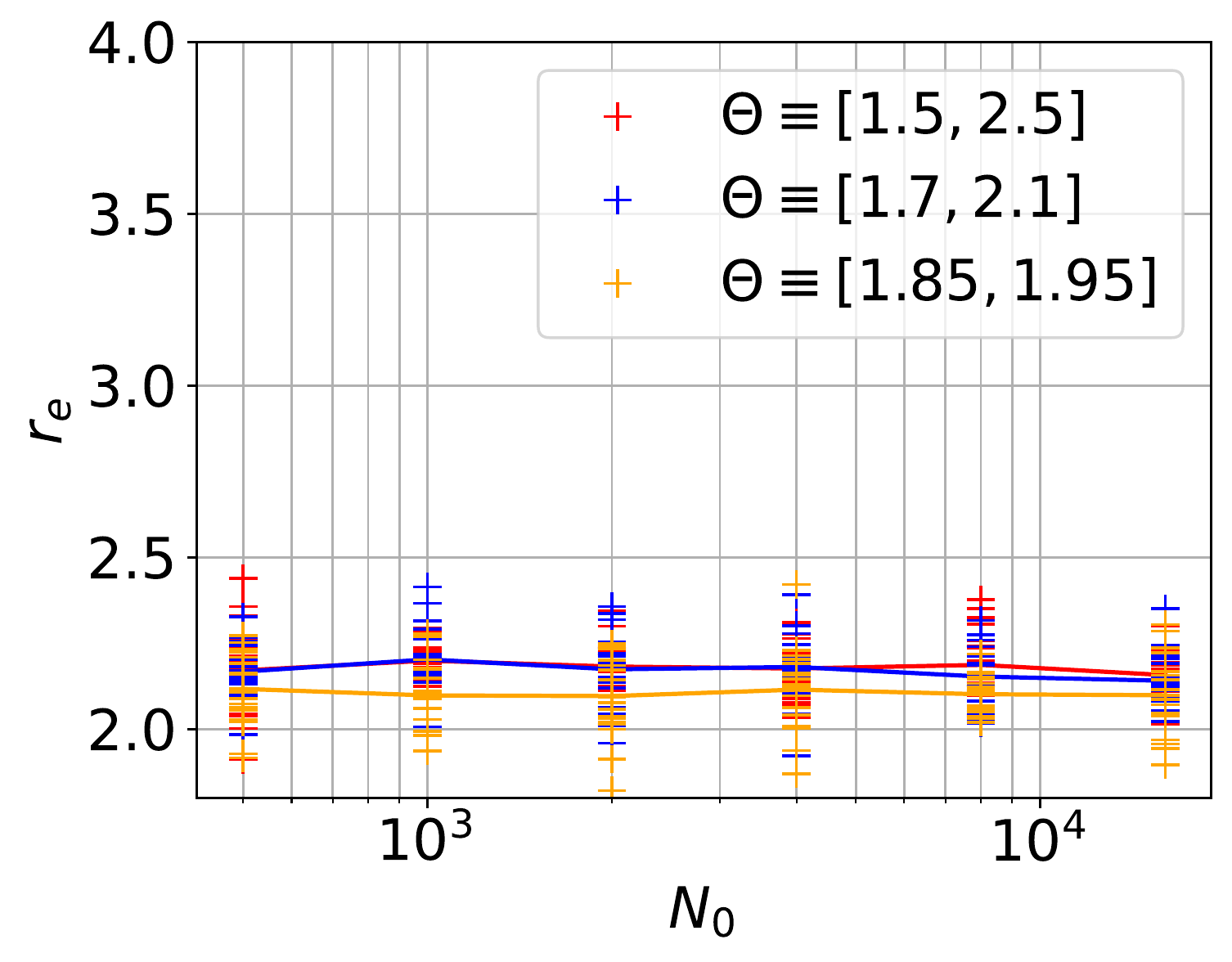}
    \caption{$r=0$}
  \end{subfigure}
    \begin{subfigure}{0.32\textwidth}
    \includegraphics[width=\textwidth]{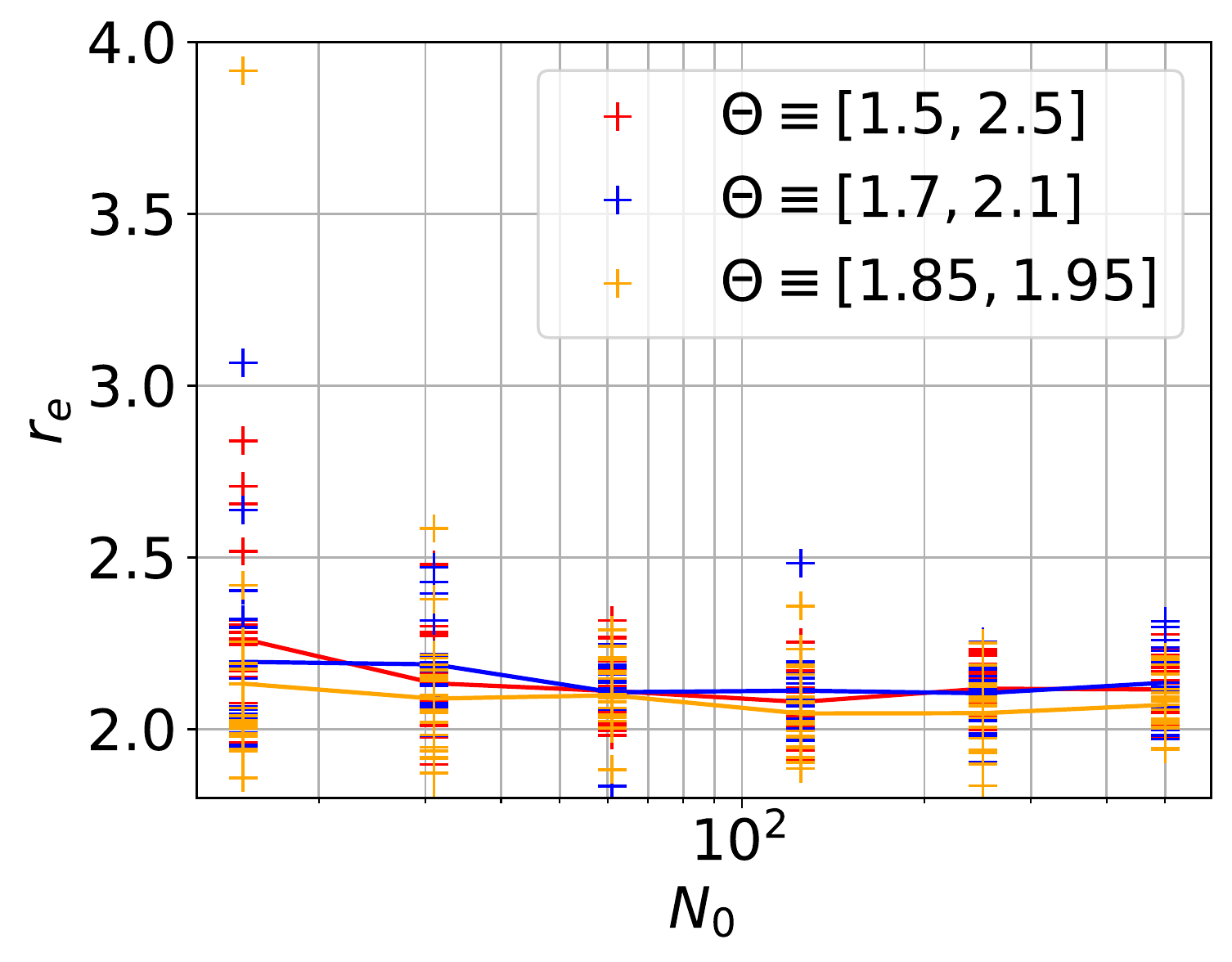}
    \caption{$r=1$}
  \end{subfigure}
  \caption{Behaviour of $r_e$ for different hierarchy shapes and interval sizes for Poisson problem}
  \label{fig:ratio_study_poisson}
\end{figure}
\FloatBarrier

\subsection{Adaptive \gls{mlmc} algorithm} \label{sec:mlmc_alg}
It was shown earlier that the cost optimal number of interpolation points $n$, level-wise sample sizes $N_l$ and the number of levels $L$ could be calculated according to Eqs.~\eqref{eq:interp_points_all}, \eqref{eq:all_biases_adapt} and \eqref{eq:nl_adapt_model} respectively, with knowledge of the quantities $\hat{b}_{l,new}^{(m)}$, $\hat{V}_l$ and $\costsim$, their corresponding decay rates and constants, as well as $\errest_{s,new}^{(m)}$.
Since estimates of these quantities are computed using a posteriori computations and require that samples have already been computed, we propose the use of a variation of the \gls{cmlmc} algorithm introduced in \cite{collier2014continuation} and further adapted for complex simulation problems in \cite{pisaroni2016continuation}.
The \gls{cmlmc} algorithm begins with a small pre-set initial hierarchy, typically called the ``screening'' hierarchy, and a geometrically decreasing sequence of tolerances $\epsilon_0^2 > \epsilon_1^2 > \dots > \epsilon_d^2 = \epsilon^2$, where $\epsilon^2$ is the target tolerance to be achieved on the \gls{mse}.
The method then adapts for the tolerance $\epsilon^2_j, \; j \in \{1,...,d\}$ based on the estimates from the hierarchy tuned on $\epsilon^2_{j-1}$.
For $\epsilon_0$, one uses the estimates from the screening hierarchy.
The advantage of this method is that the estimators $b^{(m)}_l$, $\hat{V}_l$,  $\errest_{s,new}^{(m)}$ and $\costsim$ are successively improved.
This makes the algorithm more robust to inaccurate initial estimates from the screening hierarchy, since the screening hierarchy is typically selected to be much smaller than the optimal hierarchy.
The algorithmic description of the \gls{cmlmc} algorithm is presented in Alg.~\ref{alg:xmc} for a general statistic $s_{\tau}$ whose \gls{mse} decomposes as in Eq.~\eqref{eq:general_mse}.
The reader is referred to \cite{collier2014continuation} for a more detailed exposition. 

\begin{algorithm}[ht]
\begin{algorithmic}
	\STATE Input: Target tolerance $\epsilon > 0$, Number of \gls{cmlmc} iterations $d \in \setN$, Tolerance refinement ratios $\lambda > \kappa > 1$, Error parameters $k_0$, $k_1$ and $ k_2$. Set $j=1$, $\epsilon_a = \epsilon_0$.
	\STATE Launch screening hierarchy.
	\STATE Compute estimators $\hat{b}_{l,new}^{(m)}, \hat{V}_l, \costsim, \errest_{s,new}^{(m)}$ and model parameters $c_{\alpha}, c_{\beta},c_{\gamma}, \alpha, \beta, \gamma$.
	\STATE Compute $\mse{\statest}$ based on Eq.~\eqref{eq:general_mse}.
	\WHILE{$j\leq d$ \textbf{or} $\mse{\statest}\geq \epsilon^2$}
		\STATE Launch hierarchy with $n^*(\epsilon_a)$, $L^*(\epsilon_a)$, $\{N_l^*(\epsilon_a)\}_{l=0}^{L^*}$ computed based on Eqs.~\eqref{eq:interp_points_all},~\eqref{eq:all_biases_adapt} and~\eqref{eq:nl_adapt_model}
		\STATE \textbf{if} $j\leq d$ $\{$ Set $\epsilon_a = \epsilon \lambda^{(d-j)}$ $\}$ \textbf{else} $\{$ Set $\epsilon_a = \epsilon \kappa^{(d-j)}$ $\}$
		\STATE Compute estimators $\hat{b}_{l,new}^{(m)}, \hat{V}_l, \costsim, \errest_{s,new}^{(m)}$ and model parameters $c_{\alpha}, c_{\beta},c_{\gamma}, \alpha, \beta, \gamma$
		\STATE Compute $\mse{\statest}$ based on Eq.~\eqref{eq:general_mse}
		\STATE Update $j \leftarrow j + 1$
	\ENDWHILE

\end{algorithmic}
\caption{\gls{cmlmc} Algorithm}
\label{alg:xmc}
\end{algorithm}

\subsection{Error bounds on the \gls{pdf}, the \gls{cdf}, the\gls{var} and the \gls{cvar}} \label{sec:var_cvar_error_bound}
It follows directly from Eq.~\eqref{eq:three_sums_mse} that the \gls{mse} of the \gls{pdf} $f_{\qoi}(\theta)$ and the \gls{cdf} $F_{\qoi}(\theta)$ can be written as follows:
\begin{align}
\mse{\hat{F}_{\qoi}} = (1-\tau)^2 \mse{\Phieone}, \qquad \mse{\hat{f}_{\qoi}} = (1-\tau)^2 \mse{\Phietwo},
\end{align}
where $\hat{F}_{\qoi}(\theta) \coloneqq \tau + (1-\tau) \Phieone$ and $\hat{f}_{\qoi}(\theta) \coloneqq (1-\tau) \Phietwo$.
As a result, Eq.~\eqref{eq:three_sums_mse} can be used to bound the error on these quantities by selecting $k_1$ and $k_2$ appropriately.
We now present a simple result to demonstrate that the general form in Eq.~\eqref{eq:three_sums_mse} can also be used also to bound the \gls{mse} of the \gls{var} and the \gls{cvar}.
\begin{lemma}\label{lemma:accuracy:derived}
Let $\Phie$ be the multilevel Monte Carlo estimator defined in Eq.~\eqref{eq:mlmc_est_1} to approximate $\Phi$. 
If there exist $\quantest,\quant \in\Theta$ such that $\Phieone(\quantest)=\Phione(\quant)=0$ for some given $\tau\in\setR$, then it holds that
\begin{equation}
\expec{ \left|\quantest - \quant\right|^2}\le \norm{ \frac{1}{\Phi^{(2)}}}_{\linf([\quantest, \quant])}^2\mse{\hat{\Phi}_L^{(1)}},\label{eq:err_var}
\end{equation}
as well as that
\begin{equation}
\expec{\left|\cvarest - \cvar\right|^2}\le 
2 \norm{\Phi^{(1)}}_{\linf([\quantest, \quant])}^2 \norm{ \frac{1}{\Phi^{(2)}}}_{\linf([\quantest, \quant])}^2\mse{\hat{\Phi}_L^{(1)}} + 2\mse{\Phie},\label{eq:err_cvar}
\end{equation}
where $\cvarest = \Phie(\quantest)$ and $\cvar= \Phi(\quant)$.
\end{lemma}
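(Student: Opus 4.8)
The plan is to establish both bounds through the mean value theorem applied to $\Phi^{(1)}$, exploiting the two stationarity conditions $\Phi^{(1)}(\quant)=0$ and $\Phieone(\quantest)=0$ from the hypothesis. For the \gls{var} bound~\eqref{eq:err_var}, I would apply the mean value theorem to $\Phi^{(1)}$ on the interval with endpoints $\quant$ and $\quantest$, obtaining $\Phi^{(1)}(\quantest)-\Phi^{(1)}(\quant)=\Phi^{(2)}(\zeta)(\quantest-\quant)$ for some $\zeta\in[\quantest,\quant]$. Since $\Phi^{(1)}(\quant)=0$ and $\Phieone(\quantest)=0$, the left-hand side equals $\Phi^{(1)}(\quantest)-\Phieone(\quantest)$, whence
\begin{align*}
\quantest-\quant=\frac{\Phi^{(1)}(\quantest)-\Phieone(\quantest)}{\Phi^{(2)}(\zeta)}.
\end{align*}
Taking absolute values, bounding the numerator by $\norm{\Phi^{(1)}-\Phieone}_{\linf(\Theta)}$ (using $\quantest\in\Theta$) and the reciprocal denominator by $\norm{1/\Phi^{(2)}}_{\linf([\quantest,\quant])}$ (using $\zeta\in[\quantest,\quant]\subset\Theta$), yields a pathwise estimate on $\abs{\quantest-\quant}$. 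Squaring and taking the expectation, together with $\mse{\Phieone}=\expec{\norm{\Phi^{(1)}-\Phieone}_{\linf(\Theta)}^2}$, produces~\eqref{eq:err_var}.

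For the \gls{cvar} bound~\eqref{eq:err_cvar}, I would decompose the error as
\begin{align*}
\cvarest-\cvar=\Phie(\quantest)-\Phi(\quant)=\bigl[\Phie(\quantest)-\Phi(\quantest)\bigr]+\bigl[\Phi(\quantest)-\Phi(\quant)\bigr].
\end{align*}
The first bracket is bounded pathwise by $\norm{\Phie-\Phi}_{\linf(\Theta)}$. For the second, the mean value theorem applied to $\Phi$ gives $\Phi(\quantest)-\Phi(\quant)=\Phi^{(1)}(\eta)(\quantest-\quant)$ for some $\eta\in[\quantest,\quant]$, which I would bound by $\norm{\Phi^{(1)}}_{\linf([\quantest,\quant])}\abs{\quantest-\quant}$ and then substitute the pathwise \gls{var} estimate from the first part. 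Applying $(a+b)^2\le 2a^2+2b^2$ and taking expectations yields exactly the two contributions $2\norm{\Phi^{(1)}}_{\linf([\quantest,\quant])}^2\norm{1/\Phi^{(2)}}_{\linf([\quantest,\quant])}^2\mse{\Phieone}$ and $2\mse{\Phie}$.

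The main subtlety I expect is that the $\linf([\quantest,\quant])$-norms of $1/\Phi^{(2)}$ and $\Phi^{(1)}$ depend on the \emph{random} endpoint $\quantest$, so that factoring them out of the expectation in the final step is not literally rigorous as written. I would handle this by keeping the entire argument pathwise up to the squared-error inequality, and only then either treat these norm factors as deterministic (as the statement's notation implicitly does) or, for a fully rigorous version, dominate them uniformly by the corresponding norms over a fixed subinterval of $\Theta$ containing $\quant$, using $\quantest,\quant\in\Theta$. The existence of $\zeta,\eta$ and the non-vanishing of $\Phi^{(2)}(\zeta)$ needed to divide are guaranteed by $\Phi\in C^4(\Theta)$ together with the strict convexity of the \gls{cvar} objective near its minimizer, which ensures $\Phi^{(2)}>0$ there.
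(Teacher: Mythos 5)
Your proposal is correct and follows essentially the same route as the paper's proof: the mean value theorem (Taylor's theorem) applied to $\Phi^{(1)}$ combined with the two stationarity conditions for the \gls{var} bound, and the triangle-inequality splitting $\Phie(\quantest)-\Phi(\quant)=[\Phie(\quantest)-\Phi(\quantest)]+[\Phi(\quantest)-\Phi(\quant)]$ followed by $(a+b)^2\le 2a^2+2b^2$ for the \gls{cvar} bound. Your closing remark about the $\linf([\quantest,\quant])$-norms depending on the random endpoint $\quantest$ is a fair point that the paper's proof passes over silently, but it does not alter the argument.
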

\begin{proof}
Let $\quantest,\quant\in\Theta$ be such that $\hat{\Phi}^{(1)}(\quantest) =  \Phi^{(1)}(\quant) = 0$.
It then follows from Taylor's theorem that
\begin{equation}
\lvert\quantest - \quant  \rvert \lvert\Phi^{(2)}(\xi) \rvert = \lvert \hat{\Phi}^{(1)}_L(\quantest) - \Phi^{(1)}(\quantest)  \rvert \label{eq:var_bound}
\end{equation}
for some $\xi$ between $\quantest$ and $\quant$, so that the first claim follows. 
The second claim follows from the first claim upon noting that
\begin{align}
\lvert \cvarest-\cvar \rvert = \bigl\vert \hat{\Phi}_L(\quantest) - \Phi(\quant) \bigr\vert &\le \bigl\vert \Phi(\quantest) - \Phi(\quant) \bigr\vert + \bigl\vert \hat{\Phi}_L(\quantest) - \Phi(\quantest) \bigr\vert \nonumber \\
&\le {\left\vert \Phi^{(1)}(\zeta)\right\vert} \lvert\quantest - \quant  \rvert +  \bigl\vert \hat{\Phi}_L(\quantest) - \Phi(\quantest) \bigr\vert
\end{align}
in view of Taylor's theorem for some $\zeta$ between $\quantest$ and $\quant$.
\end{proof}
From Lemma~\ref{lemma:accuracy:derived}, it is evident that $\quant$ and $\quantest$ represent the true and estimated \gls{var}, while $\cvar$ and $\cvarest$ represent the true and estimated \gls{cvar}, respectively. 
We can then derive \gls{mse} bounds for the \gls{var} and the \gls{cvar} by setting the constants $k_0$, $k_1$ and $k_2$ in Eq.~\eqref{eq:general_mse} based on Lemma~\ref{lemma:accuracy:derived}.
A closed form expression for the solution of Eq.~\eqref{eq:interp_points_all} for the number of interpolation points can be derived since Lemma~\ref{lemma:accuracy:derived} implies that some of the constants $k_0, k_1$ and $k_2$ are zero for each of the \gls{var} and the \gls{cvar}.
For the number of levels $L$ and level-wise sample sizes $N_l$, the methods described earlier in this section can be directly used with the appropriate values of the constants $k_0, k_1$ and $k_2$.
Since we expect the interval $[\quant, \quantest]$ to be small, we replace each of the constants $\norm{\Phi^{(1)}}_{\linf([\quant, \quantest])}$ with $|\Phie^{(1)}(\quantest)|$ and $\norm{1/\Phi^{(2)}}_{\linf([\quant, \quantest])}$ with $|1/\Phie^{(2)}(\quantest)|$ in practice.
Lastly, we note that although $k_0$, $k_1$ and $k_2$ in Eq.~\eqref{eq:general_mse} and in Algorithm~\ref{alg:xmc} are constants, we use the function estimator $\Phiem$ to estimate and update them iteratively within the continuation framework in Algorithm~\ref{alg:xmc}.

\section{Numerical Experiments}\label{sec:results}
We now demonstrate the performance of the above combination of novel error estimators, adaptive strategy and \gls{cmlmc} algorithm on a set of test cases. 
Firstly, we consider again the simple Poisson problem introduced in Section~\ref{sec:est_tests}.
We then study a problem of options contract pricing using the Black-Scholes \gls{sde}. 
Lastly, we study a case of laminar steady fluid flow over a cylinder placed in a channel governed by the Navier-Stokes equations, which demonstrates the methodology on a more applied problem.

\subsection{Poisson Problem}\label{sec:poisson}
We consider the same random Poisson equation in two spatial dimensions described in Section~\ref{sec:est_tests}.
We recall that we have an explicit dependence of $\qoi$ on the random input $\xi$ for this example and hence, it is straightforward to compute reference values for the \gls{var} and \gls{cvar} of different significances (See Tab.~\ref{tab:ellip2d:quantiles}).

The details of the input uncertainties, numerical scheme and discretisation hierarchy are described in Section~\ref{sec:est_tests}.
We are interested particularly in the estimation of the \gls{cvar} with significance $\tau = 0.7$, and hence consider the interval $\Theta = [1.5,2.5]$ as before.
The \gls{mlmc} estimator proposed in Section~\ref{sec:mlmc_par_est} is used to estimate the parametric expectation $\Phi$.
The \gls{cvar} estimate is computed from $\Phie$ as described in Eq.~\eqref{eq:stat_defs}.
The hierarchy is adaptively calibrated as described in Section~\ref{sec:adaptivity} based on the novel error estimators described in Section~\ref{sec:novel}.
The \gls{cmlmc} algorithm described in Section~\ref{sec:mlmc_alg} is then used to successively improve the estimates required to compute the optimal hierarchy with $\lambda = 1.5$ and $\kappa = 1.1$ in Algorithm~\ref{alg:xmc}. 
To compute the statistical error estimate, we initially use $N_{bs}=100$ bootstrapped samples and then adapt $N_{bs}$ according to the procedure described in Section~\ref{sec:stat_err_est} to obtain a bootstrap error smaller than $1\%$ of the squared statistical error tolerance.
The above combination of problem simulations, \gls{mlmc} and error estimation have been implemented in the Python package XMC \cite{ExaQUte_XMC}.

To assess the robustness of the novel error estimators, a reliability study is conducted.
For a given tolerance, the entire \gls{mlmc} simulation is repeated 20 times independently.
For each simulation, an estimate of the \gls{cvar} and a corresponding estimate of the \gls{mse} are produced.
Since the true value of the \gls{cvar} is known for this example, we compute the corresponding true squared errors.
We expect the \gls{mse} estimates to be approximately equal to the sample average of the true squared errors, which we take here as the reference value for the true \gls{mse}.
The results of this reliability study are shown in Fig.~\ref{fig:poisson_reliability}.
As can be seen from the figure, the error estimates bound the true error on the \gls{cvar} and lead to practically computable \gls{mlmc} hierarchies for the \gls{cvar}.
For all the tolerances tested, the squared error estimate is not larger than 10 times the squared true error, which we consider acceptable for practical applications (cf. Section~\ref{sec:est_tests}).

To verify the predictions of Proposition~\ref{thm:complexity:MLMC:spline}, we also compute the cost of each \gls{mlmc} simulation according to Eq.~\eqref{eq:simple_cost_model}.
The time taken to compute each of the $N_l$ samples is measured, and $\costsim$ is taken to be their average.
The cost is computed using the level-wise sample sizes corresponding to the final iteration of the \gls{cmlmc} that satisfies the target tolerance and averaged over the 20 repetitions of the algorithm.
The results are summarised in Fig.~\ref{fig:poisson_complexity}, where the average cost over all the simulations for each final \gls{cmlmc} tolerance is plotted versus the final tolerance.

To compare the \gls{mlmc} estimator with the Monte Carlo method, we propose the following Monte Carlo estimator:
\begin{align}
\hat{\Phi}^{(m)}_{L,mc}(\theta) &\coloneqq \interpm{\hat{\Phi}_{L,mc}(\thetab)},\quad \theta \in \Theta, \nonumber\\
\text{where } \hat{\Phi}_{L,mc}(\theta_j) &\coloneqq \frac{1}{N} \sum_{i=1}^N \phi(\theta_j, \qoi_L^{(i)}). \label{eq:phi_mc}
\end{align}
To estimate the \gls{mse} of the \gls{cvar} computed from the estimator in Eq.~\eqref{eq:phi_mc}, we utilise the general \gls{mse} form in Eq.~\eqref{eq:general_mse} but for an \gls{mlmc} estimator with a single level, i.e., without the telescoping summation term in Eq.~\eqref{eq:mlmc_est_1}.
The constants $k_0$, $k_1$ and $k_2$ are chosen according to the results of Lemma~\ref{lemma:accuracy:derived} for the \gls{cvar}.
We now describe a procedure to select the parameters of this estimator such that a prescribed tolerance can be obtained on the corresponding \gls{mse} of the \gls{cvar}.
The number of interpolation points $n$ used in the Monte Carlo estimator is selected to be the same as for the \gls{mlmc} estimator, since the \gls{cmlmc} method leaves the number of interpolation points unchanged for all tested tolerances.
The discretisation level $L$ is selected to be the same as the finest level predicted by the \gls{cmlmc} algorithm for the \gls{mlmc} estimator over all repetitions of the \gls{cmlmc} algorithm for the given tolerance, although nearly all repetitions predict the same level $L$ for a given tolerance.
To predict the correct sample size $N$, we first note that the squared statistical error term of the \gls{cvar} for the Monte Carlo estimator in Eq.~\eqref{eq:phi_mc} contains only the single level contribution and hence, is inversely proportional to the sample size $N$, where the numerator is independent of $N$ and can be estimated using a sample variance estimator.
The sample size is then selected such that this statistical error term satisfies the same squared statistical error tolerance $\epsilon^2_s$ as the \gls{mlmc} estimator. 
The cost can then be computed in a straightforward manner from $N$ and the cost of a single simulation at level $L$.
The estimated Monte Carlo cost is shown as well in Fig.~\ref{fig:poisson_complexity}, together with a least squares fit rate over the estimated Monte Carlo cost. 
We observe that the predictions made by Proposition~\ref{thm:complexity:MLMC:spline} are observed here, namely that the \gls{mlmc} cost grows as $\mathcal{O}(\epsilon^{-2})$ and that the Monte Carlo cost grows as $\mathcal{O}(\epsilon^{-3})$ for a prescribed tolerance $\epsilon^2$ on the \gls{mse} of the \gls{cvar}.

\begin{figure}[H]
  \centering
  \begin{subfigure}{0.45\textwidth}
    \centering
    \includegraphics[width=\textwidth]{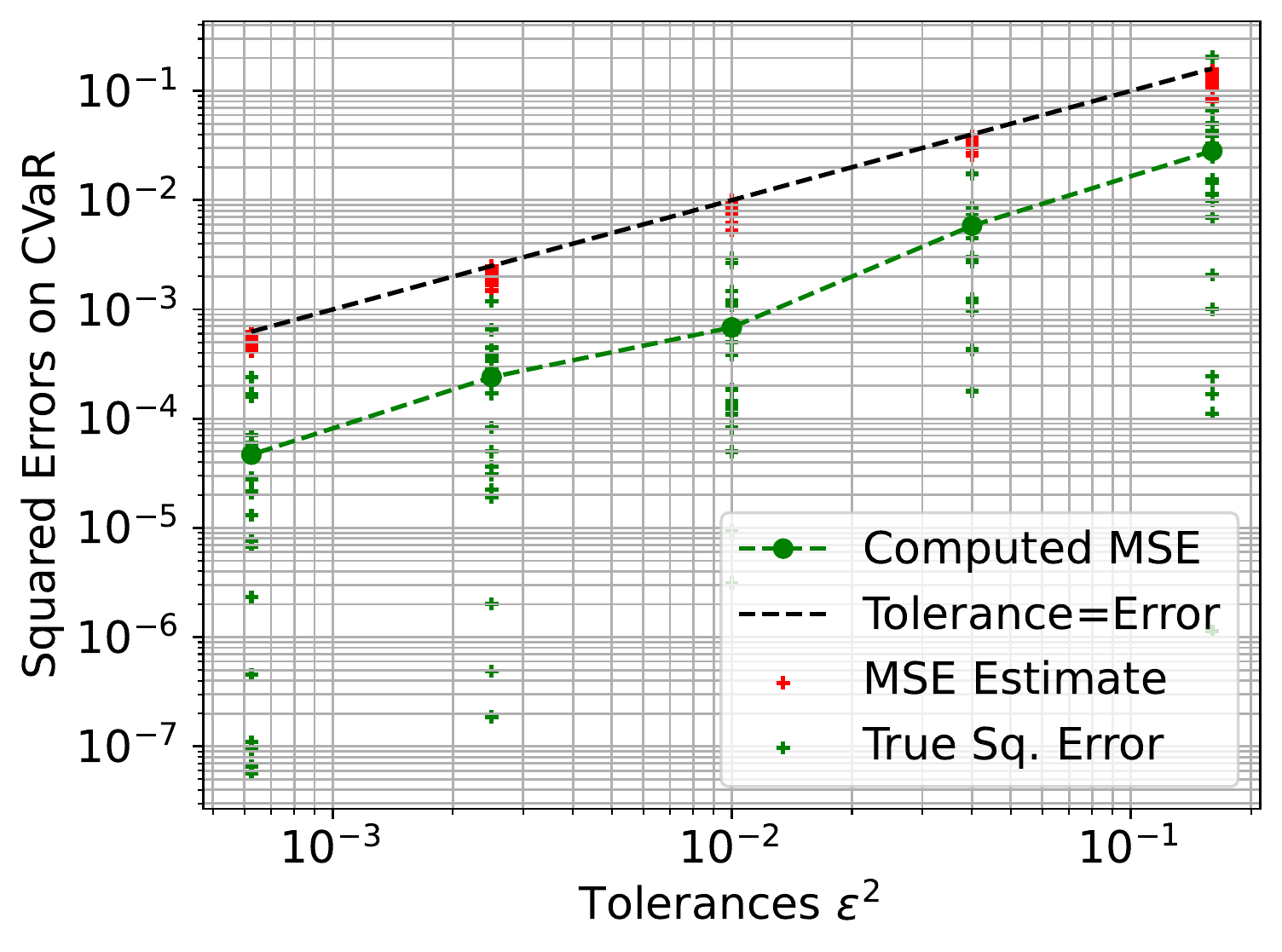}
    \caption{Reliability of error estimator}
    \label{fig:poisson_reliability}
  \end{subfigure}
  \begin{subfigure}{0.45\textwidth}
    \centering
    \includegraphics[width=\textwidth]{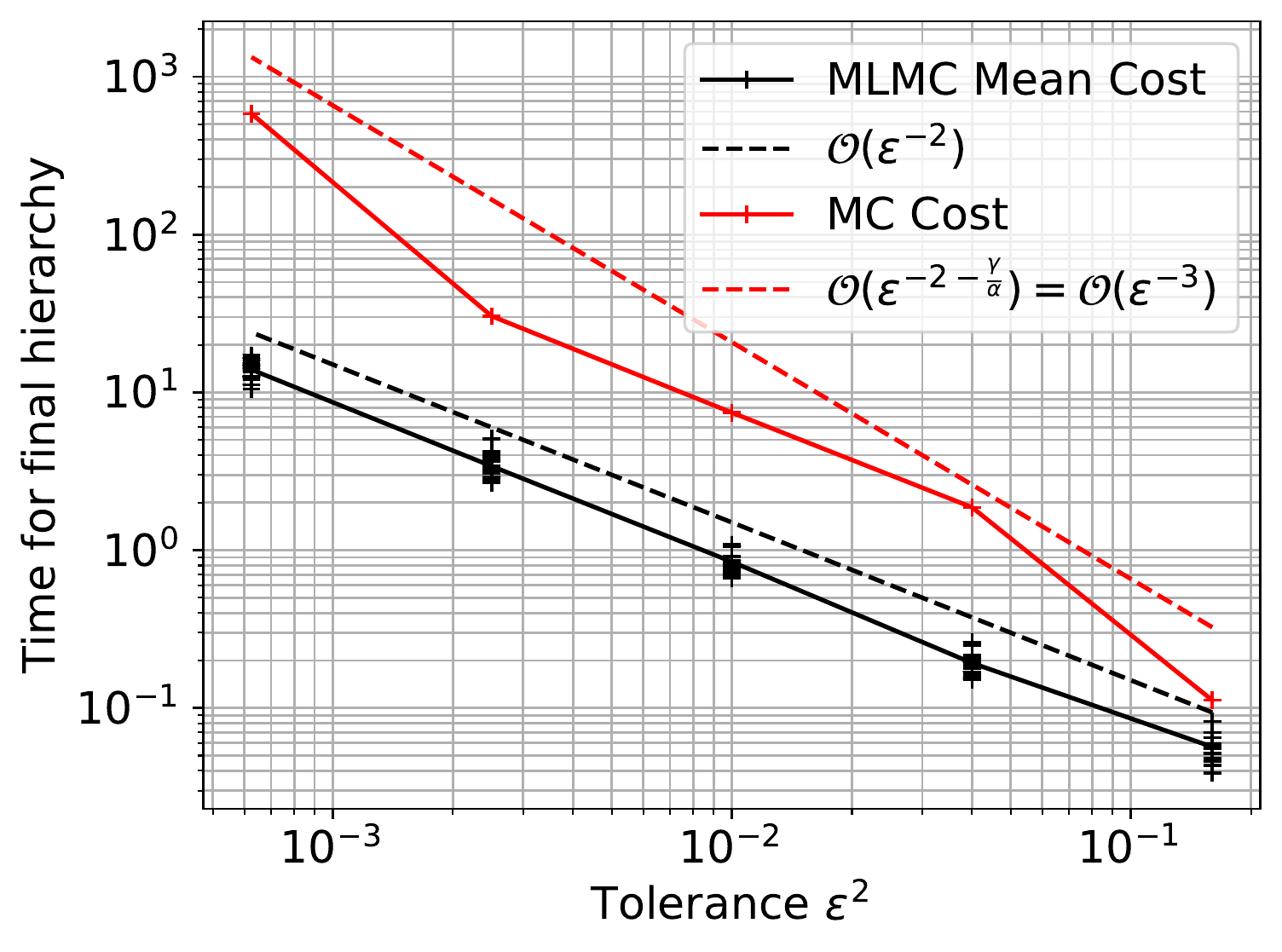}
    \caption{Complexity behaviour}
    \label{fig:poisson_complexity}
  \end{subfigure}
  \caption{Summary of results for the Poisson problem}
  \label{fig:poisson_performance}
\end{figure}

Fig.~\ref{fig:phim_var_error_poisson} compares the true and estimated squared errors on $\Phim, \; m \in \{0,1,2\}$ and the \gls{var} for the same set of simulations as in Fig.~\ref{fig:poisson_performance} plotted against the prescribed tolerance used in the \gls{cvar} calculation.
As can be seen in this figure, a tight bound is obtained on $\Phim, m \in \{0,1,2\}$, with a comparatively more conservative estimate on the \gls{var}.
The reason for this discrepancy can be explained with Lemma~\ref{lemma:accuracy:derived}; although the equality in Eq.~\eqref{eq:var_bound} holds true for the function derivative evaluated at the \gls{var}, Eq.~\eqref{eq:err_var} in turn bounds this with the $\linf$ norm over the entire interval $\Theta$.
Finally, Fig.~\ref{fig:poisson_var} shows the \gls{mse} of the \gls{var} estimated from the same \gls{qoi} realizations corresponding to the optimal hierarchy for the interval $\Theta = [1.5,2.5]$, but using a smaller interval $\Theta = [1.87,1.89]$ such that the $70\%$-\gls{var} is contained within the interval.
It can be seen that choosing a smaller interval around the \gls{var} results in a tighter bound on the true \gls{mse}.
However, this choice needs to be balanced with the numerical stability of the rescaling ratio $r_e$ in Eq.~\eqref{eq:vl_rescaling_single} (cf. discussion in Section~\ref{sec:rescaling_ratio}).
The choice of interval $\Theta$ hence may have an important effect on the tightness of the error bounds on the \gls{var} and \gls{cvar}.
In practical applications, however, one does not know a priori the location of the \gls{var}.
For the purposes of this study, we only explore fixed intervals $\Theta$ and find that the resultant hierarchies are practically computable, leading to effective \gls{mlmc} estimators.
In future works, we plan to explore algorithms that adaptively select $\Theta$.

\begin{figure}[t]
\centering
\begin{subfigure}{.48\textwidth}
    \centering
    \includegraphics[width=\textwidth]{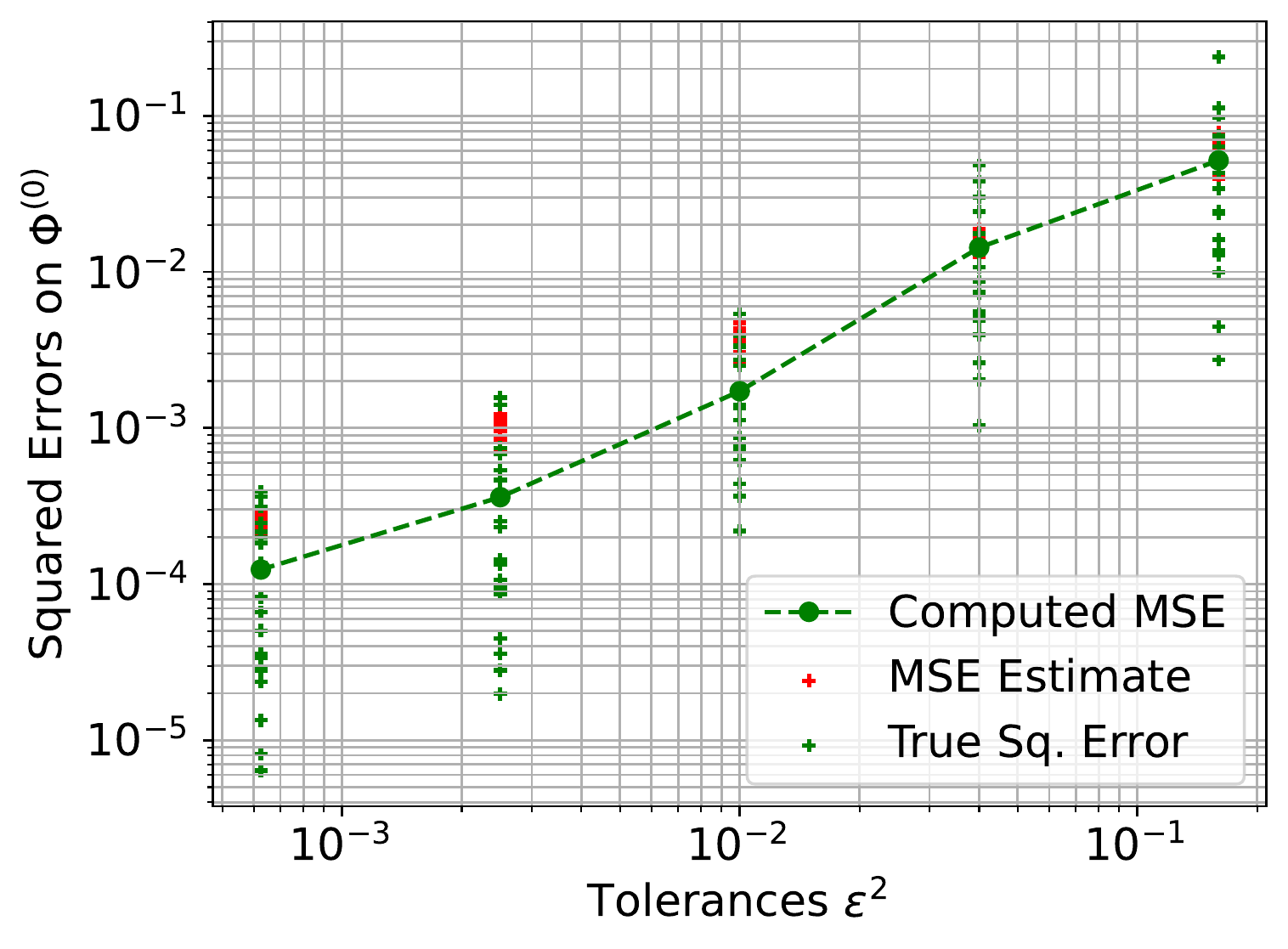}
    \caption{$m=0$}
\end{subfigure}
\begin{subfigure}{.48\textwidth}
    \centering
    \includegraphics[width=\textwidth]{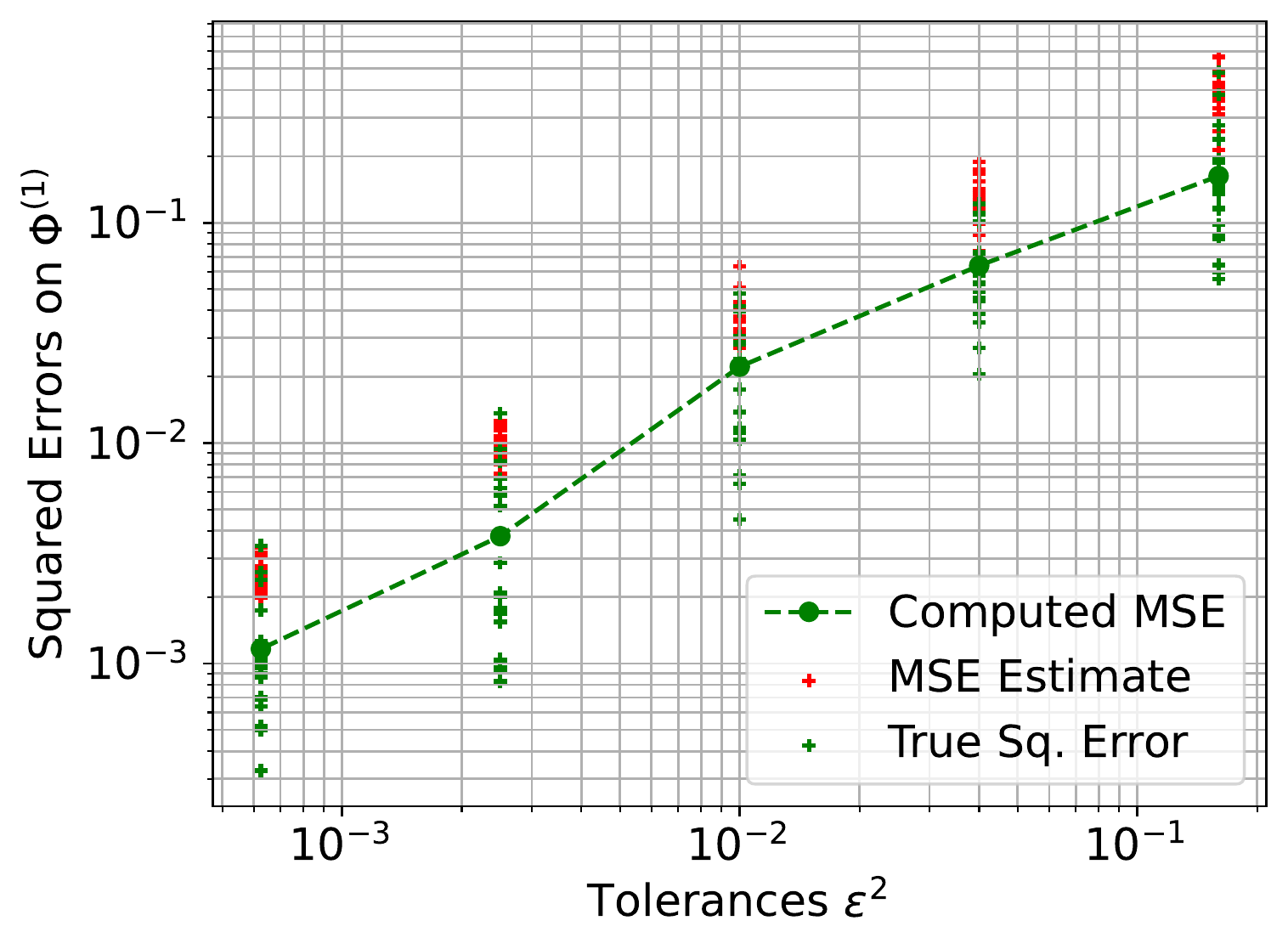}
    \caption{$m=1$}
\end{subfigure}

\begin{subfigure}{.48\textwidth}
    \centering
    \includegraphics[width=\textwidth]{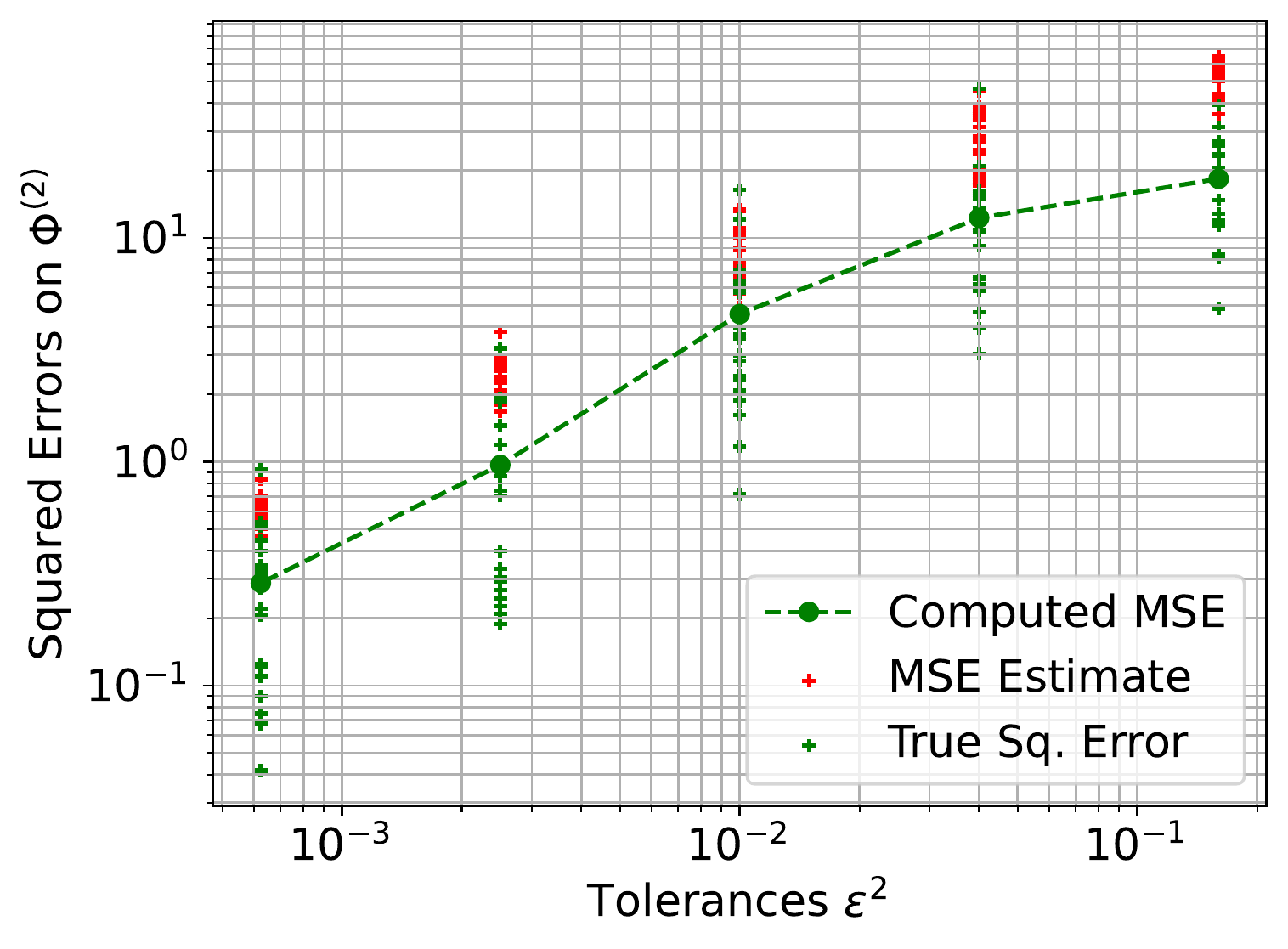}
    \caption{$m=2$}
\end{subfigure}
\begin{subfigure}{.48\textwidth}
    \centering
    \includegraphics[width=\textwidth]{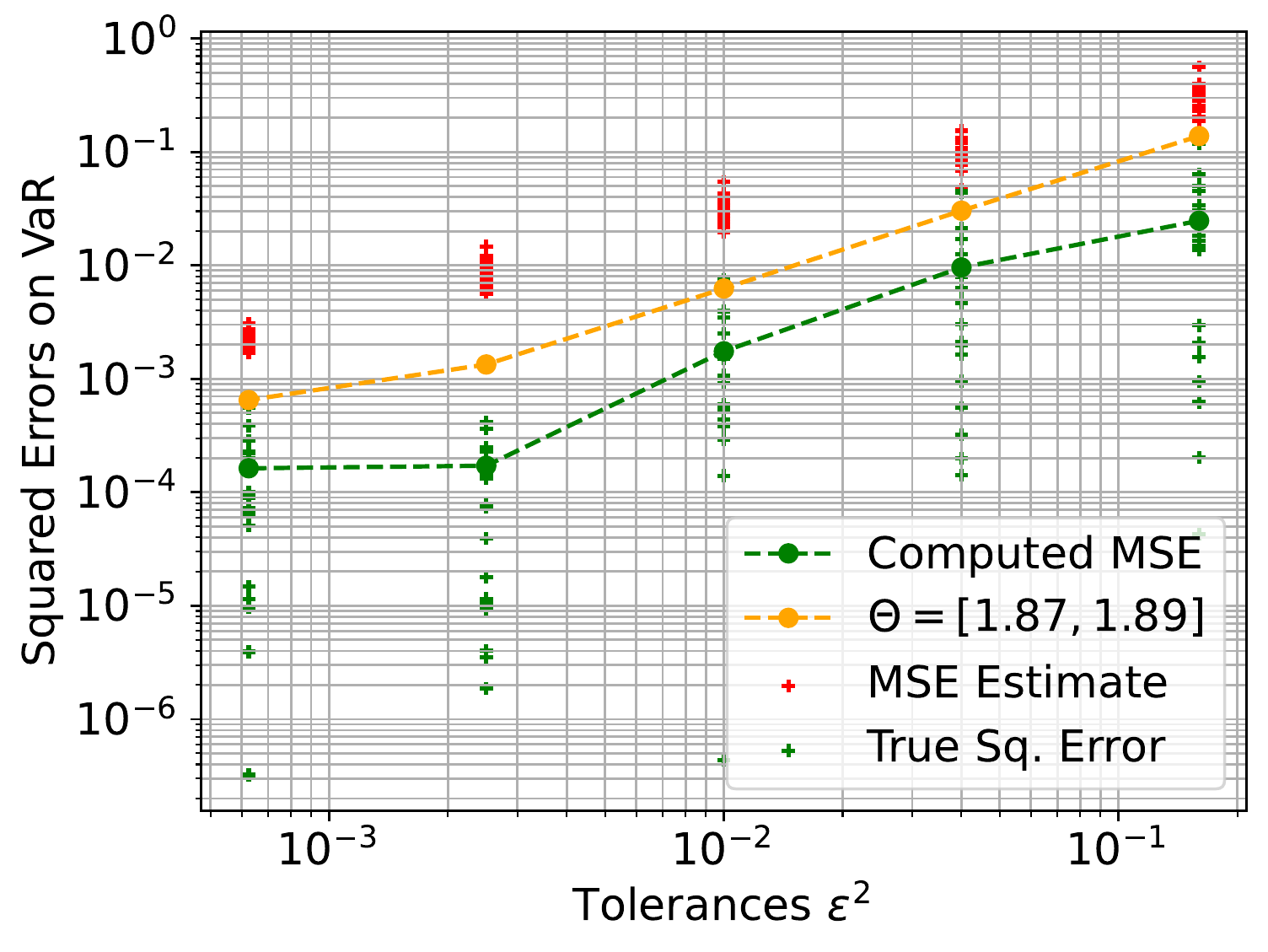}
    \caption{\gls{var}}
    \label{fig:poisson_var}
\end{subfigure}
\caption{Reliability of error estimator for $\Phim$ and \gls{var} for the Poisson problem}
\label{fig:phim_var_error_poisson}
\end{figure}

In Fig.~\ref{fig:poisson_complexity_all_stats}, we conduct a similar complexity study as the one shown in Fig.~\ref{fig:poisson_complexity}.
We adapt the \gls{mlmc} hierarchy to achieve a particular relative tolerance on the \gls{mse} of each of $\Phi^{(m)}, m \in \{0,1,2\}$, as well as the \gls{var} and \gls{cvar}.
The relative error $\err_{r}^{(m)}$ of $\Phiem$ and the relative errors $\err_{r}^{\quant}$ and $\err_{r}^{\cvar}$ of the \gls{var} and the \gls{cvar} respectively, are computed as follows:
\begin{align}
(\err_{r}^{(m)})^2 = \frac{\mse{\Phiem}}{\norm{\Phiem}^2_{\linf(\Theta)}}, \qquad (\err_{r}^{\quantest})^2 = \frac{\mse{\quantest}}{\quantest^2},\qquad (\err_{r}^{\cvarest})^2 &= \frac{\mse{\cvarest}}{\cvarest^2}.
\end{align}
We run 20 independent runs of the \gls{cmlmc} algorithm each for a given statistic and a given relative tolerance.
We plot the average of the cost to compute the optimal hierarchy over these 20 simulations versus the corresponding relative tolerance in Fig.~\ref{fig:poisson_complexity_all_stats}.
As can be seen from Fig.~\ref{fig:poisson_complexity_all_stats}, higher derivatives of $\Phi$ are more expensive to compute for a certain relative tolerance.
In addition, for each simulation that was adapted on $\Phi^{(1)}$ for all tolerances considered, we plot the cost of the simulation versus the \gls{mse} estimate on the \gls{var}.
It can be observed from Fig.~\ref{fig:poisson_complexity_all_stats} that for a given budget, adapting the hierarchy on the \gls{var} directly leads to a much lower relative error than adapting on $\Phi^{(1)}$ and computing the \gls{var} as a postprocessing step.

\begin{figure}[H]
  \centering
  \includegraphics[width=0.65\textwidth]{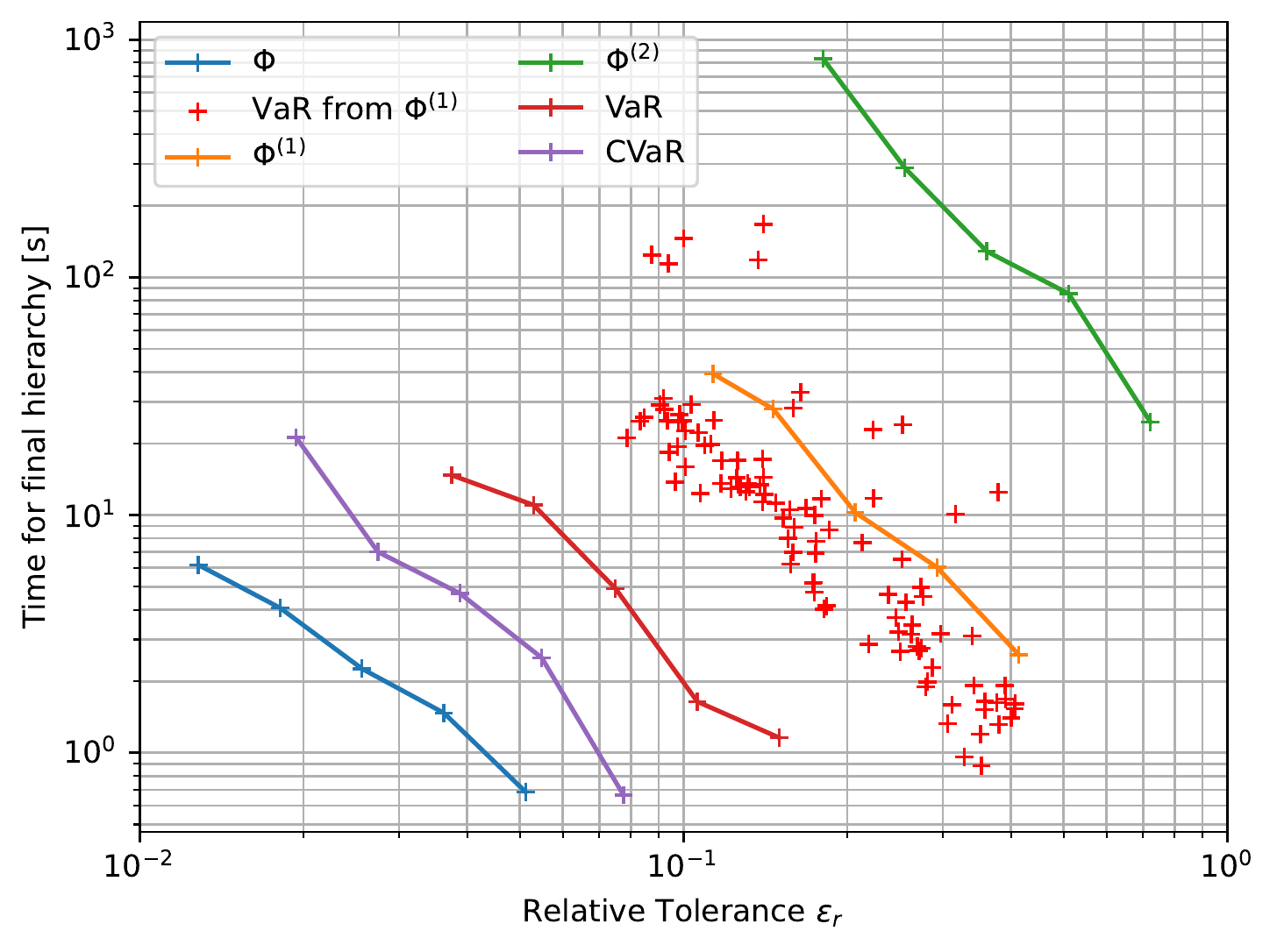}
  \caption{Complexity for different statistics}
  \label{fig:poisson_complexity_all_stats}
\end{figure}

Lastly, Fig.~\ref{fig:hierarchies_cmlmc} shows the optimal level-wise sample sizes computed for each intermediate tolerance of one simulation of the \gls{cmlmc} algorithm aimed at estimating the $70\%$-\gls{cvar} to the finest tolerance simulated.
This demonstrates the successive refinement strategy of the \gls{cmlmc} algorithm, where the hierarchy is calibrated based on a decreasing sequence of tolerances.
This can be seen in the increased level-wise sample sizes in the hierarchy with successive iterations.
In addition, the red dashed line shows the expected decay rate of $N_l$ over the levels $l$ as predicted by Eq.~\eqref{eq:opt_samples}, for the variance decay and cost growth exponents $\beta$ and $\gamma$ obtained from least squares fits on the estimates of $\tilde{V_l}$ and $\costsim$ over the levels $l$, respectively.

\begin{figure}[H]
\centering
\includegraphics[width=0.6\textwidth]{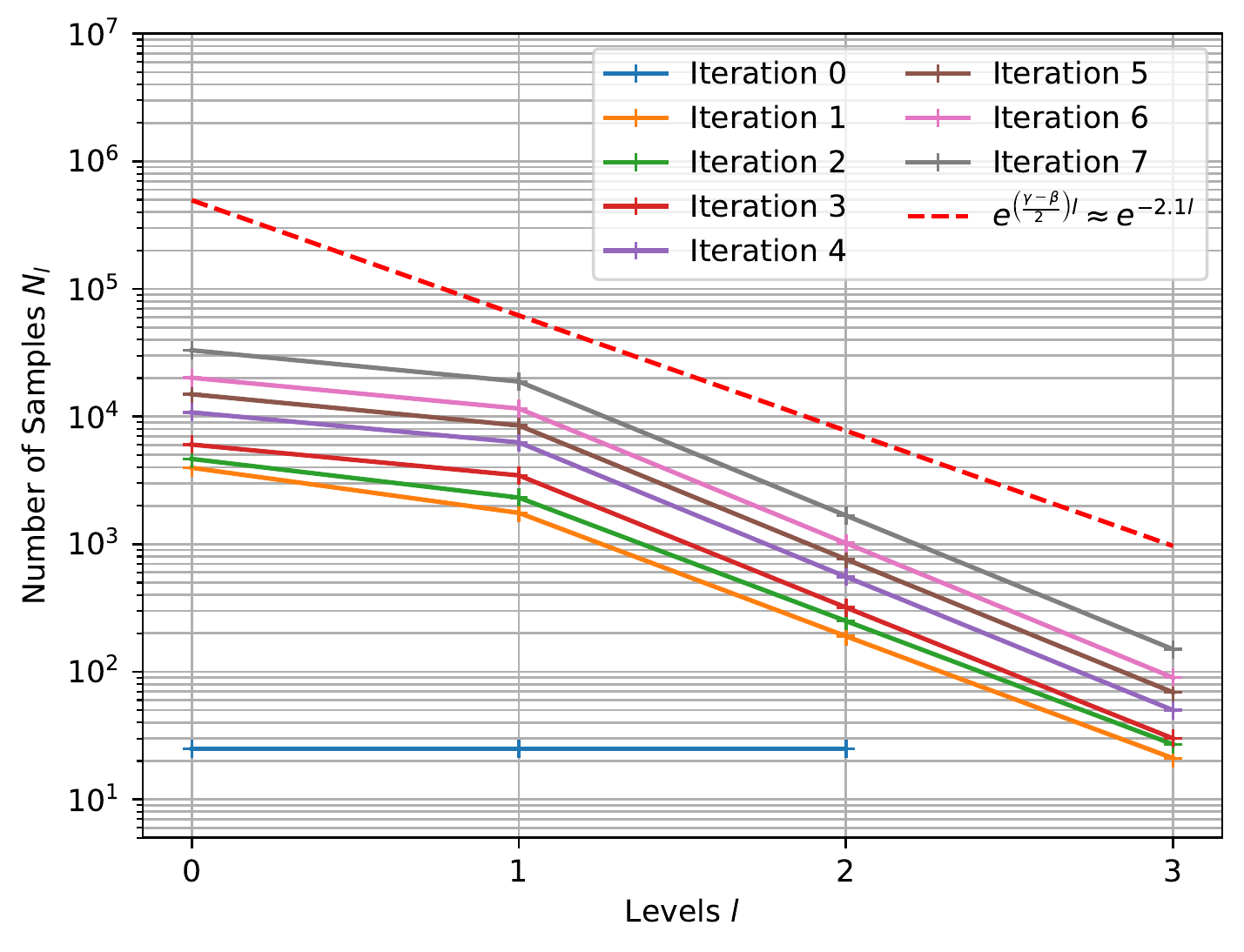}
\caption{Hierarchy evolution over \gls{cmlmc} iterations}
\label{fig:hierarchies_cmlmc}
\end{figure}
\FloatBarrier

\subsection{Black Scholes Stochastic Differential Equation} \label{sec:black_scholes}
We consider in this section the simulation of the price of a financial asset in time using the Black-Scholes \gls{sde}. 
The price of the asset is modelled as a geometric Brownian motion:
\begin{equation}
dS = r S\,dt + \sigma S\,dW\;,\quad S(0)=S_0,\;t \in (0,T].\label{eq:gbm:prob_b}
\end{equation}
with $r, \sigma, S_0 > 0$.
The \gls{qoi} for this example, whose distribution we wish to quantify, is the discounted European call option, defined as follows:
\begin{equation} 
\qoi := e^{-rT}\max\bigl(S(T)-K,0\bigr)\;,
\end{equation}
where $K>0$ denotes the agreed strike price and $T>0$ the pre-defined
expiration date. 
It is known that the solution $S(T)$ to Eq.~\eqref{eq:gbm:prob_b} at time $T$ is a log-normally distributed random variable with mean $S_0e^{rT}$ and variance ${S_0}^2e^{2rT}\left(e^{\sigma^2 T}-1\right)$.
Hence, the \gls{cdf} of $\qoi$ is:
\begin{align}
F_\qoi(\theta) &= \begin{cases}
\frac{1}{2} + \frac{1}{2}\erf\biggl(\frac{\sqrt{2}\bigl(\sigma^2 T - 2rT + 2\ln{\bigl(K+e^{rT}\theta\bigr)} - 2\ln{(S_0)}\bigr)}{4\sigma\sqrt{T}}\biggr) \;,& \theta\ge 0\;,\\
0\;,& \theta<0\;,
\end{cases} \label{eq:cdf_bs} \\
\text{where}\quad\erf(z) &= \frac{2}{\sqrt{\pi}}\int_0^z e^{-s^2}\, ds.
\end{align}
Using the \gls{cdf} in Eq.~\eqref{eq:cdf_bs}, it is then straightforward to compute reference values for the \gls{var} and \gls{cvar}.
Table \ref{tab:gbm:quantiles} lists the values of the \gls{var} and \gls{cvar} for different significances $\tau$ and for $r=0.05$, $\sigma = 0.2$, $T=1$, $K=10$, and $S_0=10$.
The corresponding \gls{cdf} is plotted in Fig.~\ref{fig:atom}.
We are interested in estimating the \gls{cvar} value corresponding to a significance of $\tau = 0.7$ while ensuring the numerical stability of the rescaling ratio $r_e$ in Eq.~\eqref{eq:rescaling_all} and hence, select an interval $\Theta = [0.5,2.0]$.
We utilize the \gls{cmlmc} algorithm coupled with the novel error estimators described in Section~\ref{sec:novel} in order to calibrate the \gls{mlmc} estimator for the \gls{cvar}. 

\begin{figure}[ht]
	\begin{minipage}[b]{0.35\textwidth}
		\centering
		\includegraphics[width=\textwidth]{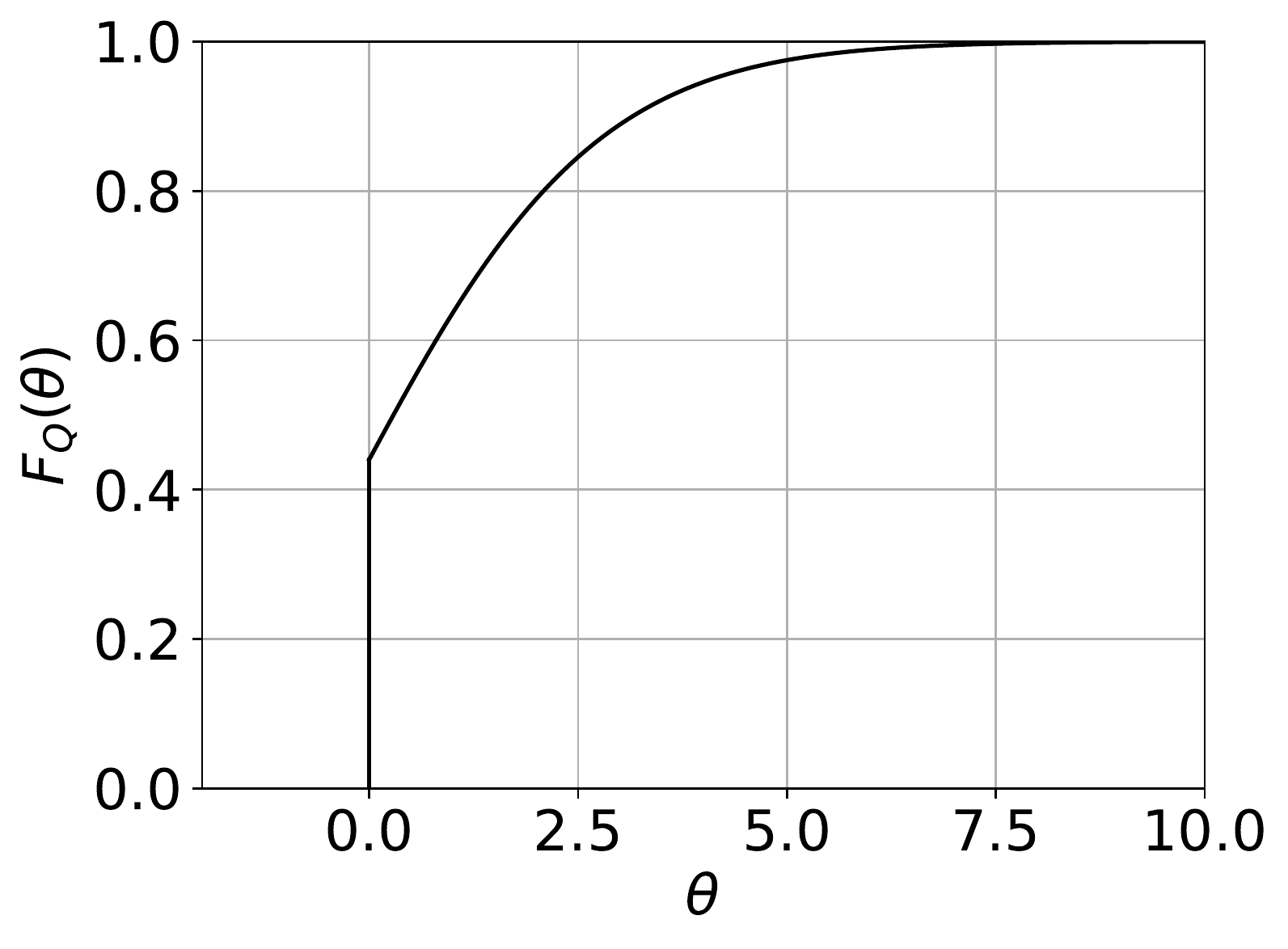}
		\captionof{figure}{\gls{cdf} $F_{\qoi}$ of $\qoi$ for the \gls{sde} problem.}
		\label{fig:atom}
	\end{minipage}
	\hfill
	\begin{minipage}[b]{0.55\textwidth}
		\centering
		\begin{tabular}{ccc}  
			\toprule
			$\tau$ & $q_{\tau} = F_{\qoi}^{-1}(\tau)$ & $c_{\tau}$\\
			\midrule
			$0.6$ & $ 0.799151$ & $  2.455898$\\
			$0.7$ & $ 1.373571$ & $  2.914953$\\
			$0.8$ & $ 2.086595$ & $  3.515684$\\
			$0.9$ & $ 3.153379$ & $  4.460298$\\
			\bottomrule\
			\vspace{0.12in}
		\end{tabular}
		\captionof{table}{\gls{var} and \gls{cvar} values for the \gls{qoi} associated with \gls{sde} problem.}
		\label{tab:gbm:quantiles}
	\end{minipage}
\end{figure}

For the numerical experiments, the \gls{sde} in Eq.~\eqref{eq:gbm:prob_b} is discretised on a hierarchy of uniform grids given by $t^l_i = i \Delta t_l$, for level $l \in \{0,...,L\}$, with $i \in \{0,1,...,N^l_T\}$ such that $N^l_T \Delta t_l = T$.
The grid sizes $\Delta t_l$ are selected such that $\Delta t_l = \Delta t_0 2^{-l}$, giving rise to a hierarchy of nested grids.
Furthermore, we use the Euler Maruyama scheme to discretise the problem on this uniform grid.
Denoting the discretised approximation of $S(t_i)$ on level $l$ as $S_{t^l_i}$, the scheme for level $l$ reads:
\begin{align}
S_{t^l_{i+1}} &= S_{t^l_i} + r \Delta t_l  S_{t^l_i} + \sigma \sqrt{\Delta t_l} S_{t^l_i} \xi_i,\quad \xi_i  \overset{\text{i.i.d}}{\sim} \mathcal{N}(0,1),\quad i\in\{0,...,N^l_T-1\},
\end{align}
with $S_{t^l_0} = S_0$.
Correlated realizations are generated on a pair of levels $l$ and $l-1$ by using the same realization of the Brownian path on both levels.

The performance of the \gls{mlmc} method can be summarized as follows. Fig.~\ref{fig:black_scholes_reliability} shows the results of a reliability study analogous to the one conducted in Section~\ref{sec:poisson} for the Poisson problem.
For each \gls{cmlmc} simulation, an estimate of the \gls{mse} of the \gls{cvar} is produced using the novel error estimation procedure described in Sections~\ref{sec:novel} and \ref{sec:adaptivity}.
The \gls{mse} estimates are compared with the corresponding true squared errors computed using the estimated value of the \gls{cvar} obtained from the \gls{cmlmc} algorithm and the reference value in Table~\ref{tab:gbm:quantiles} corresponding to $\tau = 0.7$.
As can be seen from Fig.~\ref{fig:black_scholes_reliability}, the estimated \gls{mse} is larger than the ``true'' \gls{mse} by a factor of approximately 10, which we consider acceptable for practical applications.
In addition, Fig.~\ref{fig:black_scholes_complexity} shows the computational cost to compute the optimal hierarchy for a given tolerance on the \gls{mse}. 
The plot demonstrates that the complexity behaviour matches the best case scenario predicted by Proposition~\ref{thm:complexity:MLMC:spline}.
It is also noteworthy that the \gls{mlmc} estimator not only provides a significantly improved computational complexity of $\mathcal{O}(\epsilon^{-2})$ compared to $\mathcal{O}(\epsilon^{-3})$ for the Monte Carlo method, but also that the computational cost of the \gls{mlmc} estimator is already smaller by one to two orders of magnitude even for the largest tolerance considered.
For comparison, the Monte Carlo cost is plotted in Fig.~\ref{fig:black_scholes_complexity} using the procedure as described in Section~\ref{sec:poisson}.

\begin{figure}[H]
  \centering
  \begin{subfigure}{0.45\textwidth}
    \centering
    \includegraphics[width=\textwidth]{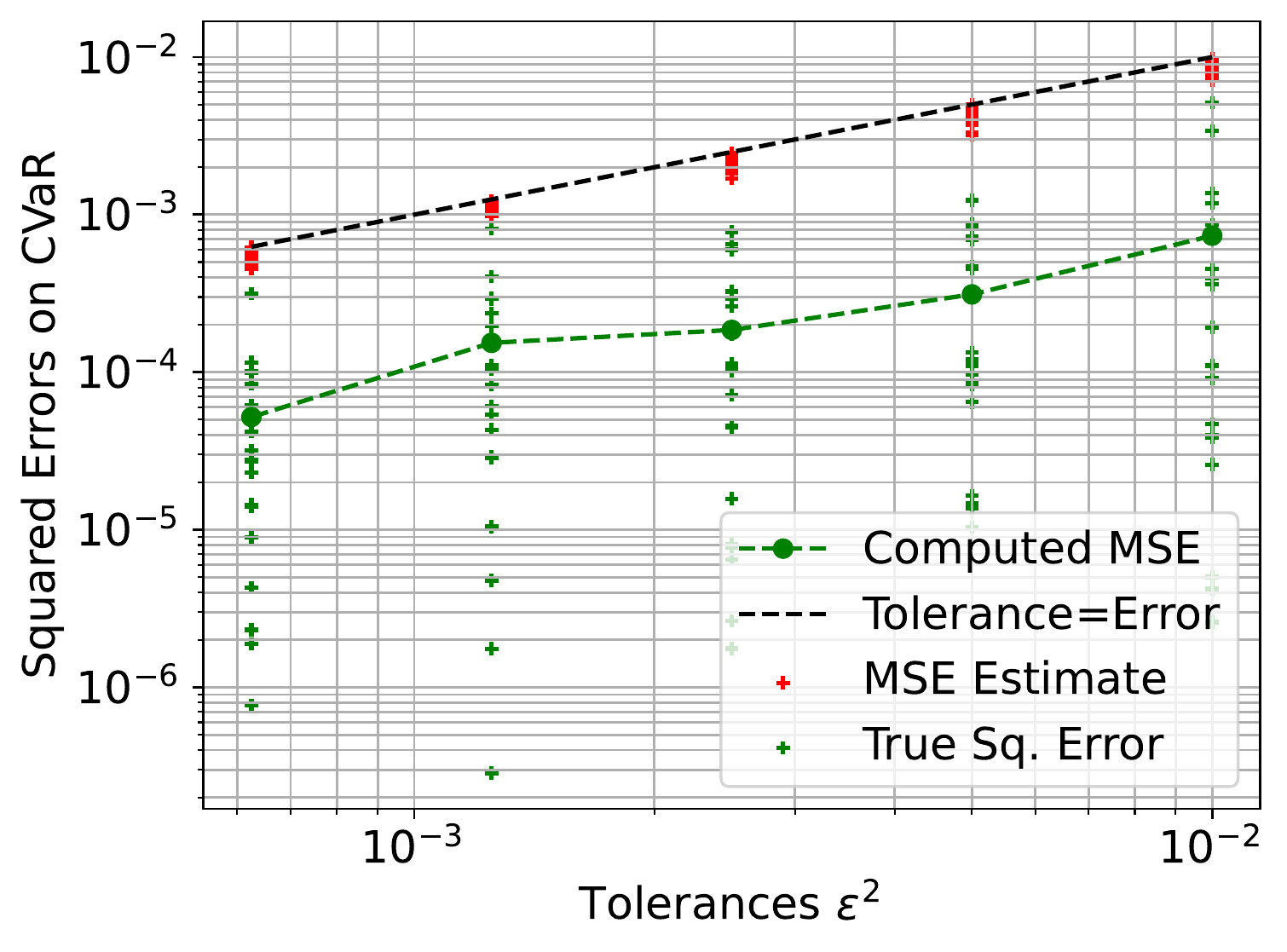}
    \caption{Reliability of error estimator}
    \label{fig:black_scholes_reliability}
  \end{subfigure}
  \begin{subfigure}{0.45\textwidth}
    \centering
    \includegraphics[width=\textwidth]{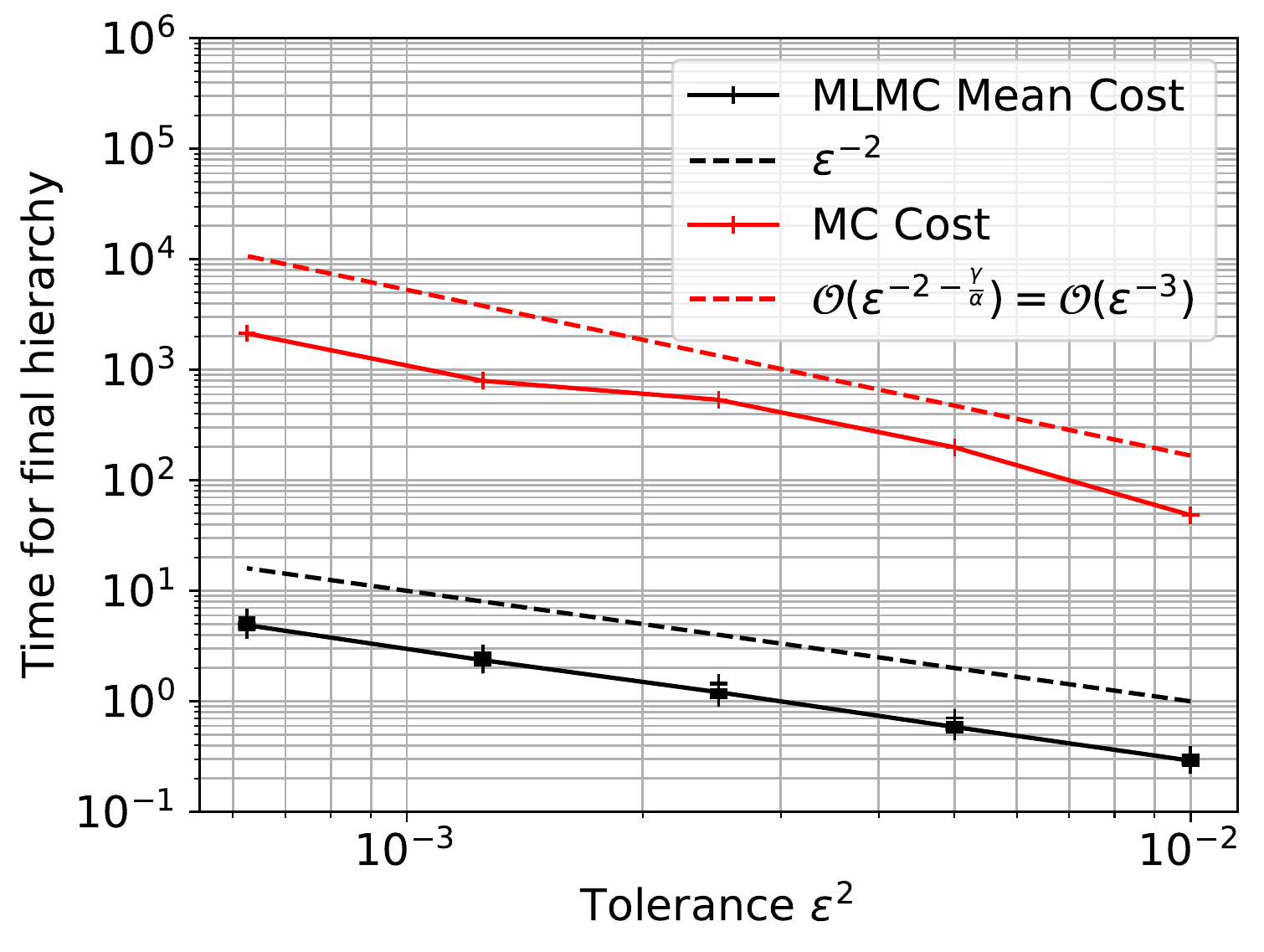}
    \caption{Complexity behaviour}
    \label{fig:black_scholes_complexity}
  \end{subfigure}
  \caption{Summary of results for the Black Scholes \gls{sde}}
  \label{fig:black_scholes_performance}
\end{figure}

Lastly, we present in Fig.~\ref{fig:phim_var_error_black_scholes} a similar study as in Fig.~\ref{fig:phim_var_error_poisson}. 
For the same set of reliability simulations as in Fig.~\ref{fig:black_scholes_reliability}, we compute the true squared error and \gls{mse} estimate for $\Phi^{(m)},\; m \in \{0,1,2\}$, as well as the \gls{var}, and plot it against the tolerance on the \gls{mse} used in the \gls{cvar} calculation.
In addition, for the \gls{var}, shown in Fig.~\ref{fig:var_black_scholes}, we estimate the \gls{mse} also on a smaller interval $\Theta = [1.35,1.4]$ than the one used for estimating the \gls{cvar} for comparison.
The results are comparable to the case of the Poisson problem in Section~\ref{sec:poisson}; although the novel error estimators provide accurate estimates of the \gls{mse} of $\Phi^{(m)},\;m\in\{0,1,2\}$, the error estimates for the \gls{var} are comparatively more conservative but improve with smaller interval size selection.

\begin{figure}[H]
\centering
\begin{subfigure}{.48\textwidth}
    \centering
    \includegraphics[width=\textwidth]{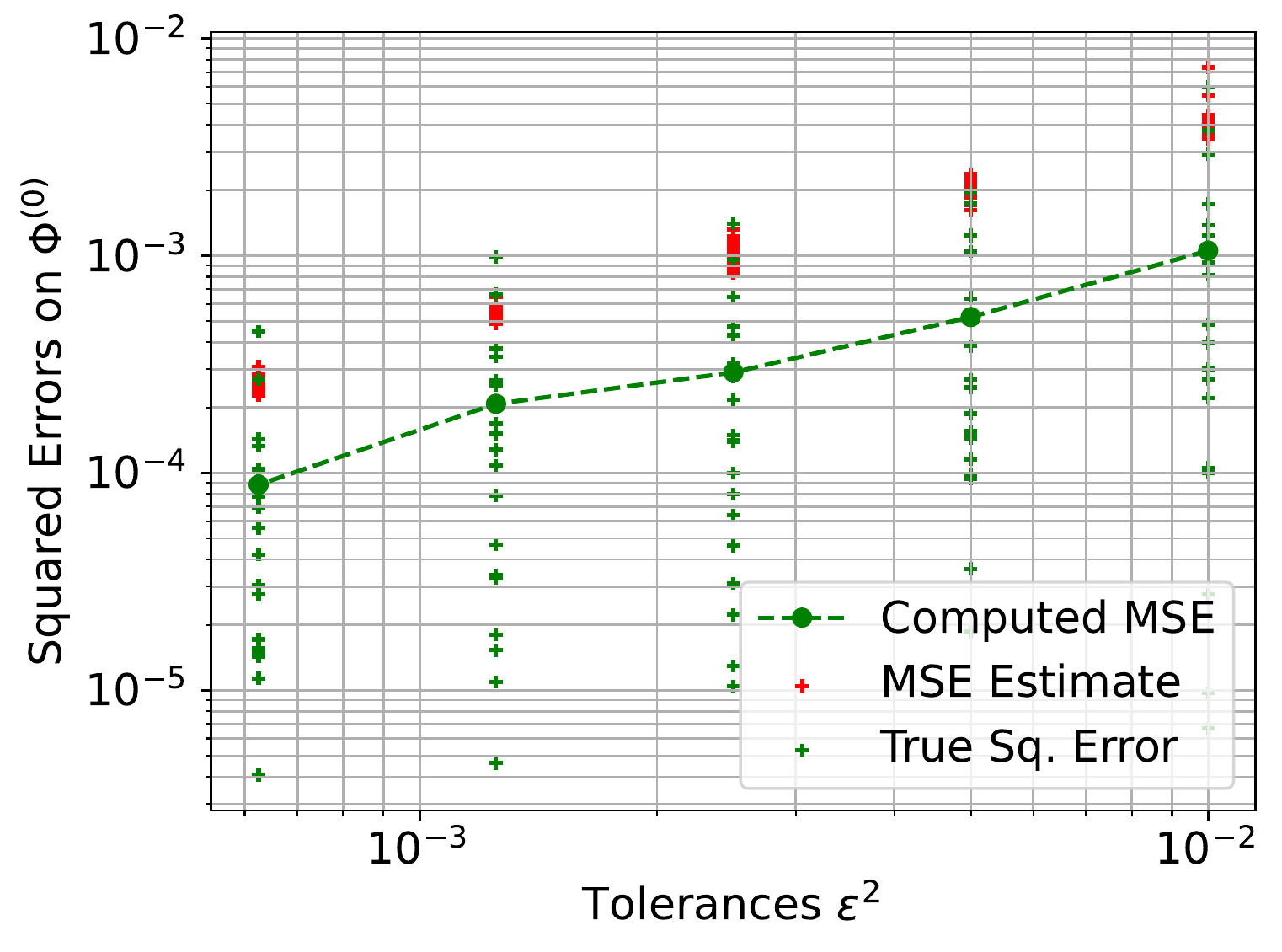}
    \caption{$m=0$}
\end{subfigure}
\begin{subfigure}{.48\textwidth}
    \centering
    \includegraphics[width=\textwidth]{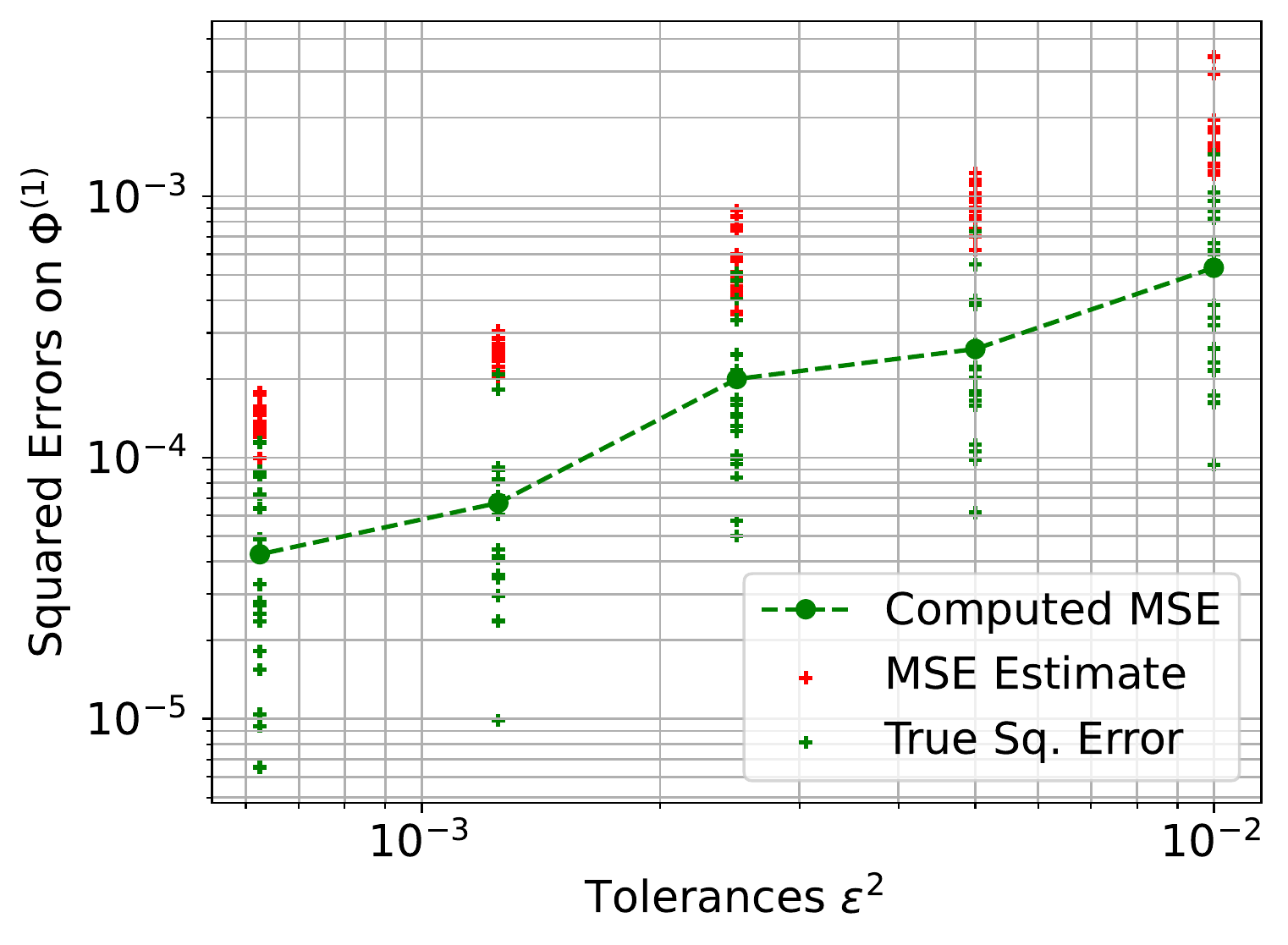}
    \caption{$m=1$}
\end{subfigure}

\begin{subfigure}{.48\textwidth}
    \centering
    \includegraphics[width=\textwidth]{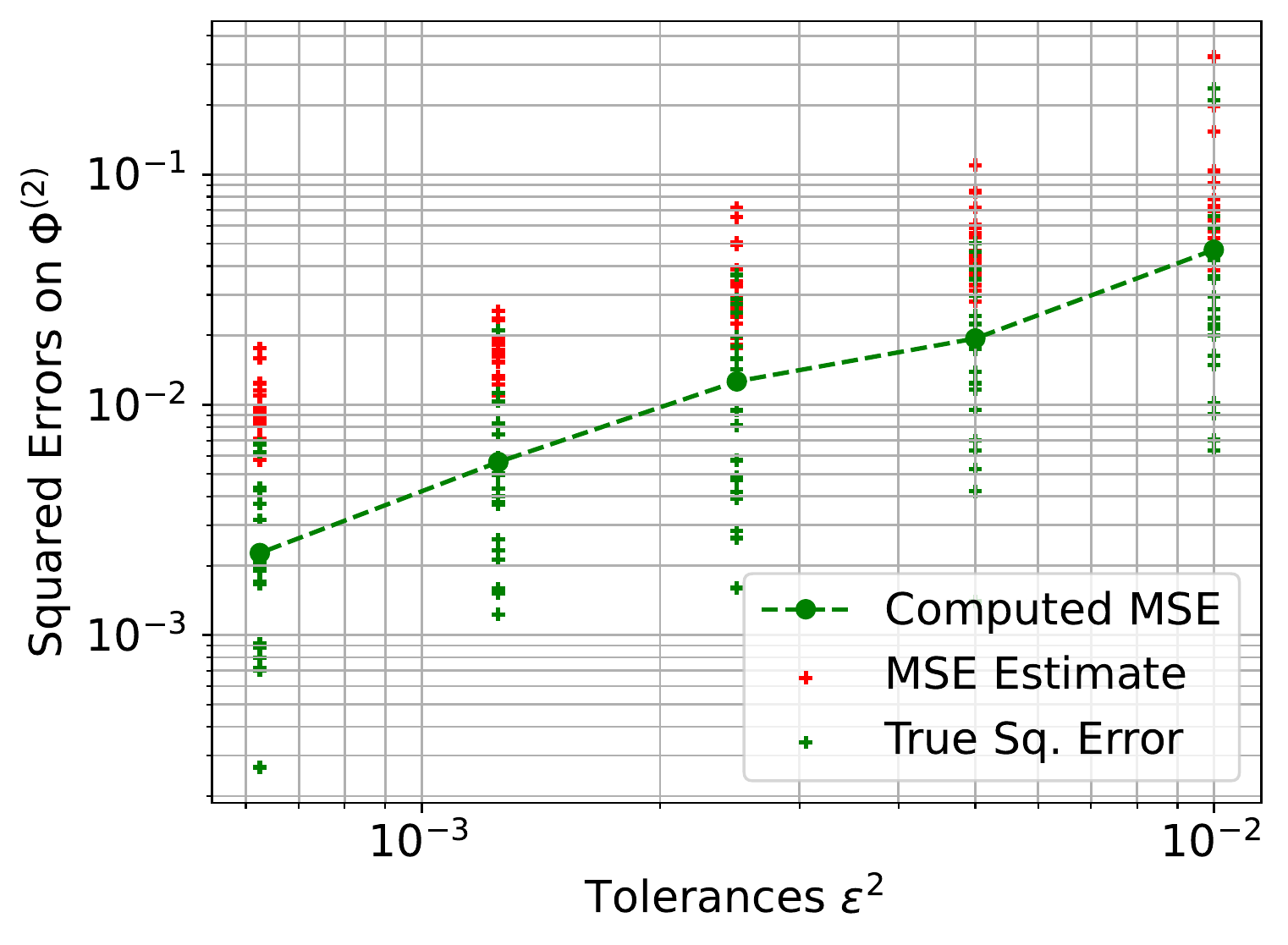}
    \caption{$m=2$}
\end{subfigure}
\begin{subfigure}{.48\textwidth}
    \centering
    \includegraphics[width=\textwidth]{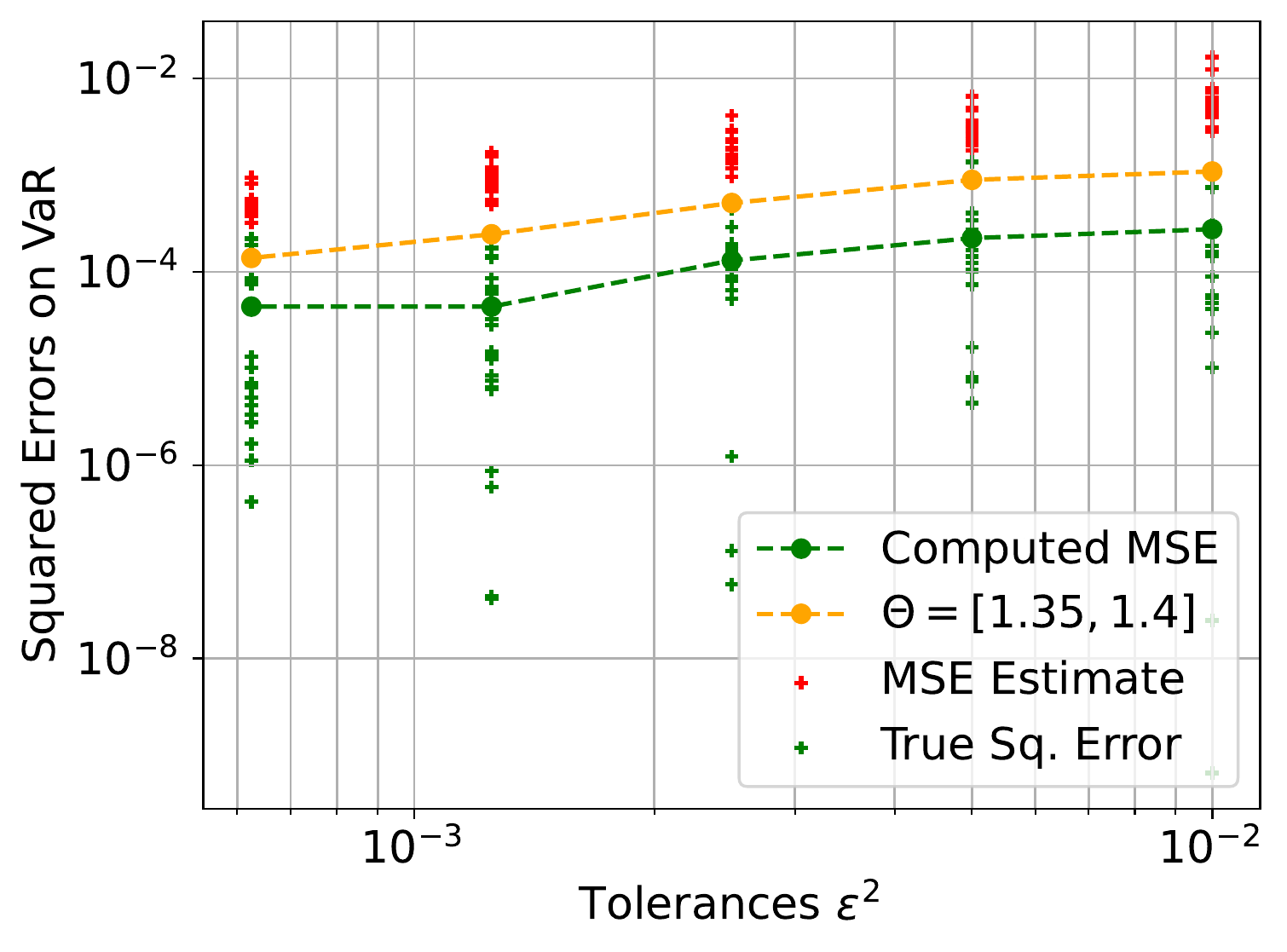}
    \caption{\gls{var}}
    \label{fig:var_black_scholes}
\end{subfigure}
\caption{Reliability of error estimator for $\Phim$ and \gls{var} for Black Scholes problems}
\label{fig:phim_var_error_black_scholes}
\end{figure}

\FloatBarrier
\subsection{Navier-Stokes flow over a cylinder in a channel}\label{sec:cylinder}
To demonstrate the performance of the \gls{mlmc} estimator and the novel error estimators on a more challenging problem, we consider a two-dimensional steady incompressible fluid flow over a cylinder placed asymmetrically in a channel.
The goal here is to study the effects of random inlet perturbations on the distributions of force and moment coefficients on the cylinder. 

The domain of the problem is a rectangle with a circular cylinder removed and can be defined as $D = [0, 2.2] \times [0, 0.41] \backslash B_r(0.2,0.2),\;r=0.05$, where $B_r(x,y)$ denotes a circle centred at the coordinate $(x,y) \in D \subset \setR^2$ with radius $r > 0$.
The flow is characterized by the velocity field $u:D \to \setR^2$ and pressure field $p: D \to \setR$.
The velocity and pressure fields are governed by the steady incompressible Navier-Stokes equations:
\begin{align}
(u\cdot \nabla)u - \nu \Delta u + \nabla p &= 0, \label{eq:ns1}\\ 
\nabla \cdot u &= 0, \quad\text{in } D, \label{eq:ns2}
\end{align}
where $\nu = 0.01$ denotes the kinematic viscosity.
The boundary conditions are as follows.
At the inflow boundary ($x=0$), we consider a random inlet profile, which consists of a parabolic mean profile on which harmonics with random amplitudes are added:
\begin{align}
u(0,y) &= \left( \frac{4 U y(0.41-y)}{0.41^2} + u_{r}, 0 \right),\\
u_r(y) &= \sigma \sum_{j=1}^{N_h} \xi_j e^{-j} \sin \left(\frac{j\pi y}{0.41}\right), \qquad \xi_j \overset{i.i.d}{\sim} \mathcal{N}(0,1), \label{eq:inlet}
\end{align}
where $U = 4.0$ is the peak velocity of the parabolic profile, $\sigma = 0.5$ denotes a strength parameter and $N_h = 8$ denotes the number of harmonics superimposed. 
Fig.~\ref{fig:inlet} shows 10 different realizations of the inlet profile given in Eq.~\eqref{eq:inlet}, plotted over the parabolic mean profile.
\begin{figure}[ht]
\centering
\includegraphics[width=0.6\textwidth]{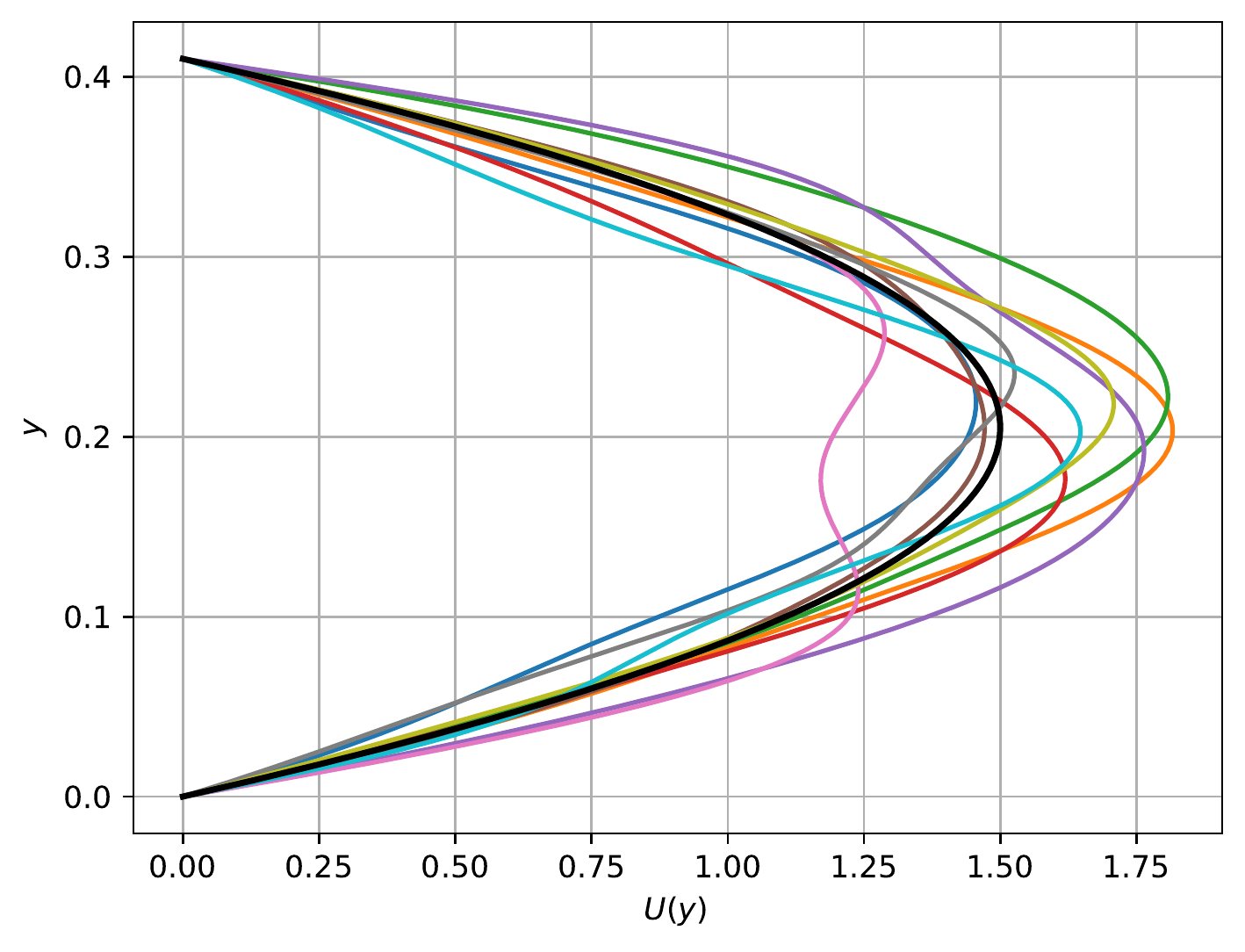}
\caption{Inlet profile realizations (in color) plotted over parabolic mean profile (in black)}
\label{fig:inlet}
\end{figure}

On the bottom and top channel walls ($y=0$ and $y=0.41$ respectively), no-slip boundary conditions are prescribed.
On the outlet ($x=2.2$), a zero-stress boundary condition is prescribed with the form
\begin{align}
(\nu \nabla u - pI)n = 0,
\end{align}
where $n =[1,0]^T$ denotes the outward boundary normal vector and $I \in \setR^{2 \times 2}$ denotes the identity matrix. 

For a peak velocity of $U=4.0$, the area-weighted inlet velocity is $U_{in}=2.667$.
Taking the reference length to be the diameter of the cylinder, the Reynolds' number is
\begin{align}
\text{Re} = \frac{2 U_{in} r}{\nu} = \frac{2.667 \times 0.1}{0.01} = 26.67.
\end{align}
The \gls{qoi} considered is the drag coefficient $C_d$, whose value is computed as follows.
First, we compute the drag and lift forces $F_d$ and $F_l$ on the cylinder, which are given respectively by:
\begin{align}
\left[\begin{matrix} F_d \\ F_l \end{matrix}\right] = \int_{\partial B_r} (\nu \nabla u - pI) n ds,
\end{align}
where $\partial B_r$ denotes the surface of the circle over which the stress is integrated.
The drag coefficient $C_d$ is then computed from the drag force as:
\begin{align}
C_{d} = \frac{F_{d}}{U_{in}^2 r}.
\end{align}
The domain $D$ is discretised with a non-uniform triangulation.
Reference mesh size values are prescribed on the surface of the circle, as well as at the corners of the domain.
The coarsest two meshes, corresponding to levels $l=0$ and $l=1$ of those simulated, are shown in Fig.~\ref{fig:mesh}. 
Each finer level is produced by reducing the prescribed reference mesh sizes by a factor $\sqrt{2}$ from the previous level and re-applying the triangulation.
The meshes computed as a result are non-nested.
Table~\ref{tab:mesh_sizes} shows the minimum and maximum mesh sizes $h_{min}$ and $h_{max}$, as well as the number of vertices for each of the meshes considered in the hierarchy. 
As can be seen from the table, the number of vertices approximately doubles with every level. 

The problem is implemented using the FEniCS finite element software \cite{logg2012automated}.
P2-P1 Taylor-Hood elements are used for the velocity and pressure fields.
The resulting non-linear problem is solved using Newton iterations with a relative tolerance of $10^{-10}$ on the residual.
Linear systems are solved using a sparse direct solver \cite{MUMPS:1, MUMPS:2}.

\begin{figure}[ht]
\centering
\includegraphics[width=0.8\textwidth]{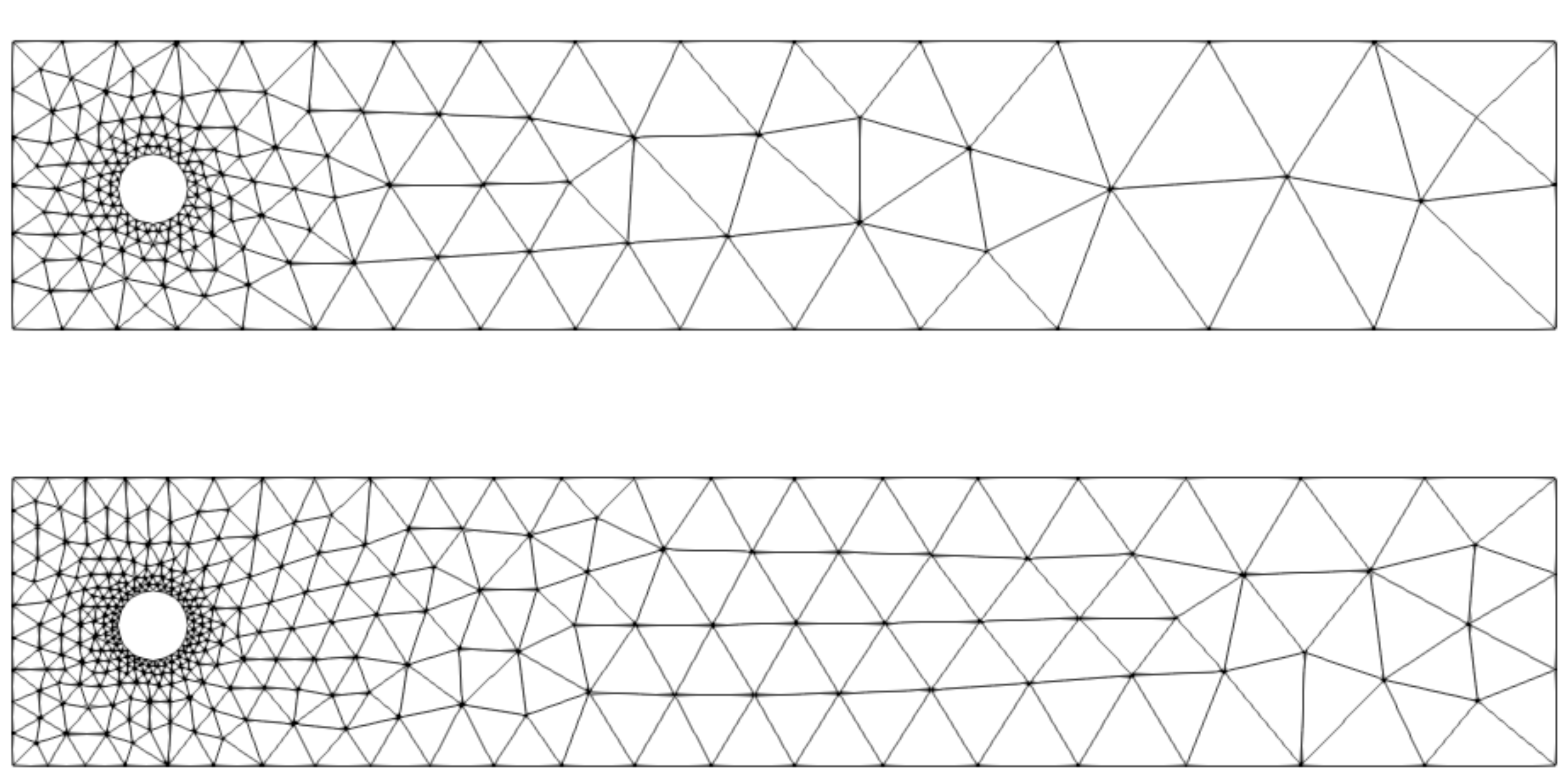}
\caption{Meshes for cylinder problem for level $l=0$ (top) and $l=1$ (bottom)}
\label{fig:mesh}
\end{figure}

\begin{table}[ht]
\centering
\begin{tabular}{cccc}
	\toprule
	Level & $h_{min}$ & $h_{max}$ & Vertices\\
	\midrule
	$0$ & $ 1.31 \times 10^{-2} $ & $ 2.65 \times 10^{-1} $ & $199$\\
	$1$ & $ 9.80 \times 10^{-3} $ & $ 1.87 \times 10^{-1} $ & $333$\\
	$2$ & $ 6.54 \times 10^{-3} $ & $ 1.44 \times 10^{-1} $ & $593$\\
	$3$ & $ 4.91 \times 10^{-3} $ & $ 9.82 \times 10^{-2} $ & $1073$\\
	$4$ & $ 2.91 \times 10^{-3} $ & $ 7.58 \times 10^{-2} $ & $2038$\\
	$5$ & $ 2.28 \times 10^{-3} $ & $ 5.62 \times 10^{-2} $ & $3857$\\
	\bottomrule
\end{tabular}
\caption{Mesh parameters for the Navier Stokes problem}
\label{tab:mesh_sizes}
\end{table}

As in previous sections, we aim to estimate the $70\%$-\gls{cvar} using the \gls{cmlmc} method presented in this work, for which we select the interval $\Theta = [0.3,0.8]$.
We perform a reliability study identical to the ones conducted in Section~\ref{sec:poisson} and Section~\ref{sec:black_scholes} for the Poisson and Black Scholes problems, respectively. 
The reference value of the $70\%$-\gls{cvar} is, however, not available in this case. 
Instead, a numerical reference is computed as follows.
We conduct 20 independent repetitions of the \gls{cmlmc} algorithm, tuned to a tolerance that is one quarter of the finest tolerance tested for in what follows. 
The reference value is then taken to be the average over the 20 estimates of the \gls{cvar} produced by these simulations. 
The resultant reliability plot is shown in Fig.~\ref{fig:navier_stokes_reliability}.
As can be seen in Fig.~\ref{fig:navier_stokes_reliability}, the estimated squared errors are approximately 1.5 orders conservative on the true \gls{mse}.
Although considerably more conservative than for the Poisson and Black-Scholes problems, we deem the error estimator still acceptable and leading to practically computable hierarchies.
In addition, the complexity plot is shown in Fig.~\ref{fig:navier_stokes_complexity}.
The reference Monte Carlo cost is computed using the same procedure as described in Section~\ref{sec:poisson} for the Poisson problem.
The figure once again demonstrates that the complexity behaviour matches the best case scenario predicted by Proposition~\ref{thm:complexity:MLMC:spline}.
The \gls{mlmc} estimator shows a complexity of $\mathcal{O}(\epsilon^{-2})$ as compared to a complexity of $\mathcal{O}(\epsilon^{-3.6})$ in the Monte Carlo case.
In addition, even for the largest tolerance considered, the \gls{mlmc} estimator is three orders faster than the corresponding Monte Carlo estimator.

\begin{figure}[H]
  \centering
  \begin{subfigure}{0.45\textwidth}
    \centering
    \includegraphics[width=\textwidth]{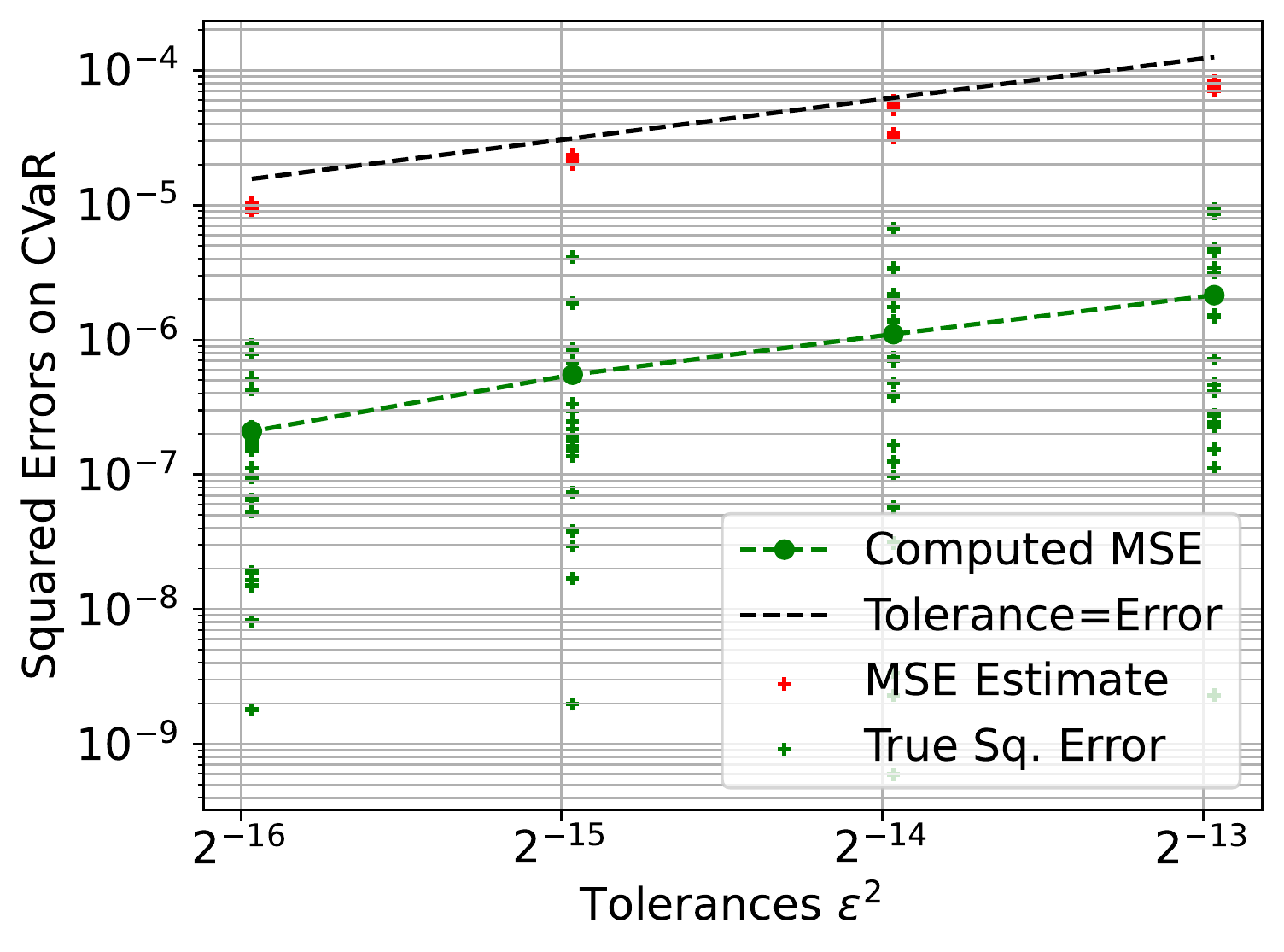}
    \caption{Reliability of error estimator}
    \label{fig:navier_stokes_reliability}
  \end{subfigure}
  \begin{subfigure}{0.45\textwidth}
    \centering
    \includegraphics[width=\textwidth]{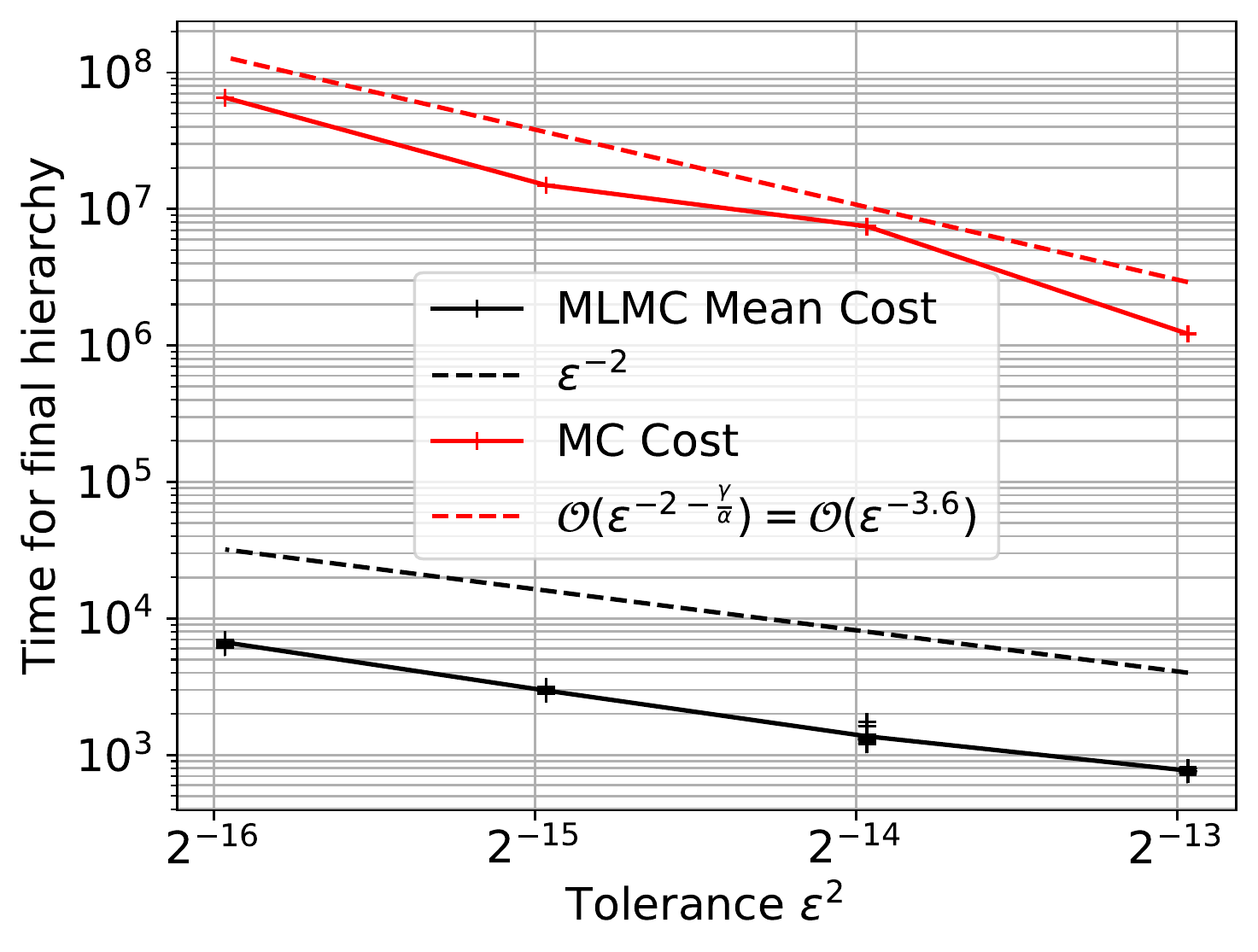}
    \caption{Complexity behaviour}
    \label{fig:navier_stokes_complexity}
  \end{subfigure}
  \caption{Summary of results for the Navier-Stokes problem}
  \label{fig:navier_stokes_performance}
\end{figure}

Fig.~\ref{fig:phim_var_error_navier_stokes} shows the true and estimated \glsplural{mse} on $\Phi^{(m)},\;m\in\{0,1,2\}$, as well as the \gls{var} for the same set of simulations as in Fig.~\ref{fig:navier_stokes_reliability}.
In addition, \gls{mse} estimates are computed for the \gls{var} case with a smaller interval $\Theta = [0.49,0.51]$.
We note that the results are similar to those seen for the Poisson and Black Scholes problems, although the error estimators for all four statistics are relatively more conservative when compared to those problems.
In addition, although the reduction of interval size leads to a reduction in the \gls{mse} estimates predicted for the \gls{var}, the reduction is not as significant as in the case of the Poisson and Black-Scholes problems.

\begin{figure}[H]
\centering
\begin{subfigure}{.48\textwidth}
    \centering
    \includegraphics[width=\textwidth]{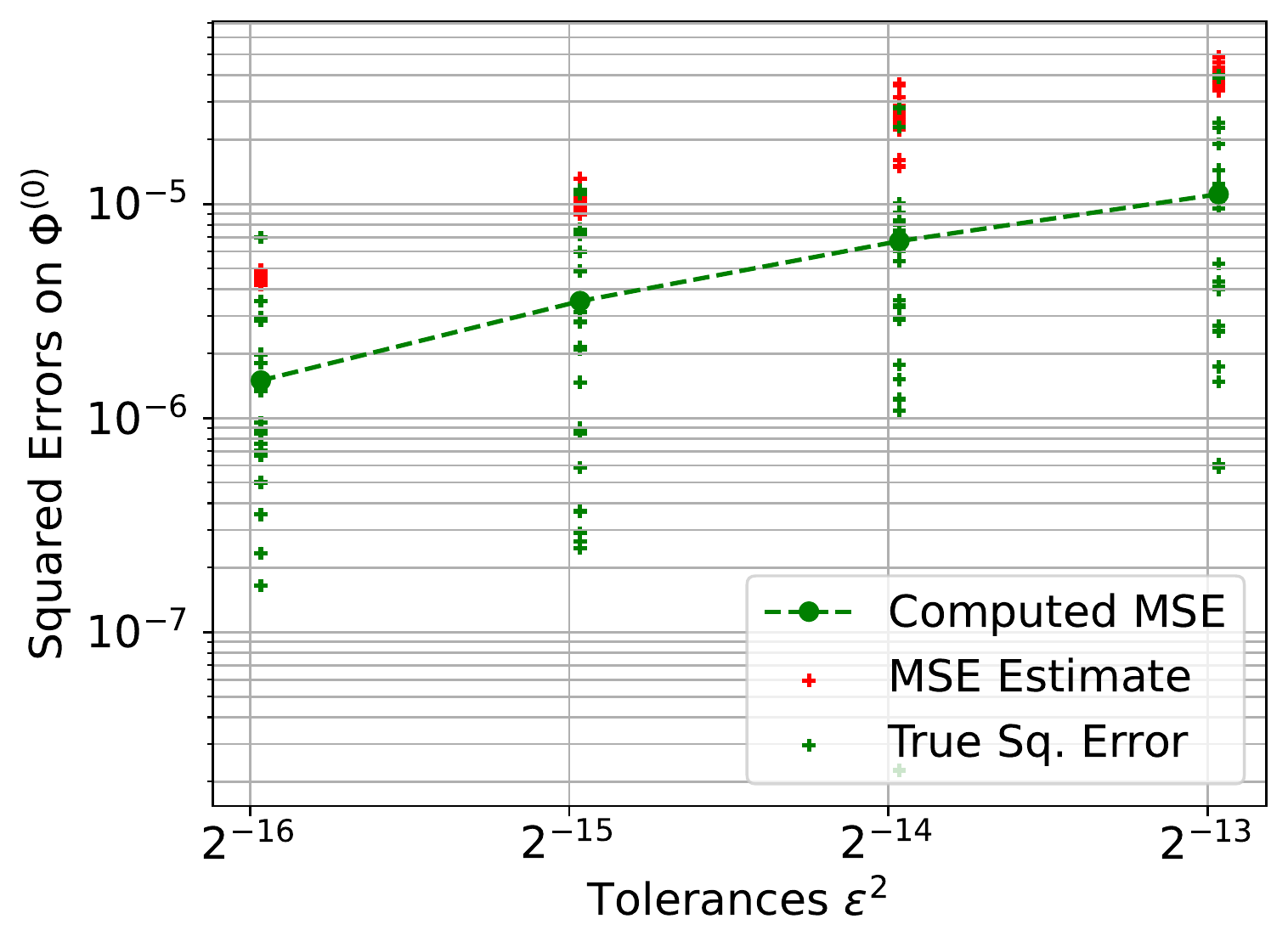}
    \caption{$m=0$}
\end{subfigure}
\begin{subfigure}{.48\textwidth}
    \centering
    \includegraphics[width=\textwidth]{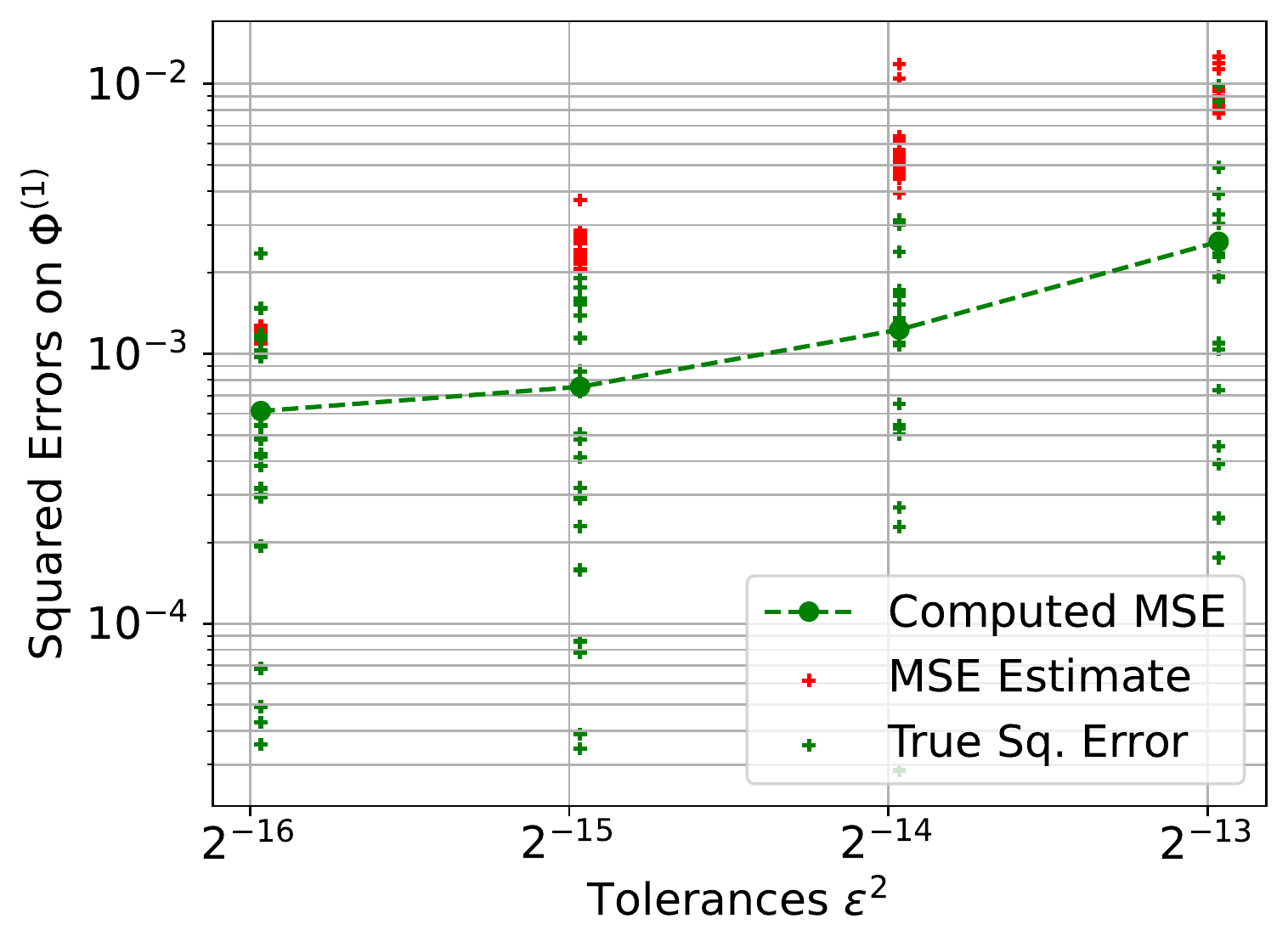}
    \caption{$m=1$}
\end{subfigure}

\begin{subfigure}{.48\textwidth}
    \centering
    \includegraphics[width=\textwidth]{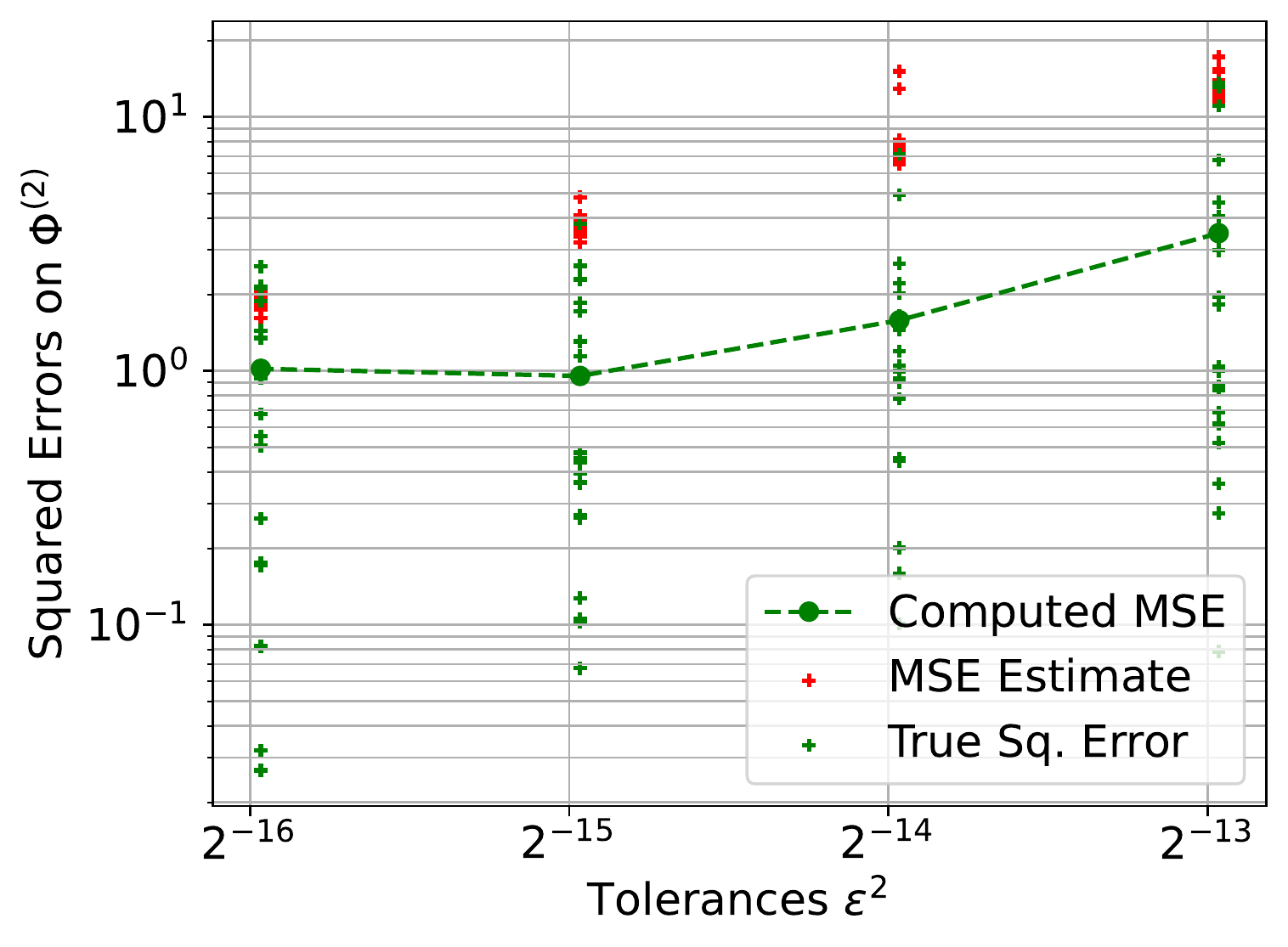}
    \caption{$m=2$}
\end{subfigure}
\begin{subfigure}{.48\textwidth}
    \centering
    \includegraphics[width=\textwidth]{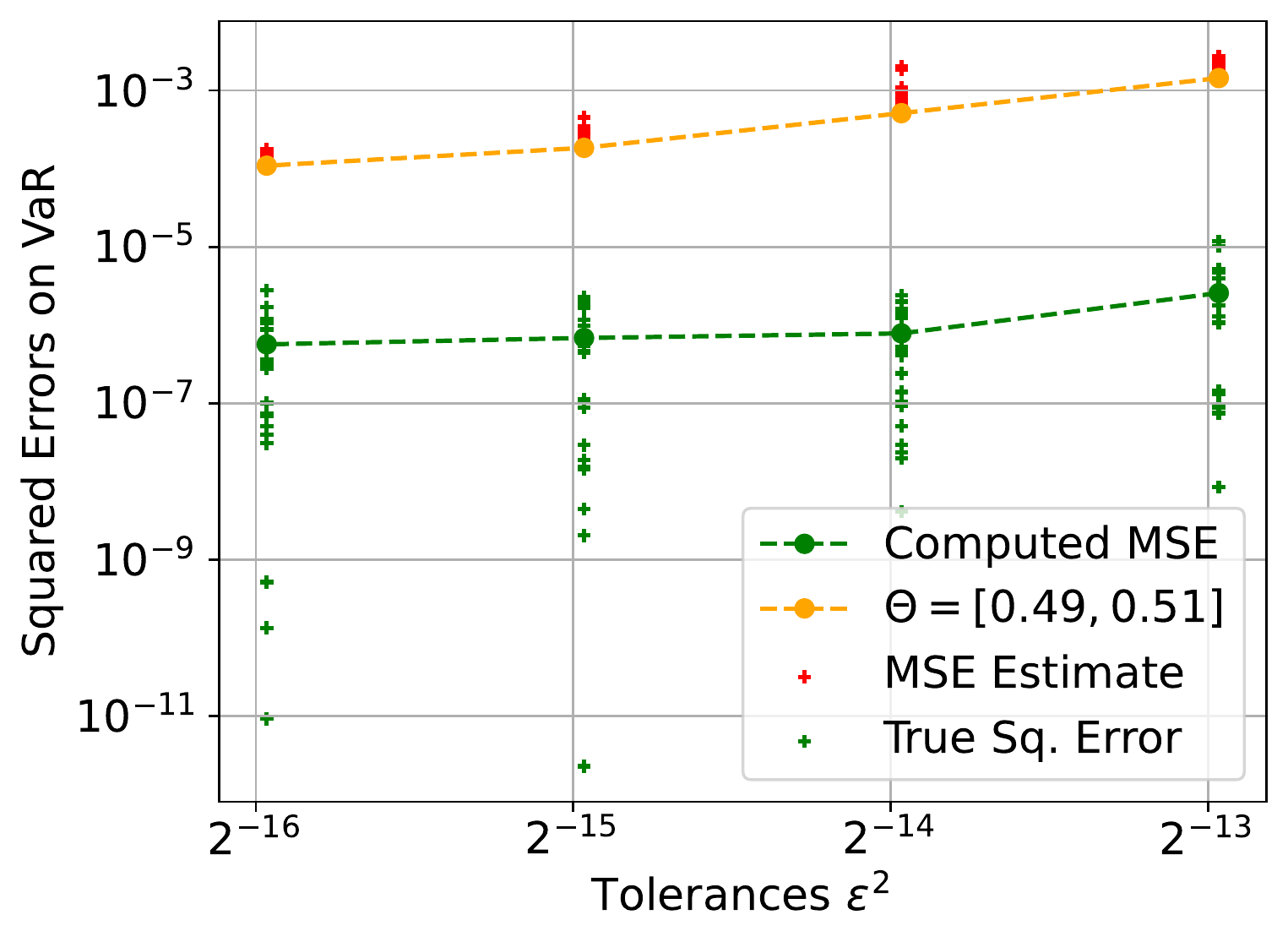}
    \caption{\gls{var}}
\end{subfigure}
\caption{Reliability of error estimator for $\Phim$ and \gls{var} for Navier Stokes problem}
\label{fig:phim_var_error_navier_stokes}
\end{figure}

Finally, we recall that the rescaling ratio for the Poisson problem was shown in Fig.~\ref{fig:ratio_study_poisson} to be relatively stable with respect to different interval sizes and hierarchy shapes.
However, we conducted a similar study for the Navier-Stokes problem and observed that the rescaling ratio was drastically more sensitive to the choice of these parameters than in the Poisson problem case.
The study is summarized in Fig.~\ref{fig:ratio_study_ns}; namely that we observe the behaviour of the variance rescaling ratio $r_e$ for different hierarchy shapes and interval sizes around the $70\%$-\gls{var}.
We conduct the study only for hierarchies as in Eq.~\eqref{eq:hierarchy} with $r=0$, that is, for a hierarchy with the same sample sizes across all levels. 
In contrast to the results of Fig.~\ref{fig:ratio_study_poisson}, we observe that the rescaling ratio very strongly depends on the hierarchy size $N_0$, as well as the interval $\Theta$, and increases significantly for smaller values of $N_0$ and shorter intervals $\Theta$.
These observations demonstrate the imperative need for an adaptive selection algorithm for the interval $\Theta$.
We plan to explore this direction in a future work. 

\begin{figure}[ht]
\centering
\includegraphics[width=0.6\textwidth]{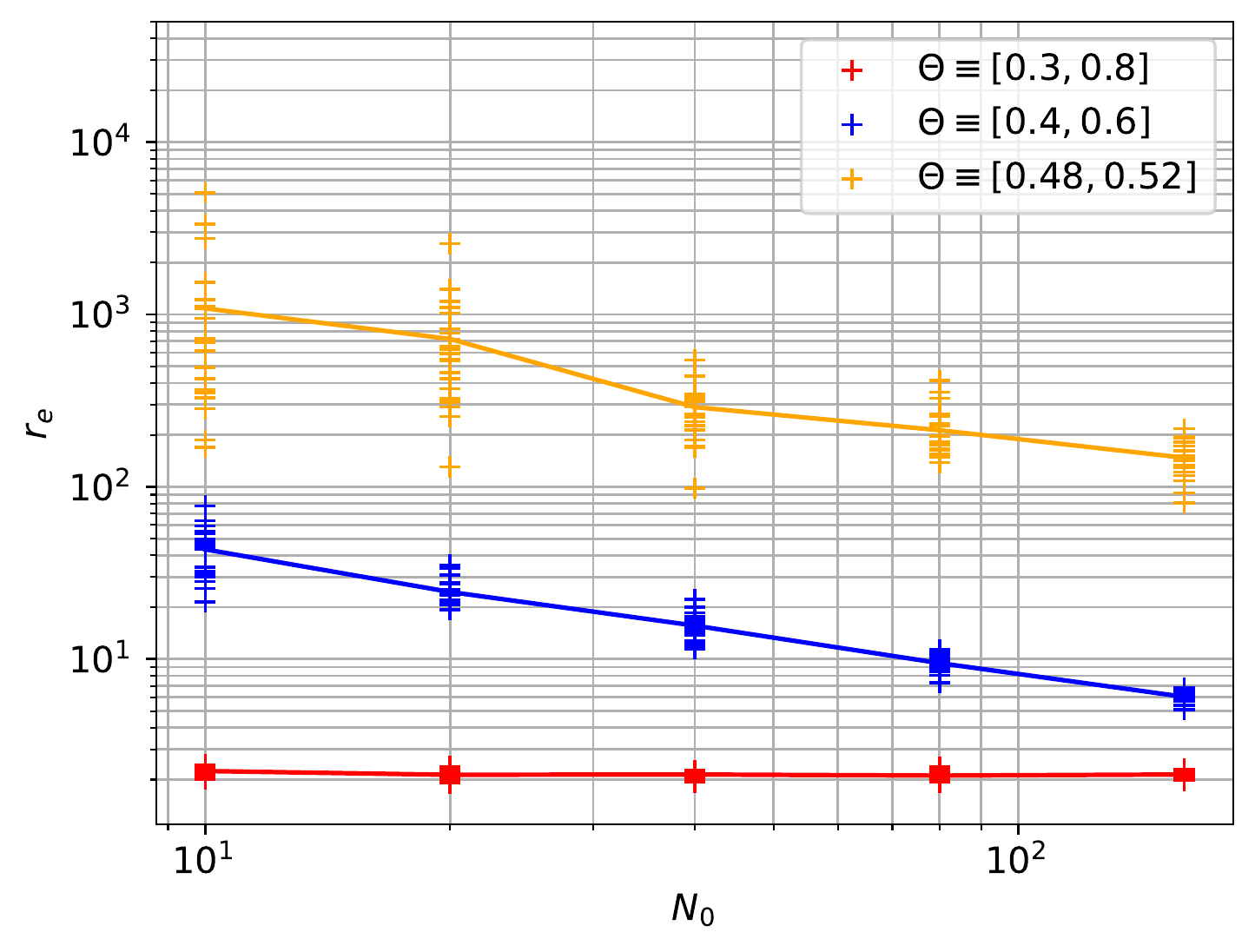}
 \caption{Behaviour of $r_e$ for different hierarchy shapes and interval sizes for Navier Stokes problem}
\label{fig:ratio_study_ns}
\end{figure}
\FloatBarrier

\section{Conclusions} \label{sec:conclusion}
The aim of this work was to tackle the problem of estimating summary statistics of a random \gls{qoi} which was an output of a complex differential model with random inputs. 
Namely, \gls{mlmc} estimators for the \gls{var}, the \gls{cvar}, the \gls{cdf} and the \gls{pdf} were proposed based on the concept of parametric expectations proposed in \cite{krumscheid2017multilevel}.
In this past theoretical work, a priori error estimates and complexity results were proposed for \gls{mlmc} estimators of parametric expectations, laying the foundation for the current work. 
However, the a priori estimates previously proposed were found to be highly conservative due to the presence of large leading constants and hence, practically unusable.

A completely practical modification was presented in this work by developing novel error estimators combined with an adaptive strategy for selecting the \gls{mlmc} hierarchy parameters and a \gls{cmlmc} framework for these summary statistics. 
The novel developments entail the following. 
Novel error estimators were presented for parametric expectations of the form in Eqs.~\eqref{eq:parest_form} and~\eqref{eq:phi_form} in Section~\ref{sec:novel}.
In Section~\ref{sec:adaptivity}, we have subsequently derived novel error estimators on the \gls{var} and the \gls{cvar} based on the novel error estimators on parametric expectations.
The error estimators presented in this work are an important improvement from the error bounds presented in \cite{krumscheid2017multilevel}; namely that they eliminate large leading constants that led to conservative error estimates while preserving decay rate properties important for the optimal performance of the \gls{mlmc} algorithm. 
Novel practical methods were presented for estimating the bias and statistical error components; the bias error is estimated using a \gls{kde}-smoothened density and the statistical error is estimated using bootstrapping and localised using rescaled local variances. 
Adaptive strategies were also presented for selecting the parameters of the \gls{mlmc} estimator for parametric expectations based on these error estimates. 
In particular, a \gls{cmlmc} algorithm was described to successively calibrate the \gls{mlmc} estimator on iteratively improved estimates of the errors. 
The above combination of error estimators, adaptive strategy and \gls{mlmc} algorithm were demonstrated on a simple problem whose analytical solution was known. 
It was shown that the error estimators provided practical error bounds on the true error and resulted in practically computable hierarchies for the test problems, ranging from a simple Poisson problem to the steady Navier-Stokes equations for flow over a cylinder, demonstrated in this study.

The numerical examples considered here indicate that the performance of the novel approach sensitively depends on the choice of interval over which to construct the parametric expectation.
It was shown that the choice of interval was important to the tightness of the novel error estimators for derived quantities such as the \gls{var} and the \gls{cvar}.
We plan to explore this and related improvements in future works.

\section*{Acknowledgments}
This project has received funding from the European Union's Horizon 2020 research and innovation programmed under grant agreement No. 800898. 

\begin{appendices}
\section{Spline Intepolator Property}\label{app:err_est}
We recall here basic results on error bounds for the use of cubic spline interpolation operators to approximate a function and its derivatives.
\begin{lemma}[Cubic spline interpolation operator]
\label{lemma:spline:prop}
Let $\interp{ \boldsymbol{f}(\thetab)} \in C^2(\Theta)$ be the cubic spline interpolation operator acting on the function values $\boldsymbol{f}(\thetab) \in \setR^n$ consisting of the function $f:\Theta \to \setR$ evaluated at the $n$ uniform nodes $\thetab = [\theta_1,...,\theta_n]^T$ such $[\theta_1 ,\theta_n] = \Theta$. 
The interpolation operator satisfies
\begin{enumerate}[label=\textbf{(S.\arabic*)}]
\item\label{spline:prop:deriv:error} for $m\in\{0,1,2\}$ and for any $f\in C^{4}(\Theta)$
\begin{equation*}
\norm{ f^{(m)} - \frac{d^m}{d\theta^m}\interp{ \boldsymbol{f}(\thetab)} }_{\linf(\Theta)} \leq C_1(m) \norm{ f^{(4)} }_{\linf(\Theta)} \left( \frac{|\Theta|}{n}\right)^{(4-m)}\;,
\end{equation*}
with $C_1(0) = 5/384$, $C_1(1) = 1/24$, and $C_1(2) = 3/8$,
\item\label{spline:prop:deriv:stable} for $m\in\{1,2\}$ and for any $\boldsymbol{x}\in\setR^n$
\begin{equation*}
\norm{\frac{d^m}{d\theta^m}\interp{\boldsymbol{x}}}_{\linf(\Theta)} \leq C_{2}(m) {(n-1)}^m \norm{ \interp{ \boldsymbol{x}}}_{\linf(\Theta)},
\end{equation*}
with $C_2(1) = 18/\left|\Theta\right|$ and $C_2(2) = 48/|\Theta|^2$,
\item \label{spline:prop:spline:stable} $\norm{ \interp{ \boldsymbol{x} }}_{\linf(\Theta)} \leq C_3 \norm{ \boldsymbol{x} }_{\slinf}$ for any $\boldsymbol{x}\in\setR^n$, with $C_3 = \frac{7(2\sqrt{7}+1)}{27}$,
\end{enumerate}
for all $n\in\setN$. 
\end{lemma}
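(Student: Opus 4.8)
The plan is to base all three estimates on the standard second-derivative (moment) representation of the cubic interpolant, complemented by its piecewise-Hermite form on each subinterval. Write $s=\interp{\boldsymbol{x}}$ (respectively $s=\interp{\boldsymbol{f}(\thetab)}$ in \ref{spline:prop:deriv:error}), let $h:=|\Theta|/(n-1)$ be the uniform mesh width, and set $M_i:=s''(\theta_i)$. The interpolation conditions are equivalent to a symmetric tridiagonal system $A\boldsymbol{M}=\boldsymbol{d}$, where the interior rows of $A$ have diagonal $\tfrac23$ and off-diagonals $\tfrac16$, and $\boldsymbol{d}$ collects the scaled second differences of the nodal data, with $|d_i|\le \tfrac{4}{h^2}\norm{s}_{\linf(\Theta)}$. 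Since $A$ is strictly diagonally dominant (uniformly in $n$ for the boundary closure in use), $\|A^{-1}\|_{\infty}\le 3$. On each subinterval $[\theta_i,\theta_{i+1}]$ the function $s$ is the cubic fixed in Hermite form by $\bigl(s(\theta_i),s(\theta_{i+1}),M_i,M_{i+1}\bigr)$, so that $s''$ is piecewise linear and $s'$ piecewise quadratic. These facts drive all three parts.

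For \ref{spline:prop:deriv:error}, I would apply the moment machinery to the interpolation error $e:=f-s$ rather than to $f$ itself. First I control $\|e''\|_{\linf(\Theta)}$: feeding $f$ into the second-difference operator produces a defect equal to a Peano-kernel average of $f^{(4)}$ over two adjacent subintervals, which combined with $\|A^{-1}\|_\infty\le 3$ and the piecewise-linearity of $e''$ gives $\|e''\|_{\linf(\Theta)}\le C_1(2)\,h^2\,\norm{f^{(4)}}_{\linf(\Theta)}$. Because $e$ vanishes at every node, Rolle's theorem supplies interior zeros of $e$ and of $e'$, and integrating the $e''$ bound between consecutive zeros yields the $m=1$ and $m=0$ estimates. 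The sharp constants $5/384,\,1/24,\,3/8$ are precisely the maximal $L^1$-norms of the relevant Peano kernels on the reference subinterval; securing them (rather than some larger admissible constant) is the classical Hall--Meyer optimal-constant computation, which I would reproduce on $[0,1]$ or simply invoke.

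For \ref{spline:prop:deriv:stable} and \ref{spline:prop:spline:stable} I run the same representation in reverse. For $m=2$, piecewise-linearity of $s''$ gives $\norm{s''}_{\linf(\Theta)}=\max_i|M_i|$, and applying $\|A^{-1}\|_\infty\le 3$ to the bound $|d_i|\le \tfrac{4}{h^2}\norm{s}_{\linf(\Theta)}$ yields $\max_i|M_i|\le C\,h^{-2}\norm{s}_{\linf(\Theta)}$ for a dimensionless constant $C$; writing $h^{-2}=(n-1)^2/|\Theta|^2$ recovers the form $C_2(2)(n-1)^2$. For $m=1$ I differentiate the local Hermite cubic into a piecewise quadratic whose extrema, located by one quadratic solve per subinterval, are governed by $h^{-1}$ times a first difference of the data together with $h\max_i|M_i|$, each already dominated by $h^{-1}\norm{s}_{\linf(\Theta)}$. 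For \ref{spline:prop:spline:stable} I insert the bound on $\max_i|M_i|$ into the local cubic and maximise $|s|$ over the reference subinterval $[0,1]$; the interior maximiser solves a quadratic, and it is exactly this step that introduces the surd and returns the closed form $C_3=\tfrac{7(2\sqrt{7}+1)}{27}$.

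The main obstacle is sharpness, not scaling. The correct powers of $h$ and the uniform bound $\|A^{-1}\|_\infty\le 3$ follow routinely once the moment system is set up; the work lies in pinning down the \emph{optimal} constants. In \ref{spline:prop:deriv:error} this is the delicate Peano-kernel maximisation behind the Hall--Meyer estimates, and in \ref{spline:prop:spline:stable} it is the exact maximisation of the reference cubic that produces $\sqrt{7}$. I would also watch two bookkeeping points: the boundary closure must be the one for which the stated constants and the diagonal-dominance bound genuinely hold (the boundary rows can otherwise enlarge $\|A^{-1}\|_\infty$), and the mesh width should be read as $h=|\Theta|/(n-1)$, so that the $|\Theta|/n$ written in \ref{spline:prop:deriv:error} is to be understood up to the usual $n\sim n-1$ identification.
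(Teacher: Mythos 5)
Your overall plan is sound, but it follows a genuinely different route from the paper for the one part the paper actually proves. The paper disposes of \ref{spline:prop:deriv:error} and \ref{spline:prop:spline:stable} by citation (Hall--Meyer for the optimal constants $5/384$, $1/24$, $3/8$, and de Boor/Quarteroni et al.\ for the stability constant $C_3$), and proves only \ref{spline:prop:deriv:stable} --- and it does so \emph{locally}: on each subinterval the spline is a single cubic, so an affine change of variables to $[-1,1]$ followed by a Markov-type inequality for degree-$3$ polynomials gives $\norm{P_j^{(1)}}_{\linf([\theta_j,\theta_{j+1}])}\le \frac{18}{\delta}\norm{P_j}_{\linf([\theta_j,\theta_{j+1}])}$ with $\delta=|\Theta|/(n-1)$, and analogously for the second derivative. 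This argument never touches the interpolation conditions or the tridiagonal moment system, so it is automatically independent of the boundary closure --- which neatly removes the caveat you (rightly) raise about boundary rows spoiling $\|A^{-1}\|_\infty\le 3$. Your global moment-system derivation of \ref{spline:prop:deriv:stable} does work (for natural or suitably clamped closures the interior rows give $\max_i|M_i|\le 12\,h^{-2}\norm{s}_{\linf(\Theta)}$, and substituting into the Hermite form even lands near the stated $18/h$ and $48/h^2$), but it buys you nothing over the local Markov argument and costs you the boundary-condition dependence. Your reading of $|\Theta|/n$ versus $|\Theta|/(n-1)$ in \ref{spline:prop:deriv:error} is a fair catch; the paper silently makes the same identification.

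The one place your sketch is thinner than it claims is \ref{spline:prop:spline:stable}. A triangle-inequality insertion of $\max_i|M_i|\le 12\,h^{-2}\norm{\boldsymbol{x}}_{\slinf}$ into the local cubic gives a bound of the form $\norm{\interp{\boldsymbol{x}}}_{\linf(\Theta)}\le\bigl(1+\tfrac{1}{3}\cdot 12\bigr)\norm{\boldsymbol{x}}_{\slinf}=5\norm{\boldsymbol{x}}_{\slinf}$, which is valid but roughly three times the stated $C_3=\tfrac{7(2\sqrt{7}+1)}{27}\approx 1.63$; the sharp constant requires tracking the geometric decay of the entries of $A^{-1}$ (not just its row-sum norm) before the maximisation over the reference subinterval, and is not produced by ``one quadratic solve'' on top of a worst-case moment bound. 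Since the lemma asserts that specific value of $C_3$, you would in the end have to invoke the literature for it --- exactly as the paper does.
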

\begin{proof}
The fact that $S := \interp{ \boldsymbol{f}(\thetab)}\in C^2(\Theta)$ as well as the properties \ref{spline:prop:deriv:error} and \ref{spline:prop:spline:stable} are well known results in approximation theory \cite{hall1976optimal,de1978practical, quarteroni2010numerical}.
To prove property \ref{spline:prop:deriv:stable}, we first note that
\begin{equation*}
\norm{S}_{\linf(\Theta)} = \max_{1\le j \le n-1}\norm{ P_j}_{\linf\left([\theta_j,\theta_{j+1}]\right)}=\max_{1\le j \le n-1}\norm{ P_j\circ g_j}_{\linf\left([-1,1]\right)}\;, 
\end{equation*}
where $g_j(t) = \frac{\theta_j+\theta_{j+1}}{2} + \frac{\delta}{2} t$ with $\delta := \theta_{j+1}-\theta_j = \frac{|\Theta|}{n-1}$. 
Here, $P_j$ denotes the cubic spline polynomial on the interval $[\theta_j, \theta_{j+1}]$. 
It then follows from the Markov type inequality result \cite{Govil1999} that
\begin{equation*}
\norm{S}_{\linf(\Theta)} \ge \frac{1}{9}\max_{1\le j \le n-1} \norm{\frac{d}{dt} P_j(g_j)}_{\linf\left([-1,1]\right)} = \frac{\delta}{18}\max_{1\le j \le n-1} \norm{ P_j^{(1)}}_{\linf\left([\theta_j,\theta_{j+1}]\right)}\;, 
\end{equation*}
which shows that
\begin{equation*}
\norm{ S^{(1)} }_{\linf(\Theta)}\le \frac{18}{\delta}\norm{S}_{\linf(\Theta)} = \frac{18(n-1)}{|\Theta|}\norm{S}_{\linf(\Theta)}\;.
\end{equation*}
An analogous analysis for the second derivative yields $\norm{ S^{(2)}}_{\linf(\Theta)}\le \frac{48}{\delta^2}\norm{ S}_{\linf(\Theta)}$, which completes the proof.
We recall that the constants $C_1(0)$ and $C_1(1)$ in \ref{spline:prop:deriv:error} above are known to be optimal \cite{hall1976optimal}.
\end{proof}
\end{appendices}

\bibliography{References}

\bibliographystyle{amsplain} 

\end{document}